\def\calP{\mathcal{P}}
\def\calT{\mathcal{T}}
\def\calV{\mathcal{V}}
\def\calM{\mathcal{M}}
\def\st{$s$-$t$}
\newtheorem{observation}{Observation}
\begin{document}
%\begin{textblock}{5}(1,0.5)
%\noindent\Large {\bf APPENDIX}
%\end{textblock}

\title{A Divide-and-Conquer Algorithm for Two-Point $L_1$ Shortest Path Queries in Polygonal Domains
%\thanks{This research was supported in part by NSF under Grant CCF-1317143.}
}
\author{
Haitao Wang
}
\institute{
Department of Computer Science\\
Utah State University, Logan, UT 84322, USA\\
\email{haitao.wang@usu.edu}
}

\maketitle

\pagenumbering{arabic}
\setcounter{page}{1}

\vspace*{-0.1in}
\begin{abstract}
Let $\calP$ be a polygonal domain of $h$ holes and $n$ vertices. We study the problem of constructing a data structure that can compute a shortest path between $s$ and $t$ in $\calP$ under the $L_1$ metric for any two query points $s$ and $t$. To do so, a standard approach is to first find a set of $n_s$ ``gateways'' for $s$ and a set of $n_t$ ``gateways'' for $t$ such that there exist a shortest \st\ path containing a gateway of $s$ and a gateway of $t$, and then compute a shortest \st\ path using these gateways. Previous algorithms all take quadratic $O(n_s\cdot n_t)$ time to solve this problem. In this paper, we propose a divide-and-conquer technique that solves the problem in $O(n_s + n_t \log n_s)$ time. As a consequence,
we construct a data structure of $O(n+(h^2\log^3 h/\log\log h))$ size in $O(n+(h^2\log^4 h/\log\log h))$ time such that each query can be answered in $O(\log n)$ time.
%i.e., the length of the shortest path can be computed in $O(\log n)$ time and an actual shortest path can be output in additional linear time in the number of edges of the path.
%The previously best result uses $O(n+(h^2\cdot \log h\cdot 4^{\sqrt{\log h}}))$ space and $O(n+(h^2\cdot \log^2 h\cdot 4^{\sqrt{\log h}}))$ preprocessing time in order to achieve $O(\log n)$ query time.
%The preprocessing complexities of our result improve the previous work by a super polylogarithmic factor.
\end{abstract}

%\newpage

\section{Introduction}
\label{sec:intro}

Let $\calP$ be a polygonal domain of $h$ holes with a total of $n$
vertices, i.e., there is an outer simple polygon containing $h$
disjoint holes and each hole itself is a simple polygon. If $h=0$, then $\calP$ becomes a simple polygon. For any two points $s$ and $t$, an {\em $L_1$ shortest path} from $s$ to $t$ in $\calP$ is a path connecting $s$ and $t$ with the minimum length under the $L_1$ metric. Note that the edges of the path can have arbitrary slopes but their lengths are measured by the $L_1$ metric.

We consider the two-point $L_1$ shortest path query problem: Construct
a data structure for $\calP$ that can compute an $L_1$ shortest path
in $\calP$ for any two query points $s$ and $t$. To do so, a standard approach is to first find a set of $n_s$ ``gateways'' for $s$ and a set of $n_t$ ``gateways'' for $t$ such that there exist a shortest \st\ path containing a gateway of $s$ and a gateway of $n_t$, and then compute a shortest \st\ path using these gateways. Previous algorithms~\cite{ref:ChenTw16,ref:ChenSh00}  all take quadratic $O(n_s\cdot n_t)$ time to solve this problem. In this paper, we propose a divide-and-conquer technique that solves the problem in $O(n_s + n_t \log n_s)$ time.

As a consequence, we construct a data structure of $O(n+(h^2\log^3 h/\log\log h))$ size in $O(n+(h^2\log^4 h/\log\log h))$ time such that each query can be answered in $O(\log n)$ time\footnote{Throughout the paper, unless otherwise stated, when we
say that the query time of a data structure is $O(T)$, we mean that
the shortest path length can be computed in $O(T)$ time and an actual
shortest path can be output in additional linear time in the number of edges of
the path.}.
Previously, Chen et al.~\cite{ref:ChenSh00} built a data structure of $O(n^2\log n)$ size
in $O(n^2\log^2 n)$ time that can answer each query in $O(\log^2 n)$
time. Later Chen et al.~\cite{ref:ChenTw16} achieved $O(\log n)$
time queries by building a data structure of $O(n+h^2\cdot \log h\cdot
4^{\sqrt{\log h}})$ space in $O(n+h^2\cdot \log^2 h\cdot 4^{\sqrt{\log
h}})$ time.
%In this paper, we build a new data structure of
%$O(n+h^2\log^3 h/\log\log h)$ space in $O(n+h^2\log^4 h/\log\log h)$
%time and each query can still be answered in $O(\log n)$ time.
The preprocessing complexities of our result improve the
previous work~\cite{ref:ChenTw16} by a super polylogarithmic factor.
More importantly, our divide-and-conquer technique may be interesting in its own right.

\subsection{Related Work}

Better results exist for certain special cases of the problem. If $\calP$ is a simple polygon, then a shortest path in $\calP$ with minimum Euclidean length is also an $L_1$ shortest path~\cite{ref:HershbergerCo94}, and thus by using the data structure in~\cite{ref:GuibasOp89,ref:HershbergerA91} for the Euclidean metric, one can build a data structure in $O(n)$ time and space that can answer each query in $O(\log n)$ time; recently Bae and Wang~\cite{ref:BaeL118} proposed a simpler approach that can achieve the same performance.
If $\calP$ and all holes of it are rectangles whose edges are all axis-parallel, then ElGindy and Mitra~\cite{ref:ElGindyOr94} constructed a data structure of $O(n^2)$ size in $O(n^2)$ time that supports $O(\log n)$ time queries.

Better results are also known for {\em one-point queries} in the $L_1$ metric~\cite{ref:ChenCo19,ref:ClarksonRe87,ref:ClarksonRe88,ref:InkuluPl09,ref:MitchellAn89,ref:MitchellL192}, i.e., $s$ is fixed in the input and only $t$ is a query point. In particular, Mitchell~\cite{ref:MitchellAn89,ref:MitchellL192} built a data structure of $O(n)$ size in $O(n\log n)$ time that can answer each such query in $O(\log n)$ time. Later Chen and Wang~\cite{ref:ChenCo19} reduced the preprocessing time to $O(n+h\log h)$ if $\calP$ is already triangulated (which can be done in $O(n\log n)$ or $O(n+h\log^{1+\epsilon}h)$ time for any $\epsilon>0$~\cite{ref:Bar-YehudaTr94,ref:ChazelleTr91}), while the query time is still $O(\log n)$.
%In addition, it is known that $\Omega(n\log n)$ is a lower bound on the running time for finding a single $L_1$ shortest path between two points $s$ and $t$ in $\calP$~\cite{ref:deRezendeRe89}

The Euclidean counterparts have also been studied. For one-point
queries, Hershberger and Suri~\cite{ref:HershbergerAn99} built a
shortest path map of $O(n)$ size with $O(\log n)$ query time and the
map can be built in $O(n\log n)$ time and space. For two-point
queries, Chiang and Mitchell~\cite{ref:ChiangTw99} built a data
structure of $O(n^{11})$ size that can support $O(\log n)$ time
queries, and they also built a data structure of $O(n+h^5)$ size with
$O(h\log n)$ query time. Other results with tradeoff between
preprocessing and query time were also proposed in~\cite{ref:ChiangTw99}.
Also, Chen et al.~\cite{ref:ChenOn01} showed that with $O(n^2)$ space one can answer each two-point query in $O(\min\{|Q_s|,|Q_t|\}\cdot \log n)$ time, where $Q_s$ (resp., $Q_t$) is the set of vertices of $\calP$ visible to $s$ (resp., $t$). Guo et al.~\cite{ref:GuoSh08} gave a data structure of $O(n^2)$ size that can support $O(h\log n)$ time two-point queries.

\subsection{Our Techniques}
\label{sec:approach}

We follow a similar scheme as in~\cite{ref:ChenTw16,ref:ChenSh00}, using a ``path-preserving'' graph $G$ proposed by Clarkson et al.~\cite{ref:ClarksonRe87,ref:ClarksonRe88} to determine a set $V_g(q)$ of $O(\log n)$ points (called ``gateways'') for each query point $q\in \{s,t\}$, such that there exists an $L_1$ shortest \st\ path that contains a gateway in $V_g(s)$ and a gateway in $V_g(t)$. To find a shortest \st\ path, the main difficulty is to solve the following sub-problem. Let $\pi(p,q)$ denote a shortest path between two points $p$ and $q$ in $\calP$, and let $d(p,q)$ denote the length of the path.
Suppose that the gateways of $s$ (resp., $t$) are formed as a cycle around $s$ (resp., $t$), e.g., see Fig.~\ref{fig:idea}, such that there is a shortest \st\ path containing a gateway of $s$ and a gateway of $t$. The point $s$ is visible to each gateway $p$ in $V_g(s)$, and thus $d(s,p)$ can be obtained in $O(1)$ time for any $p\in V_g(s)$. The same applies to $t$. Also suppose in the preprocessing we have computed $d(p,q)$ for any  $p\in V_g(s)$ and any $q\in V_g(t)$. The goal of the problem is to find $p\in V_g(s)$ and $q\in V_g(t)$ such that the value $d(s,p)+d(p,q)+d(q,t)$ is minimized, so that a shortest \st\ path contains both $p$ and $q$.

To solve the sub-problem, a straightforward method is to try all pairs
of $p$ and $q$ with $p\in V_g(s)$ and  $q\in V_g(t)$, which is the
approach used in both algorithms in~\cite{ref:ChenTw16,ref:ChenSh00}.
This takes $O(n_s\cdot n_t)$ time, where $n_s=|V_g(s)|$ and $n_t=|V_g(t)|$. In~\cite{ref:ChenSh00}, both $n_s$ and $n_t$ are bounded by $O(\log n)$, which results in an $O(\log^2 n)$ time query algorithm. In~\cite{ref:ChenTw16}, both $n_s$ and $n_t$ are reduced to $O(\sqrt{\log n})$, and thus the query time becomes $O(\log n)$, by using a larger ``enhanced graph'' $G_E$ (than the original graph $G$). More specifically, the size of $G$ is $O(n\log n)$ while the size of $G_E$ is $O(n\sqrt{\log n}2^{\sqrt{\log n}})$ (which is further reduced to $O(h\sqrt{\log h}2^{\sqrt{\log h}})$ by other techniques~\cite{ref:ChenTw16}).

\begin{figure}[t]
\begin{minipage}[t]{0.49\linewidth}
\begin{center}
\includegraphics[totalheight=1.2in]{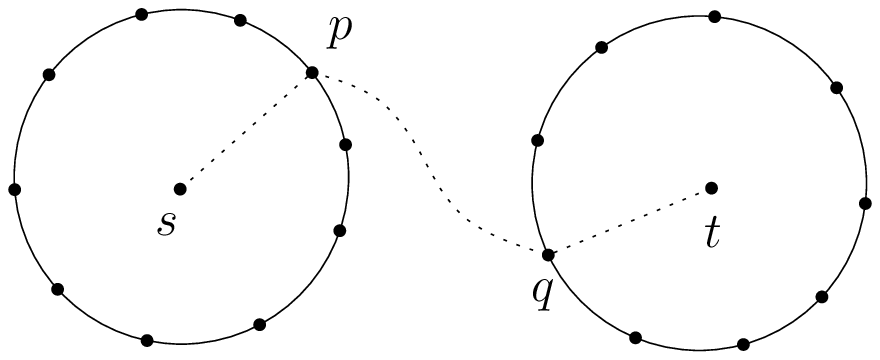}
\caption{\footnotesize
Illustrating the gateways of $s$ and $t$ and a shortest \st\ path.}
\label{fig:idea}
\end{center}
\end{minipage}
\hspace{0.05in}
\begin{minipage}[t]{0.49\linewidth}
\begin{center}
\includegraphics[totalheight=1.4in]{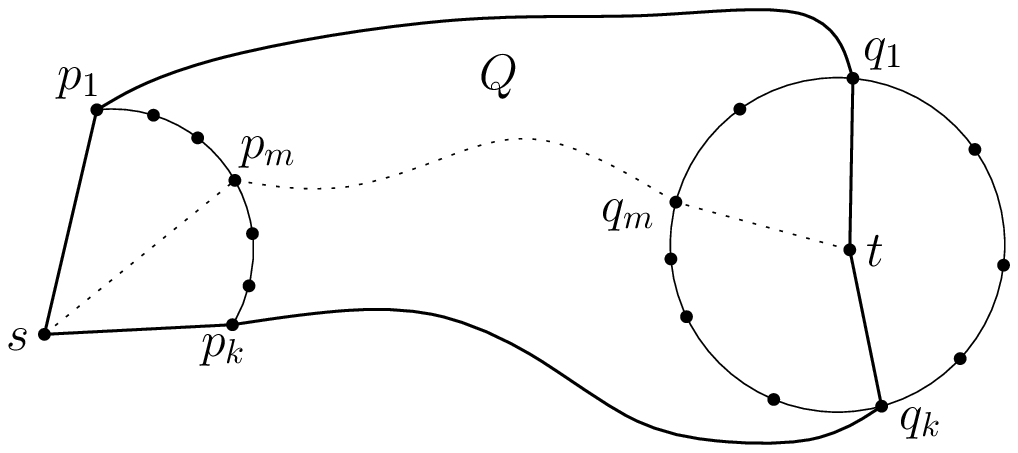}
\caption{\footnotesize
Illustrating our divide-and-conquer scheme.}
\label{fig:idea1}
\end{center}
\end{minipage}
\vspace*{-0.15in}
\end{figure}

Our main contribution is to develop an $O(n_s+n_t\log n_s)$ time
algorithm for solving the above sub-problem. To this end, we explore
the geometric structures of the problem and propose a
divide-and-conquer technique, which can be roughly described as
follows. For simplicity, suppose we only consider one piece of the
gateway cycle of $s$ (e.g., those in the first quadrant of $s$)
and order the gateways of $s$ on that piece by $p_1, p_2, \ldots, p_k$
(e.g., see Fig.~\ref{fig:idea1}). Then, in a straightforward way,
for $p_1$, we find a gateway, denoted by $q_1$, of $t$ that minimizes the value
$d(p_1,q)+d(q,t)$ for all $q\in V_g(t)$. Similarly, we find such a
gateway $q_k$ of $t$ for $p_k$. Let $P_1$ be the \st\ path
$\overline{sp_1}\cup \pi(p_1,q_1)\cup \overline{q_1t}$. Similarly, let $P_2$ be
the path $\overline{sp_k}\cup \pi(p_k,q_k)\cup \overline{q_kt}$. In the
``ideal'' situation, the two paths do not intersect except at $s$ and $t$, and they together
form a cycle enclosing a plane region $Q$ that contains all gateways
$p_1,p_2,\ldots,p_k$ (e.g., see Fig.~\ref{fig:idea1}), and let
$V'_g(t)$ be the gateways of $t$ that are also contained in $Q$. The
next step is to process the median gateway $p_{m}$ of $s$ with
$m=\frac{k}{2}$. The key observation is that we only need to consider
the gateways in $V'_g(t)$ instead of all the gateways of $t$, i.e., if a
shortest \st\ path contains $p_{m}$, then there must be a shortest
\st\ path containing $p_{m}$ and a gateway in $V'_g(t)$. In this way,
we only need to find the point, denoted by $q_{m}$, that minimizes the value
$d(p_{m},q)+d(q,t)$ for all $q\in V'_g(t)$. Further, in the ``ideal''
situation, the path $P_m=\overline{sp_m}\cup \pi(p_m,q_m)\cup \overline{q_mt}$
is inside the region $Q$ and divides $Q$ into two sub-regions  (e.g., see Fig.~\ref{fig:idea1}). We then proceed on the two sub-regions recursively.

\begin{figure}[t]
\begin{minipage}[t]{0.49\linewidth}
\begin{center}
\includegraphics[totalheight=1.2in]{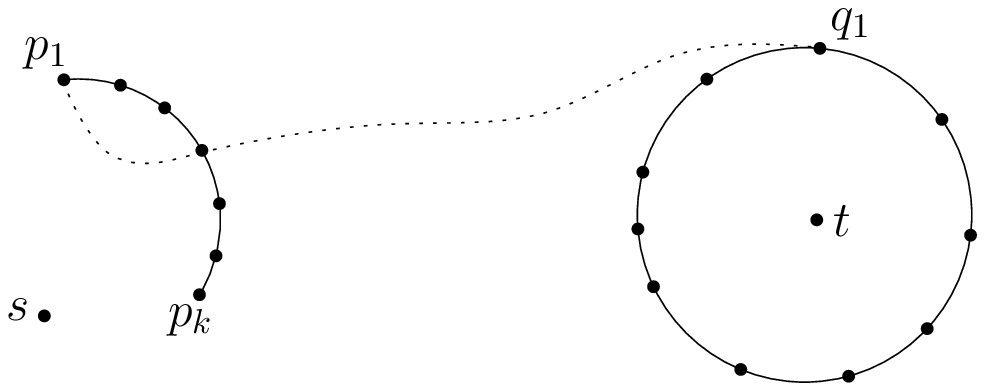}
\caption{\footnotesize
Illustrating a non-ideal situation: The shortest path from $p_1$ to $q_1$ crosses the gateway cycle of $s$.}
\label{fig:dif1}
\end{center}
\end{minipage}
\hspace{0.05in}
\begin{minipage}[t]{0.49\linewidth}
\begin{center}
\includegraphics[totalheight=1.4in]{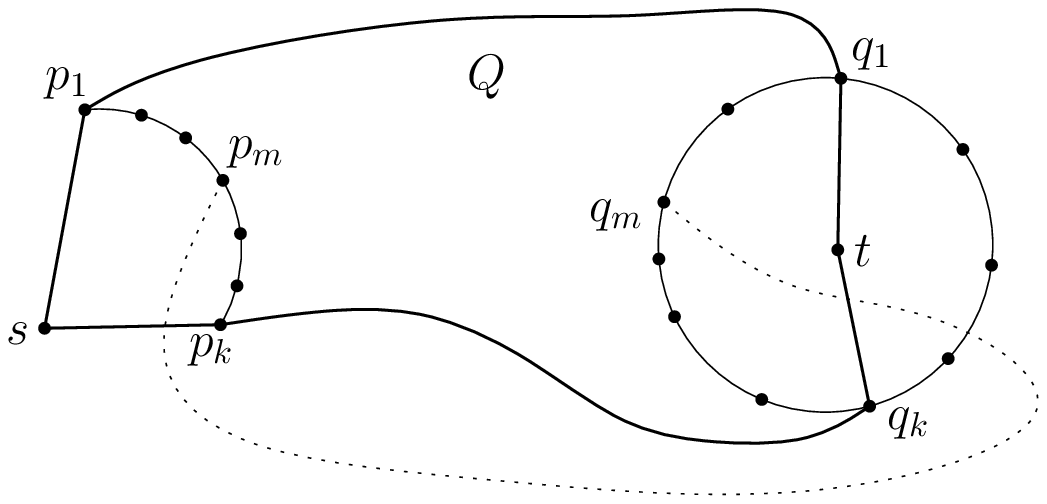}
\caption{\footnotesize
Illustrating a non-ideal situation: The shortest path from $p_m$ to $q_m$ is not inside the region $Q$.}
\label{fig:dif2}
\end{center}
\end{minipage}
\vspace*{-0.15in}
\end{figure}

The above exhibits our algorithm in an ``ideal'' situation. Our major
effort is to deal with the ``non-ideal'' situations. For examples,
what if the path $P_1$ divides the cycle piece of $s$ into two parts (e.g., see Fig.~\ref{fig:dif1}), what if the path $P_m$ is not in the region $Q$ (e.g., see Fig.~\ref{fig:dif2}),  what if $q_1=q_k$, etc. %Our main effort is to circumvent these difficulties.

Note that our divide-and-conquer scheme may be somewhat similar to
that for two-vertex shortest path queries in planar graphs,
e.g.,~\cite{ref:ChenSh00Xu,ref:DjidjevEf96}. However, a main difference
is that in the planar graph case the query vertices are both from the
input graph and the gateways are already known for each vertex (more specifically, the gateways in the planar graph case are the ``border vertices'' of the subgraphs in the decomposition of the input graph by separators), and thus one can compute certain information for the gateways in the preprocessing (many other techniques for shortest path queries in planar graphs, e.g.,~\cite{ref:FakcharoenpholPl06,ref:GawrychowskiBe18,ref:MozesEx12}, also rely on this), while in our problem the gateways are only determined
``online'' during queries because both query points can be
anywhere in $\calP$.
%(and thus we cannot perform any preprocessing work on gateways).
This causes us to develop different techniques to tackle the problem
(especially to resolve the non-ideal situations).
%Therefore, the divide-and-conquer idea is not new, but the difficulty is how to make it work, which is our main contribution.
To the best of our knowledge, this is the first time such a divide-and-conquer method is applied to geometric setting for shortest path queries using gateways

%It might be tempting to see whether the Monge matrix searching techniques~\cite{ref:AggarwalGe87,ref:KlaweAn90} can be applied so that an extra logarithmic factor can be further reduced. However, due to those non-ideal situations such as those illustrated in Fig.~\ref{fig:dif1} and Fig.~\ref{fig:dif2}, it is not clear to us whether it is possible to do so.

With the above $O(n_s+n_t\log n_s)$ time algorithm, if both $n_s$ and $n_t$ are bounded by $O(\log n)$, we can only obtain an $O(\log n \log\log n)$ time query algorithm. To reduce the time to $O(\log n)$, we borrow some idea from the previous work~\cite{ref:ChenTw16} to construct a larger graph $G_1$, so that we can guarantee $n_s=O(\log n)$ and $n_t=O(\log n/\log\log n)$, which leads to an $O(\log n)$ time query algorithm. The size of $G_1$ is only $O(n\log^2 n/\log\log n)$, which is slightly larger than the original $O(n\log n)$-sized graph $G$~\cite{ref:ChenSh00,ref:ClarksonRe87,ref:ClarksonRe88} and much smaller than the $O(n\sqrt{\log n}2^{\sqrt{\log n}})$-sized enhanced graph $G_E$ in~\cite{ref:ChenTw16}. Further, by the techniques similar to those used in~\cite{ref:ChenTw16}, we can reduce the graph size to $O(h\log^2 h/\log\log h)$.

We stress that although the overall preprocessing of our data structure only improves the previous work~\cite{ref:ChenTw16} by roughly a factor of $4^{\sqrt{\log h}}$ (super-polylogarithmic but sub-polynomial), our contribution is more on the $O(n_s+n_t\log n_s)$ time divide-and-conquer query algorithm, which is nearly a linear factor improvement over the previous $O(n_s\cdot n_t)$ time algorithms~\cite{ref:ChenTw16,ref:ChenSh00} and is the first-known sub-quadratic time algorithm in the number of gateways of the query points.

%The graph $G$ has $O(n\log n)$ vertices and edges. In order to reduce the query time to $O(\log n)$, based on the graph $G$, Chen et al.~\cite{ref:ChenTw16} built a larger graph, called {\em enhanced graph} $G_E$, of $O(n\cdot 2^{\sqrt{\log n}})$ vertices and edges, so that the size of the gateway set of each query point is $O(\sqrt{\log n})$. Essentially, once the gateway sets of $s$ and $t$ are computed, the query time of the algorithms in both \cite{ref:ChenTw16} and \cite{ref:ChenSh00} is $O(n_s\cdot n_t)$, where $n_s$ is the size of the gateway set of $s$ and $n_t$ is the size of the gateway set of $t$. Our contribution is to design a new algorithm that runs in $O(n_s\log n_t +n_t)$. With the new query algorithm, instead of building such a large graph as $G_E$, we only need a much smaller graph of size $O(n\log^2n/\log\log n)$. With the new graph, we use a gateway set for $s$ of $O(\log n/\log\log n)$ size and a gateway set for $t$ of $O(\log n)$ size, and the query time is reduced to $O(\log n)$. Further, by using the same technique as in~\cite{ref:ChenTw16}, we reduce the preprocessing time and space so that they only depend on $h$, in addition to $O(n)$.

The rest of the paper is organized as follows. In Section~\ref{sec:pre}, we define notation and review some previous work. In Section~\ref{sec:subproblem}, we solve the sub-problem discussed above. In Section~\ref{sec:overall}, we present our overall result.
For ease of exposition, we make a general position assumption that no two vertices of $\calP$ including $s$ and $t$ have the same $x$- or $y$-coordinate. Unless otherwise stated, ``length'' always refers to $L_1$ length and ``shortest paths'' always refers to $L_1$ shortest paths.

\section{Preliminaries}
\label{sec:pre}

We introduce some notation and concepts, some of which are borrowed from the previous work~\cite{ref:ChenTw16,ref:ChenSh00,ref:ClarksonRe87,ref:ClarksonRe88}.

Two points $p$ and $q$ are {\em visible} to each other if the line
segment $\overline{pq}$ is in $\calP$. For a point $p$ and a vertical
line segment $l$ in $\calP$, if there is a point $q\in l$ such that
$\overline{pq}$ is horizontal and is in $\calP$, then we say that $p$
is {\em horizontally visible} to $l$ and we call $q$ the {\em horizontal
projection} of $p$ on $l$.

For any point $p$ in the plane, we use $x(p)$ and $y(p)$ to denote its
$x$- and $y$-coordinates, respectively. In the paper, when we talk
about a relative position (e.g., left, right, above, below,
northeast) of two geometric objects (e.g., lines, points),
unless there is a ``strictly'', it always includes the tie case. For
example, if we say that a point $p$ is to the northeast of another
point $q$, then we mean $x(p)\geq x(q)$ and $y(p)\geq y(q)$.
Similarly, if we say that a point $p$ is to the left of a vertical
line $l$, then either $p$ is strictly to the left of $l$ or $p$ is on
$l$.

For a path $\pi$ in $\calP$, we use $|\pi|$ to denote its length.
For two points $p$ and $q$ in $\calP$, we use $\pi(p,q)$ to denote
a shortest path from $p$ to $q$ and define $d(p,q)=|\pi(p,q)|$.
For a segment $\overline{pq}$, we use $|\overline{pq}|$ to denote the
length of $\overline{pq}$.
A path in $\calP$ is {\em $x$-monotone} if its intersection with any
vertical line is either empty or connected. The {\em $y$-monotone} is
defined similarly. If a path is both $x$-monotone and $y$-monotone,
then it is {\em $xy$-monotone}. Note that an $xy$-monotone path in
$\calP$ is a shortest path. Also, if there is an $xy$-monotone path
between $p$ and $q$ in $\calP$, then
$d(p,q)=|\overline{pq}|$ (although $p$ may not be visible to $q$).

Let $\calV$ denote the set of all vertices of $\calP$. To
differentiate from the vertices and edges in some graphs we define later, we
often refer to the vertices of $\calP$ as {\em polygon vertices} and
the edges of $\calP$ as {\em polygon edges}.
%{\em obstacle vertices} (the outer polygon and holes of $\calP$ are
%considered as {\em obstacles}).
Let $\partial \calP$ denote the boundary of $\calP$ (including the
boundaries of all the holes). For any point $p\in \calP$, if we shoot a
ray rightwards from $p$, let $p^r$ denote the first point of
$\partial\calP$ hit by the ray and call it the {\em rightward
projection} of $p$ on $\partial\calP$. Similarly, we can define the
leftward, upward, downward projections of $p$ and denote them by
$p^l$, $p^u$, $p^d$, respectively.

\paragraph{A ``path-preserving'' graph $G$.}
Clarkson et al.~\cite{ref:ClarksonRe87} proposed a
graph $G$ for computing $L_1$ shortest paths in $\calP$.
%As discussed in Section~\ref{sec:intro},
%Chen et al.~\cite{ref:ChenSh00} used the graph to answer two-point shortest path queries.
%Chen et al.~\cite{ref:ChenTw16} used an ``enhanced'' version of the
%graph to reduce the query time to $O(\log n)$.
We sketch the graph $G$ below, since our algorithm will use a modified version of it.

To define $G$, there are two types of {\em Steiner points}.
For each vertex of $\calP$, its four projections
on $\partial\calP$ are {\em type-1} Steiner points. Hence, there are
$O(n)$ Steiner points on $\partial\calP$. The {\em type-2 Steiner}
points are defined on {\em cut-lines}, which can be organized into a binary tree $\calT$,
called {\em the cut-line tree}. Each node $u$ of $\calT$ corresponds to a set
$\calV(u)$ of vertices of $\calP$ and stores a cut-line $l(u)$ that is a vertical
line through the median $x$-coordinate of all vertices of $\calV(u)$.
If $u$ is the root, then $\calV(u)=\calV$. In general, for the left (resp., right) child $v$
of $u$, $\calV(v)$ consists of all vertices of $\calV(u)$ to the left (resp.,
right) of $l(u)$. For each node $u\in \calT$ and each
vertex $p$ of $\calV(u)$, if $p$ is horizontally visible to $l(u)$, then
the horizontal projection of $p$ on $l(u)$ is a type-2 Steiner point.
Therefore, $l(u)$ has at most $|\calV(u)|$ Steiner points.
Since the total size $|\calV(u)|$ for all $u$ in the same level of $\calT$ is $O(n)$ and the height of
$\calT$ is $O(\log n)$, the total number of type-2 Steiner
points is $O(n\log n)$.

We point out a subtle issue here. If $|\calV(u)|=1$, then $l(u)$ is through the only vertex of $\calV(u)$. Otherwise, if $|\calV(u)|$ is odd, then we slightly change $l(u)$ so that it does not contain a vertex of $\calV(u)$ but still partitions $\calV(u)$ roughly evenly.
In this way, for each polygon vertex $p$, there is a cut-line at the leaf of $\calT$ that contains $p$ and thus $p$ itself is a type-2 Steiner point on the cut-line. Hence, all polygon vertices of $\calV$ are also type-2 Steiner points.

The graph $G$ is thus defined as follows. First of all, the vertex set of $G$ consists of all Steiner points (again polygon vertices are also Steiner points). Hence, it has $O(n\log n)$ nodes.
For the edges of $G$, for each vertex $p$ of $\calP$, if $q$ is a
Steiner point defined by $p$, then $G$ has an edge $\overline{pq}$.
For each polygon edge $e$ of $\calP$, $e$ may contain multiple
Steiner points, and $G$ has an edge connecting each adjacent pair of
them. Further, for each cut-line $l$ and for any two adjacent
Steiner points on $l$, if they are visible to each other, then $G$ has an edge connecting them.

Clearly, $G$ has $O(n\log n)$ nodes and edges.
It was shown in \cite{ref:ClarksonRe87,ref:ClarksonRe88} that for any two polygon vertices of $\calP$, the shortest path between them in the graph $G$ is also a shortest path in $\calP$ (and thus the graph ``preserves'' shortest paths of the polygon vertices of $\calP$).

\paragraph{Gateways.}
In order to answer two-point shortest path queries, Chen et al.~\cite{ref:ChenSh00} ``insert'' the two query points $s$ and $t$ into $G$ by connecting them to some ``gateways''. Intuitively, the gateways would be the vertices of $G$ that connect to $s$ and $t$ respectively if $s$ and $t$ were vertices of $\calP$, and thus they control shortest paths from $s$ to $t$. Specifically, let $V_g(s,G)$ denote the set of gateways for $s$, which has two subsets $V^1_g(s,G)$ and $V^2_g(s,G)$ of sizes $O(1)$ and $O(\log n)$, respectively. We first define
$V^1_g(s,G)$. For each projection point $q$ of $s$ on $\partial\calP$,
if $v_1$ and $v_2$ are the two Steiner points adjacent to $q$ on the edge of $\calP$ containing $q$, then $v_1$ and $v_2$ are in $V^1_g(s,G)$. Since $s$ has four projections on $\partial\calP$, $V^1_g(s,G)$ has at most eight points.
For the set $V^2_g(s,G)$, it is defined recursively on the cut-line
tree $\calT$. Let $u$ be the root of $\calT$. If $s$ is horizontally
visible to the cut-line $l(u)$, then $l(u)$ is called a {\em
projection cut-line} of $s$ and the Steiner point on $l(u)$
immediately above (resp.,  below) the horizontal projection $s'$ of $s$ on
$l(u)$ is a gateway in $V^2_g(s,G)$ if it is visible to $s'$.
Regardless of whether $s$ is horizontally visible to $l(u)$ or not, if
$s$ is to the left (resp., right) of $l(u)$, then we proceed to the
left (resp., right) child of $u$ until we reach a leaf of $\calT$.
Clearly, $s$ has $O(\log n)$ projection cut-lines, which are on a path
from the root to a leaf in $\calT$. Hence, $V^2_g(s,G)$ contains $O(\log n)$ gateways. In a similar way we can define the gateway set $V_g(t,G)$ for $t$.
As will be shown later, for each gateway $p$ of $s$, $\overline{sp}$ is in $\calP$, and thus $d(s,p)=|\overline{sp}|$. The same applies to $t$.

 It is known~\cite{ref:ChenSh00} that if there exists a shortest \st\ path that contains a vertex of $\calP$, then there must exist a shortest \st\ path that contains a gateway of $s$ and a gateway of $t$. On the other hand, if there does not exist any shortest \st\ path containing a vertex of $\calP$, then there must exist a shortest \st\ path $\pi(s,t)$ that is $xy$-monotone and has the following property: either $\pi(s,t)$ consists of a horizontal segment and a vertical segment, or $\pi(s,t)$ consists of three segments: $\overline{ss'}$, $\overline{s't'}$, and $\overline{t't}$, where $s'$ is a vertical (resp., horizontal) projection of $s$ and $t'$ is the horizontal (resp., vertical) projection of $t$ on the same polygon edge. We call such a shortest path as above $\pi(s,t)$ a {\em trivial shortest path}.

\paragraph{A straightforward query algorithm.}
Given $s$ and $t$, we can compute $d(s,t)$ as follows. First, we check whether there exists a trivial shortest \st\ path. As shown in~\cite{ref:ChenSh00}, this can be done in $O(\log n)$ time by using vertical and horizontal ray-shootings, after $O(n\log n)$ time (or $O(n+h\log^{1+\epsilon}h)$ time for any $\epsilon>0$~\cite{ref:Bar-YehudaTr94}) preprocessing to build the vertical and horizontal decompositions of $\calP$.
%Those ray shootings can be performed by using the vertical and horizontal decompositions of $\calP$~\cite{ref:Bar-YehudaTr94,ref:ChazelleTr91}).
If yes, then we are done. Otherwise, we compute the gateway sets $V_g(s,G)$ and $V_g(t,G)$ in $O(\log n)$ time after certain preprocessing~\cite{ref:ChenTw16,ref:ChenSh00}. Suppose we have computed $d(u,v)$ for any two vertices $u$ and $v$ of $G$ in the preprocessing, i.e., given $u$ and $v$, $d(u,v)$ can be obtained in constant time. Then, $d(s,t)=\min_{p\in V_g(s,G), q\in V_g(t,G)}(|\overline{sp}|+d(p,q)+|\overline{qt}|)$, which can be computed in $O(\log^2 n)$ time since both $|V_g(s,G)|$ and $|V_g(t,G)|$ are bounded by $O(\log n)$.

\paragraph{The main sub-problem.}
To reduce the query time, since $|V_g^1(s,G)|=O(1)$ and $|V_g^1(t,G)|=O(1)$, the main sub-problem is to determine the value $\min_{p\in V^2_g(s,G), q\in V^2_g(t,G)}(|\overline{sp}|+d(p,q)+|\overline{qt}|)$.
This is the sub-problem we discussed in Section~\ref{sec:approach}. Note that the case $p\in V_g^1(s,G)$ and $q\in V^2_g(t,G)$, or the case $p\in V_g^2(s,G)$ and $q\in V^1_g(t,G)$ can be easily handled in $O(\log n)$ time since both $|V_g^1(s,G)|$ and $|V_g^1(t,G)|$ are $O(1)$.

\section{Solving the Main Sub-Problem}
\label{sec:subproblem}

In this section, we present an $O(n_s+n_t\log n_s)$ time algorithm for our main sub-problem, where $n_s=|V^2_g(s,G)|$ and $n_t=|V^2_g(t,G)|$.

\subsection{Preliminaries}
\label{sec:prep}
We consider the vertices of $G$ as the corresponding points in
$\calP$. Note that although $G$ preserves shortest paths between all
polygon vertices of $\calP$, it may not preserve shortest paths for all vertices
of $G$, i.e., for two vertices $p$ and $q$ of $G$, the shortest path
from $p$ to $q$ in $G$ may not be a shortest path in $\calP$. For this
reason,
%Let $\calV_0$ denote the set of all Steiner points of $G$.
as preprocessing, for each vertex $q$ of $G$, we compute a shortest path tree $T(q)$ in $\calP$ from $q$ to all vertices of $G$ using the algorithm in~\cite{ref:MitchellAn89,ref:MitchellL192}, which can be done in $O(n\log^2 n)$ time since $G$ has $O(n\log n)$ vertices.
For each vertex $p$ of $G$, we use $\pi_q(p)$ to denote the path in $T(q)$ from the root $q$ to $p$, which is a shortest path in $\calP$, and we refer to the edge incident to $p$ as the {\em last edge} of $\pi_q(p)$; we explicitly store $d(p,q)$ and the last edge of $\pi_q(p)$.
Note that shortest paths between two points in the $L_1$ metric are in general not unique. However, the shortest path
$\pi_q(p)$ computed by the algorithm in~\cite{ref:MitchellAn89,ref:MitchellL192} has the following property: all vertices of the path other than $p$ and $q$ are polygon vertices of $\calP$.
Doing the above for all vertices $q$ of $G$ takes $O(n^2\log^3 n)$ time and $O(n^2\log^2 n)$ space.

After the above preprocessing, for any two vertices $q$ and $p$ of
$G$, $d(p,q)$ and the last edge of $\pi_q(p)$ can be obtained in constant time.

\vspace{-0.1in}
\paragraph{Remark.}
Another reason we compute shortest path trees using the algorithm in \cite{ref:MitchellAn89,ref:MitchellL192} instead of applying Dijkstra's algorithm on the graph $G$ is that a shortest path tree computed in $G$ may not be a planar tree. As will be seen later in Section~\ref{sec:equal}, our query algorithm will need to determine the relative positions of two shortest paths (from the same source), and to do so, we need shortest path trees that are planar.
\vspace{0.15in}

Given $s$ and $t$, following the discussion in
Section~\ref{sec:pre},
%the main sub-problem is to compute the value $\min_{p\in V^2_g(s,G), q\in
%V^2_g(t,G)}(|\overline{sp}|+d(p,q)+|\overline{qt}|)$. In the following,
we assume that there are no trivial shortest \st\ paths and there is a
shortest \st\ path containing a gateway in $V^2_g(s,G)$ and a gateway in $V^2_g(t,G)$,
since otherwise the shortest path would have already been computed.
%The main sub-problem is to find a gateway $p$ in $V_g^2(s,G)$ that minimizes
%the value $\min_{q\in
%V^2_g(t,G)}(|\overline{sp}|+d(p,q)+|\overline{qt}|)$.
%In the following, we present an $O(n_s+n_t\log n_s)$ time
%algorithm for the sub-problem.
To simplify the notation, let $V(s)=V^2_g(s,G)$ and $V(t)=V^2_g(t,G)$.

A gateway of $V(s)$ is called a {\em via gateway} if there exists a shortest \st\ path that contains it.
Our goal is to find a via gateway, after which a shortest \st\ path
can be computed in additional $O(\log n)$ time by checking each gateway of $t$.
In the following, we present an $O(n_s+n_t\log n_s)$ time algorithm for finding a via gateway.
Without loss of generality, we assume that the first quadrant of $s$
has a via gateway. Below, we will describe our algorithm
only on the gateways of $V(s)$ in the first quadrant of $s$ (our
algorithm will run on each quadrant of $s$ separately). By slightly
abusing the notation, we still use $V(s)$ to denote the set of gateways of
$V(s)$ in the first quadrant of $s$.

%Our algorithm will compute a ``candidate'' coupled gateway $c'(p)$ for each gateway $p$ of $V(s)$ such that if $p\in V(s)$ is via gateway, then $c(p)=c'(p)$. Therefore, once the algorithm is done, the gateway $p$ that minimizes the value $d(s,p)+d(p,c'(p))+d(c'(p),t)$ among all $p\in V(s)$ is a via gateway.
%in a ``candidate'' via gateway subset $V'(s)\subseteq V(s)$ such that $V'(s)$ contains a via gateway and if $p\in V'(s)$ is a via gateway, then $c'(p)=c(p)$. Then, the point $p\in V'(s)$ minimize the value $d_{p,c'(p)}(s,t)$ is a via gateway. Indeed, this is because for each $p\in V'(s)$, even if $c'(p)$ is not $c(p)$, it holds that $d_{p,c(p)}(s,t)<d_{p,c'(p)}(s,t)$.
Before describing our algorithm, we introduce some geometric
structures, among which the most important ones are a {\em gateway region} of $s$
and an {\em extended gateway region} of $t$. Chen et
al.~\cite{ref:ChenSh00} introduced the gateway region for
rectilinear polygonal domains and here we extend the concept to the
arbitrary polygonal domain case. In particular, our extended gateway
region has several new components that are critical to our algorithm, and it
may be interesting in its own right.

\subsection{The Gateway Region $R(s)$ for $s$}

%We define the {\em level numbers} for the nodes $v$ in the cut-line
%tree $\calT$, denoted by $ln(v)$. If $v$ is the root, then $ln(v)=1$;
%otherwise, $ln(v)=ln(u)+1$, where $u$ is the parent of $v$. Similarly,
%if $l$ is a cut-line at a node $v$ of $\calT$, then we define its {\em
%level number} the same as $v$, denoted by $ln(l)$.

Let $p_1,p_2,\ldots,p_k$ be the gateways of $s$ ordered from left to
right (e.g., see Fig.~\ref{fig:gatewaycutlines}). Note that each $p_i$ is a type-2 Steiner point on a projection
cut-line of $s$.
Let $l_1,l_2,\ldots,l_k$ be the projection cut-lines of $s$ that
contain these gateways, respectively, and thus they are also sorted
from left to right.
It is known~\cite{ref:ChenTw16,ref:ChenSh00} that
%the level numbers of $l_1,l_2,\ldots,l_k$ are in decreasing order and
the $y$-coordinates
of $p_1,p_2,\ldots,p_k$ are in non-increasing order. The
sorted list can be obtained in $O(\log n)$ time when
computing $V(s)$~\cite{ref:ChenTw16,ref:ChenSh00},
and the list also follows the clockwise order around $s$.

For convenience of our discussion later, if $i$ is the smallest index
such that $y(p_{i})=y(p_{i+1})=\cdots=y(p_k)$, then we remove
$p_{i+1},\ldots,p_k$ from $V(s)$ because if there is a shortest \st\
path containing $p_j$ for any $j\in [i+1,k]$, then there must be a
shortest \st\ path containing $p_{i}$ as well. To simplify the
notation, we still use $k$ to denote the index of the last gateway of
$V(s)$ after the above removal procedure. Then we have the following
property: for any $i\in [1,k-1]$, $y(p_i)>y(p_k)$.

We define a {\em gateway region} $R(s)$ for $s$, as follows (e.g., see Fig.~\ref{fig:gatewayregion}).

%Due to our general position assumption, no point of $V(s)$ has the
%same $x$- or $y$-coordinate as $s$, and in particular, $x(p_1)>x(s)$ and $y(p_k)>y(s)$.
%Let $l_h(s)$ be the horizontal line through $s$.
%Let $l_h(p_k)$ be the horizontal line through $p_k$.
Let $s_1$ be the intersection of $l_1$ with the horizontal line
through $p_k$.
For each $p_i$ with $i\in [2,k]$, project $p_i$ leftwards horizontally onto
$l_{i-1}$ at a point $p_{i}'$ (note that $p_i'=p_{i-1}$ if $y(p_{i-1})=y(p_i)$).
Define $R(s)$ as the region bounded by the line segments
connecting the points $s_1$, $p_1$, $p_2'$, $p_2$, \ldots, $p_k'$,
$p_k$, and $s_1$ in this cyclic order. Clearly, each edge of
$R(s)$ is either horizontal or vertical. Note that
$R(s)$ also includes the two segments $\overline{p_1p_2'}$ and $\overline{p_k'p_k}$.

\begin{figure}[t]
\begin{minipage}[t]{0.49\linewidth}
\begin{center}
\includegraphics[totalheight=1.5in]{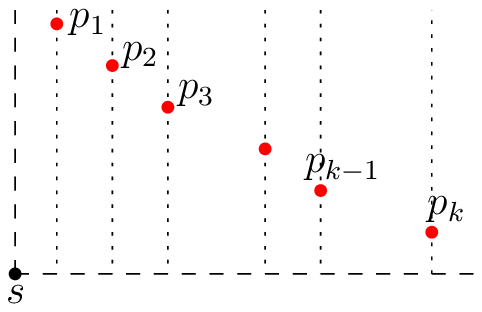}
\caption{\footnotesize
Illustrating the gateways of $V(s)$ and the cut-lines containing them.}
\label{fig:gatewaycutlines}
\end{center}
\end{minipage}
\hspace{0.05in}
\begin{minipage}[t]{0.49\linewidth}
\begin{center}
\includegraphics[totalheight=1.5in]{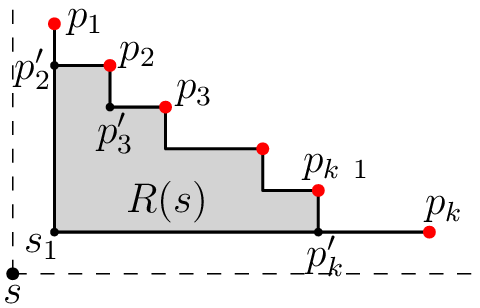}
\caption{\footnotesize
Illustrating the gateway region $R(s)$, which is the shaded region plus $\overline{p_1p_2'}$ and $\overline{p_kp_k'}$. The red points are gateways of $V(s)$.}
\label{fig:gatewayregion}
\end{center}
\end{minipage}
\vspace*{-0.15in}
\end{figure}

We use $\beta_s$ to denote the boundary portion of $R(s)$ from $p_1$ to
$p_k$ that contains all gateways of $V(s)$. We call $\beta_s$ the {\em
ceiling}, $\overline{s_1p_2'}$ the {\em left boundary}, and
$\overline{s_1p_k'}$ the {\em bottom boundary} of $R(s)$.
We refer to the region $R(s)$ excluding the points on $\beta_s$ as the {\em interior} of
$R(s)$.
%We have the following lemma.

\begin{observation}\label{obser:gatewayregion}
$R(s)$ is in $\calP$, and the interior of $R(s)$ does not contain any
polygon vertex of $\calP$.
\end{observation}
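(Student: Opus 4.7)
The plan is to decompose the two-dimensional part of $R(s)$ into the $k-1$ axis-parallel rectangles $R_i=[x(l_{i-1}),x(l_i)]\times[y(p_k),y(p_i)]$ for $i=2,\ldots,k$, and to handle the two one-dimensional attachments $\overline{p_1p_2'}$ and $\overline{p_kp_k'}$ separately. For each $R_i$ I would show (a) that its boundary lies in $\calP$ and (b) that its open interior contains no polygon vertex of $\calV$; together (a) and (b) imply $R_i\subseteq\calP$, because any hole of $\calP$ strictly inside $R_i$ would have to place at least one of its polygon vertices in the open interior. Taking the union over $i$ then gives $R(s)\subseteq\calP$, and the ``no polygon vertex in the interior of $R(s)$'' claim follows from (b) after an easy boundary check for the non-ceiling left and bottom pieces of $R(s)$.

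For part (a), the left side of $R_i$ is a subsegment of $\overline{s_{i-1}'p_{i-1}}$ on $l_{i-1}$ (since $y(s)\le y(p_k)\le y(p_i)\le y(p_{i-1})$), and the right side is a subsegment of $\overline{s_i'p_i}$ on $l_i$; both are in $\calP$ because $p_{i-1}$ and $p_i$ are gateways of $s$ visible to the horizontal projections $s_{i-1}'$ and $s_i'$, respectively. The top side of $R_i$ is $\overline{p_i'p_i}$, in $\calP$ by the definition of $p_i'$ as the leftward horizontal projection of $p_i$ onto $l_{i-1}$. The bottom side, at height $y(p_k)$, needs more care: I would embed $R_i$ into the taller rectangle $S_i=[x(l_{i-1}),x(l_i)]\times[y(s),y(p_i)]$, whose extra (bottom) side is a subsegment of $\overline{ss_k'}\subseteq\calP$, and then apply (b) to $S_i$ to conclude $S_i\subseteq\calP$, so in particular the bottom of $R_i$ is in $\calP$.

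For part (b), I would prove by contradiction the stronger statement that no polygon vertex $v\in\calV$ lies in the open rectangle $(x(l_{i-1}),x(l_i))\times(y(s),y(p_i))$, using the recursive structure of the cut-line tree $\calT$. Descending from the root along $s$'s path, the vertex $v$ must be separated from $s$ at some node $u$ whose cut-line $l(u)$ has $x$-coordinate in $(x(s),x(v)]\subset(x(s),x(l_i))$. If $l(u)$ is a projection cut-line of $s$, then $l(u)=l_j$ for some $j\le i-1$, and the horizontal projection of $v$ onto $l_j$ would be a type-2 Steiner point at height $y(v)\in(y(s),y(p_i))\subseteq(y(s),y(p_j))$, hence strictly between $s_j'$ and $p_j$; this contradicts the defining property that $p_j$ is the Steiner point on $l_j$ \emph{immediately} above $s_j'$. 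If $l(u)$ is not a projection cut-line of $s$, then horizontal visibility from $s$ to $l(u)$ is blocked by some polygon edge, and I would trace this blocking edge back to a polygon vertex $v'$ that, when projected onto some actual $l_j\in V(s)$, yields the same type of contradiction.

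The main obstacle will be this last case: propagating a blocking obstruction on an arbitrary cut-line of $\calT$ back to a type-2 Steiner point strictly between $s_j'$ and $p_j$ on a genuine projection cut-line of $s$ requires careful navigation of $\calT$ and a precise understanding of how obstructed horizontal visibility from $s$ behaves under the median-split recursion. Once (b) is in hand, the remaining checks are easy: the attachments $\overline{p_1p_2'}$ and $\overline{p_kp_k'}$ are respectively a subsegment of $\overline{s_1'p_1}$ on $l_1$ and the horizontal projection segment that defines $p_k'$, both in $\calP$; and any polygon vertex on the left or bottom boundary of $R(s)$ outside the ceiling would, via exactly the same Steiner-point contradiction, violate the ``immediately above'' property of some $p_j$, so none exists.
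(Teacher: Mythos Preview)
Your decomposition into the rectangles $S_i=[x(l_{i-1}),x(l_i)]\times[y(s),y(p_i)]$ and the strategy ``three sides in $\calP$ plus no polygon vertex in the open interior'' is exactly what the paper does (their $R(w_{i-1},p_i)$ is your $S_i$). Two points, however, need correction.

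\paragraph{The top side is not ``in $\calP$ by definition''.} The point $p_i'$ is simply the point $(x(l_{i-1}),y(p_i))$; that $\overline{p_i'p_i}\subseteq\calP$ is a \emph{consequence} of the observation, not an input to it. Fortunately you do not need it: the left, bottom, and right sides of $S_i$ are in $\calP$ (exactly as you say), and once (b) gives ``no polygon vertex in the open interior'', the rectangle is in $\calP$ without ever using the top edge. The paper argues precisely this way.

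\paragraph{The cut-line argument in (b) is mis-aimed.} You track the node $u$ where $v$ diverges from $s$'s path, and then split into ``$l(u)$ is a projection cut-line'' versus ``$l(u)$ is not''. The second case never occurs: $l(u)$ lies between $s$ and $l_k$ in $x$-coordinate and $s$ is horizontally visible to $l_k$, so $s$ is horizontally visible to $l(u)$; since $u$ is on $s$'s root-to-leaf path, $l(u)$ \emph{is} a projection cut-line of $s$. So your ``main obstacle'' is a phantom. The real gap is in your first case: it is not true that $l(u)=l_j$ for some $j\le i-1$. If $u$ is a proper ancestor of the node $u_{i-1}$ carrying $l_{i-1}$, then $u_{i-1}$ sits in the left subtree of $u$ (the side containing $s$), which forces $x(l_{i-1})<x(l(u))<x(v)<x(l_i)$; thus $l(u)$ lies strictly \emph{between} $l_{i-1}$ and $l_i$ and is not any $l_j$ at all, so you cannot invoke the ``immediately above $s_j'$'' property there.

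The paper's remedy is to pivot the argument around $l_{i-1}$ rather than $s$. Take $v$ to be the \emph{lowest} offending vertex; then $v$ is horizontally visible to $l_{i-1}$, yet $v$ cannot define a Steiner point on $l_{i-1}$ (else it would sit strictly between $s'_{i-1}$ and $p_{i-1}$). Hence some proper ancestor $l$ of $l_{i-1}$ in $\calT$ separates $v$ from $\calV(u_{i-1})$, and this $l$ necessarily satisfies $x(l_{i-1})<x(l)<x(v)$. Now visibility comes for free: $v$ is horizontally visible to $l$ (because $l$ is between $v$ and $l_{i-1}$), and $s$ is horizontally visible to $l$ (because $l$ is between $l_{i-1}$ and $l_i$). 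So $l$ is a projection cut-line of $s$ and $v$ defines a Steiner point on it at height $y(v)$; by minimality of $v$ this Steiner point is vertically visible to the horizontal line through $s$, forcing a gateway on $l$ below height $y(p_i)$. That contradicts $l_{i-1},l_i$ being consecutive among the gateway-bearing projection cut-lines.
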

\begin{proof}
The lemma can be proved by similar techniques as
in~\cite{ref:ChenSh00} (e.g., Lemmas 3.7 and 3.8).
However, since the definition in~\cite{ref:ChenSh00} is particularly
for (weighted) rectilinear polygonal
domains, we present our own proof here, and this also makes our paper more
self-contained.

For each $i\in [2,k-1]$, define $w_i$ to be the intersection of the vertical line through $p_i$ and the horizontal line through $s$ (e.g., see Fig.~\ref{fig:gregion}).

\begin{figure}[h]
\begin{minipage}[t]{\linewidth}
\begin{center}
\includegraphics[totalheight=1.5in]{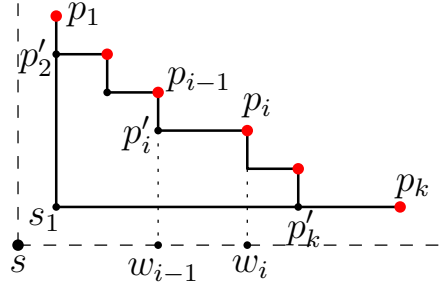}
\caption{\footnotesize
Illustrating the definition of $w_i$.}
\label{fig:gregion}
\end{center}
\end{minipage}
\vspace*{-0.15in}
\end{figure}

Consider the rectangle $R(w_{i-1},p_i)$ with $\overline{w_{i-1}p_i}$
as a diagonal. Since $p_{i-1}$ and $p_i$ are gateways of $V(s)$, both
of them are vertically visible to the horizontal line through $s$, and
thus, $\overline{p_{i-1}w_{i-1}}$ and $\overline{p_iw_i}$ are in
$\calP$. Also because $p_{i-1}$ and $p_i$ are gateways of $V(s)$,
neither $\overline{p_{i-1}w_{i-1}}\setminus{p_{i-1}}$ nor
$\overline{p_iw_i}\setminus{p_i}$ contains any polygon vertex.
Further, neither $\overline{p_{i-1}w_{i-1}}$ nor $\overline{p_iw_i}$
is contained a polygon edge since otherwise the edge would make $s$
not horizontally visible to $l_k$, i.e., the cut-line through $p_k$. Therefore, we obtain that $\overline{p_{i-1}w_{i-1}}\setminus{p_{i-1}}$ and $\overline{p_iw_i}\setminus{p_i}$ are in the interior of $\calP$. Since $s$ is horizontally visible to $l_k$, $\overline{w_{i-1}w_i}$ is in $\calP$.
Further, due to our general position assumption $s$ does not have the same $x$- or $y$-coordinate with any polygon vertex, $\overline{w_{i-1}w_i}$ is in the interior of $\calP$.

We claim that $R(w_{i-1},p_i)\setminus{\overline{p_i'p_i}}$ does not have a
polygon vertex that is vertically visible to $\overline{w_{i-1}w_i}$.
Assume to the contrary that this is not true, and let $p$ be the
lowest such vertex. Since
$\overline{p_{i-1}w_{i-1}}\cup\overline{w_{i-1}w_i}\cup
\overline{p_{i}w_{i}}$ is in $\calP$, $p$ must be horizontally visible
to $\overline{p_{i-1}w_{i-1}}$. Since $y(p)<y(p_i')\leq y(p_{i-1})$,
$p$ does not define a type-2 Steiner point at the cut-line $l_{i-1}$ since
otherwise $p_{i-1}$ would not be a gateway of $s$. Hence, there must be
a cut-line $l$ in $\calT$ between $p$ and $l_{i-1}$ such that $p$ defines
a type-2 Steiner point $p'$ on $l$ and $l$ is a proper ancestor of
$l_{i-1}$ (and thus prevents $p$ from defining a Steiner point on
$l_{i-1}$). Since $l$ is between $p$ and $l_{i-1}$, $s$ is horizontally visible to $l$. As $l_{i-1}$ is a projection cut-line of $s$ and $l$ is an ancestor of $l_{i-1}$, $l$ must also be a projection cut-line of $s$.
Further, by the definition of $p$, $p'$ is vertically visible to
the horizontal line through $s$. This implies that $V(s)$ must have a
gateway on $l$ no higher than $p'$, and thus the gateway is in
$R(w_{i-1},p_i)\setminus{\overline{p_i'p_i}}$, which incurs contradiction since
by definition $R(w_{i-1},p_i)\setminus\overline{p_i'p_i}$ does not have any
gateway of $s$.

The above claim, together with that $\overline{p_{i-1}w_{i-1}}\cup\overline{w_{i-1}w_i}\cup
\overline{p_{i}w_{i}}$ is in $\calP$, leads to that $R(w_{i-1},p_i)$
is in $\calP$. The observation can then be obtained due to the following: (1) $\overline{p_{i-1}w_{i-1}}\cup
\overline{p_{i}w_{i}}$ excluding $p_{i-1}$ and $p_i$ is in the interior of $\calP$, and (2) $R(s)$ is contained in the union of $R(w_{i-1},p_i)$ for all $i\in [2,k-1]$.
%and (3) $\overline{s_1p_k}$ is strictly higher than $s$.
\qed
\end{proof}

\begin{figure}[t]
\begin{minipage}[t]{0.49\linewidth}
\begin{center}
\includegraphics[totalheight=1.0in]{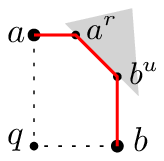}
\caption{\footnotesize
Illustrating the staircase path (the red solid) and the staircase region $R_s(a,b)$ (bounded by the solid path and the two dashed segments).}
\label{fig:defstair}
\end{center}
\end{minipage}
\hspace{0.05in}
\begin{minipage}[t]{0.49\linewidth}
\begin{center}
\includegraphics[totalheight=1.0in]{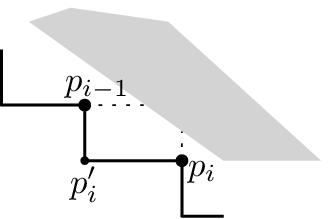}
\caption{\footnotesize
Illustrating the staircase region $R_s(p_{i-1},p_{i})$.}
\label{fig:pentagon}
\end{center}
\end{minipage}
\vspace*{-0.15in}
\end{figure}

For any two points $a$ and $b$ in the plane, we use $R(a,b)$ to denote
the rectangle with $\overline{ab}$ as a diagonal.
Suppose $a$ and $b$ of $\calP$ are both in the first quadrant of $s$ such that
$a$ is to the northwest of $b$. Recall that $a^r$ denotes the rightward projection of $a$ on $\partial\calP$ and $b^u$ denotes the upward projection of $b$ on $\partial\calP$.
With respect to $s$, we say that $a$
and $b$ are in {\em staircase positions} if either $\overline{aa^r}$ and
$\overline{bb^u}$ intersect, or both $a^r$ and $b^u$ are on the same
polygon edge (e.g., see Fig.~\ref{fig:defstair}); further, in the
former case, we call $\overline{ap}\cup \overline{pb}$
the {\em staircase path} between $a$ and $b$, where
$p=\overline{aa^r}\cap\overline{bb^u}$, and in the latter case, we
call  $\overline{aa^r}\cup\overline{a^rb^u}\cup \overline{b^ub}$ the
staircase path. The region bounded by the staircase path and
$\overline{aq}\cup \overline{qb}$, where $q$ is the intersection of
the vertical line through $a$ and the horizontal line through $b$, is
called the {\em staircase region} of $a$ and $b$ with respect to $s$, denoted by $R_s(a,b)$.
Roughly speaking, $R_s(a,b)$ is a pentagon after cutting the upper
right corner of $R(a,b)$ by a polygon edge.

\begin{observation}\label{obser:rectangle}
For each $i\in [2,k]$, e.g., see Fig.~\ref{fig:pentagon}, $p_{i-1}$ and $p_{i}$ are in staircase positions and the staircase region $R_s(p_{i-1},p_{i})$ is in $\calP$. Further, if $y(p_{i-1})> y(p_{i})$, then the interior of $R_s(p_{i-1},p_{i})$ along with its left and bottom edges $\overline{p_{i-1}p'_{i}}\cup\overline{p_{i}'p_{i}}\setminus\{p_{i-1},p_{i}\}$ does not contain a polygon vertex of $\calP$.
\end{observation}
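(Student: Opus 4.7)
The plan is to establish the no-polygon-vertex part of the statement first, and then deduce the staircase position and the containment $R_s(p_{i-1},p_i)\subseteq \calP$ from it. The overall approach closely mimics the cut-line tree argument used in the proof of Observation~\ref{obser:gatewayregion}.

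To rule out a polygon vertex $v$ in the interior of $R_s(p_{i-1},p_i)$ together with $(\overline{p_{i-1}p_i'}\cup\overline{p_i'p_i})\setminus\{p_{i-1},p_i\}$, I handle three locations. If $v$ lies on the left edge $\overline{p_{i-1}p_i'}$, then $v$ lies on the cut-line $l_{i-1}$, which the construction of $\calT$ (with the perturbation described in Section~\ref{sec:pre}) explicitly forbids. If $v$ lies strictly in the interior of the bottom edge $\overline{p_i'p_i}$, then $y(v)=y(p_i)$, so by the general position assumption $v$ must be the unique polygon vertex $v_i$ whose horizontal projection on $l_i$ defines $p_i$. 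If $v$ lies in the open interior of $R(p_{i-1},p_i)$, I take $v$ to be the lowest such vertex. For both of the latter two cases I apply the following cut-line argument. Because both $l_{i-1}$ and $l_i$ lie on $s$'s root-to-leaf path in $\calT$ and $s$ is to the left of both with $x(l_{i-1})<x(l_i)$, $l_i$ must be an ancestor of $l_{i-1}$ in $\calT$. Let $l^*$ be the node of $\calT$ where the paths of $s$ and $v$ diverge; a direct case analysis shows $l^*$ lies on $s$'s path strictly below $l_i$'s node and at or above $l_{i-1}$'s node, so $x(l^*)\in[x(l_{i-1}),x(l_i))$. Then $v$ defines a type-2 Steiner point on $l^*$ at height $y(v)$: horizontal visibility from $v$ to $l^*$ follows from $\overline{p_i'p_i}\subseteq R(s)\subseteq\calP$ (Observation~\ref{obser:gatewayregion}) in the bottom-edge case, and from the lowest-vertex property combined with $\overline{p_i'p_i}\subseteq\calP$ in the interior case. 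The vertical segment from $s'$ on $l^*$ up to this Steiner point splits at $y=y(p_i)$: the lower part lies in $R(w_{i-1},p_i)\subseteq\calP$ by Observation~\ref{obser:gatewayregion}, and the upper part lies in $\calP$ by the lowest-vertex property. When $x(l^*)>x(l_{i-1})$, this produces a gateway of $s$ on a cut-line strictly between $l_{i-1}$ and $l_i$, contradicting their consecutivity in the sorted list of gateway cut-lines of $V(s)$. When $l^*=l_{i-1}$, the Steiner point on $l_{i-1}$ at height $y(v)<y(p_{i-1})$ visible to $s'_{i-1}$ would displace $p_{i-1}$ as the immediately-above gateway on $l_{i-1}$, a contradiction.

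With the no-polygon-vertex claim in hand, the staircase position and the containment follow. The left and bottom edges of $R(p_{i-1},p_i)$ lie in $\calP$ as parts of the ceiling of $R(s)$ (Observation~\ref{obser:gatewayregion}), so any polygon edge intersecting the open interior of $R(p_{i-1},p_i)$ must have both endpoints lying on its top or right edge and hence can only cut off the upper-right corner. If no such cutting edge exists, the rightward ray $\overline{p_{i-1}p_{i-1}^r}$ and the upward ray $\overline{p_ip_i^u}$ meet at the upper-right corner (the intersect case of staircase positions); if one exists, both rays are stopped on the same cutting edge (the same-edge case). In either case, $R_s(p_{i-1},p_i)$ is the portion of $R(p_{i-1},p_i)$ below the cutting edge or all of $R(p_{i-1},p_i)$, and hence lies in $\calP$. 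The main obstacle I anticipate is the lowest-vertex visibility argument above $y=y(p_i)$: excluding polygon-edge obstructions in the vertical column at $x(l^*)$ and in the horizontal row at $y=y(v)$ requires tracing every obstructing edge back to a polygon-vertex endpoint and forcing that endpoint to lie either strictly below $v$ in $R(p_{i-1},p_i)$ (contradicting the choice of $v$) or at a location forbidden by general position and the cut-line construction.
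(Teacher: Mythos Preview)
Your proposal is correct and follows essentially the same route as the paper: both arguments take a hypothetical (lowest) polygon vertex in the region, locate a projection cut-line of $s$ strictly between $l_{i-1}$ and $l_i$ on which that vertex forces a Steiner point visible from $s$'s horizontal projection, and derive a contradiction with the consecutivity of $p_{i-1},p_i$ in $V(s)$. The paper's own proof simply says ``similar to Observation~\ref{obser:gatewayregion}'' and omits the details you spell out; your formulation via the divergence node $l^*$ of $s$ and $v$ in $\calT$ is a clean way to name the relevant cut-line, and your anticipated obstacle (the visibility/obstruction tracing above height $y(p_i)$) is exactly the standard step that the paper's sketch leaves implicit.
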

\begin{proof}
The proof is somewhat similar to Observation~\ref{obser:gatewayregion}, so we only sketch it.
Recall that $y(p_{i-1})\geq y(p_{i})$.
If $y(p_{i-1})= y(p_{i})$, then
$p_{i-1}=p_{i}'$.
The proof of Observation~\ref{obser:gatewayregion} shows that
$R(w_{i-1},p_i)$ is in $\calP$. Since $\overline{p_{i}'p_{i}}$ is the
upper edge of $R(w_{i-1},p_i)$, $R_s(p_{i-1},p_{i})=\overline{p_{i}'p_i}$ is in $\calP$ and thus
$p_{i-1}$ and $p_{i}$ are in staircase positions.

In the following, we assume that $y(p_{i-1})> y(p_{i})$.
As in the proof of Observation~\ref{obser:gatewayregion}, $\overline{p_{i-1}p_{i}'}\setminus\{p_{i-1}\}$ does not contain any polygon vertex and is in the interior of $\calP$.
We claim that $R_s(p_{i-1},p_i)$ excluding the top edge and the right edge does not have a polygon vertex that is vertically visible to $\overline{p_{i}'p_i}$. The proof is similar to that in Observation~\ref{obser:gatewayregion}, and we omit the details. The claim, together with $\overline{p_{i-1}p_{i}'}\cup\overline{p_i'p_i}\in \calP$, leads to the observation.
\qed
\end{proof}

\begin{figure}[t]
\begin{minipage}[t]{0.48\linewidth}
\begin{center}
\includegraphics[totalheight=2.0in]{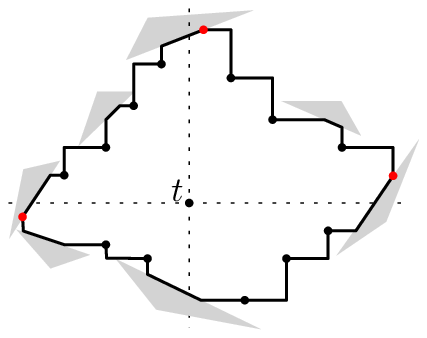}
\caption{\footnotesize
Illustrating the extended gateway region $R(t)$, bounded by the solid segments. The black points other than $t$ are all gateways and the three red points are special gateways to be defined later.}
\label{fig:extend}
\end{center}
\end{minipage}
\hspace{0.05in}
\begin{minipage}[t]{0.48\linewidth}
\begin{center}
\includegraphics[totalheight=2.0in]{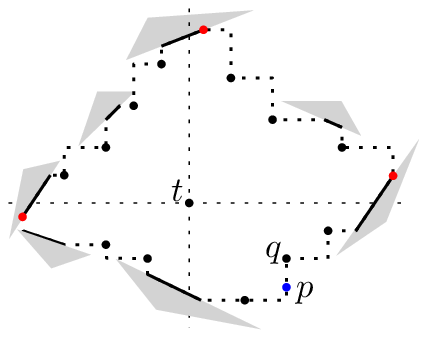}
\caption{\footnotesize
Illustrating the transparent edges on the boundary $R(t)$, shown by the dotted segments.
The three red points are special gateways. For illustrating Lemma~\ref{lem:extend}(4), a point $p$ on a transparent edge as well as an endpoint $q$ of the edge is also shown.
}
\label{fig:trans}
\end{center}
\end{minipage}
\vspace*{-0.15in}
\end{figure}

\subsection{The Extended Gateway Region $R(t)$ for $t$}

For $t$, we define an {\em extended gateway region} $R(t)$.
Unlike $R(s)$, which does not contain $s$, $R(t)$
contains $t$, e.g., see Fig.~\ref{fig:extend}.
Before giving the detailed definition of $R(t)$, which is
quite lengthy, we first discuss several key properties of it.

\paragraph{An overview of $R(t)$}
Let $\calV_1$ denote the set consisting of all polygon vertices and
their projection points on $\partial\calP$.
In general, $R(t)$ is a simple polygon that contains $t$. Let $\partial R(t)$ denote its boundary. Each edge of
$\partial R(t)$ is vertical, horizontal, or on a polygon edge. If an edge of $\partial R(t)$ is not on a polygon edge, then we call it a {\em transparent edge} (e.g., see
Fig.~\ref{fig:trans}).
It is the transparent edges that separate the interior of $R(t)$ from the
outside (i.e., for any point $p$ of $\calP$ outside $R(t)$, any path from $p$ to $t$ in $\calP$ must intersect a transparent edge of $R(t)$).
All gateways of
$V(t)$ are on  $\partial R(t)$. In addition, at most four points of
$\calV_1$ are considered as {\em special gateways} that are also on
$\partial R(t)$, and we include them in $V(t)$. Then, we have the following lemma (after removing some ``redundant'' gateways from $V(t)$).

%\begin{lemma}\label{lem:extendpre}
%The point $t$ is visible to each gateway in $V(t)$
%and $R(t)$ is in $\calP$.
%For any point $p$ outside $R(t)$, there is a shortest path from $p$ to $t$ that contains a gateway in $V(t)$.
%For any point $p$ on a transparent edge $e$, one of the endpoint $q$
%of $e$ is a gateway in $V(t)$ such that $\overline{pq}\cup
%\overline{qt}$ is an $xy$-monotone (and thus a shortest) path from $p$ to $t$.
%\end{lemma}

\begin{lemma}\label{lem:extend}
\begin{enumerate}
\item
The point $t$ is visible to each gateway in $V(t)$.
\item
$R(t)$ is in $\calP$.

\item
For any point $p$ outside $R(t)$, there is a shortest path from $p$ to $t$ that contains a gateway in $V(t)$, and no shortest path from $p$ to $t$ contains more than one gateway of $V(t)$.

\item
For any point $p$ on a transparent edge $e$ of $R(t)$, one of the endpoints $q$
of $e$ is a gateway in $V(t)$ and $\overline{pq}\cup
\overline{qt}$ is an $xy$-monotone (and thus a shortest) path from $p$ to $t$ (e.g., see
Fig.~\ref{fig:trans}).

\item
For any point $p$ on a transparent edge of $R(t)$, if a shortest path $\pi(p,t)$ from $p$ to $t$ contains a gateway $q$ of $V(t)$, then $\overline{pq}$ is in $\pi(p,t)$ and is on a transparent edge $e$ of $R(t)$ (and $q$ is an endpoint of $e$).

\end{enumerate}
\end{lemma}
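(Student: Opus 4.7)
The plan is to prove all five parts together, piggybacking on (i) the (forthcoming) explicit combinatorial description of $\partial R(t)$ as an alternation of axis-aligned transparent edges and sub-edges of $\partial\calP$, with the gateways of $V(t)$ sitting at the junctions, and (ii) the same pattern of reasoning that was used for Observation~\ref{obser:gatewayregion} and Observation~\ref{obser:rectangle}. The guiding picture is that $R(t)$ is swept out, quadrant by quadrant around $t$, by a collection of staircase pieces of the form $R_s(p_{i-1},p_i)$ (with $s$ replaced by $t$), glued to a few rectangles that capture the type-1 projections of $t$ and the special gateways from $\calV_1$.

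I would handle (1) and (2) first. For (1), each non-special gateway $q\in V(t)$ is a type-2 Steiner point on a projection cut-line of $t$, so $t$ is horizontally (or vertically) visible to $q$ by the gateway definition; each special gateway is chosen as a point of $\calV_1$ adjacent to a projection of $t$ on $\partial\calP$, so $t$ sees it directly. For (2), I would write $R(t)$ as the union, over consecutive gateway pairs in each quadrant of $t$, of the staircase regions given by Observation~\ref{obser:rectangle} (suitably transplanted to $t$), together with the degenerate rectangles attached to the special gateways. Observation~\ref{obser:rectangle} already guarantees each such piece lies in $\calP$, and the monotonicity of $y$-coordinates along the gateway chain guarantees that these pieces fit together along $\overline{p_i p_i'}$ segments without crossing $\partial\calP$.

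For (3), I would invoke the gateway-selection argument of Chen et al.~\cite{ref:ChenSh00}: any path in $\calP$ from an outside point $p$ to $t$ must cross one of the projection cut-lines of $t$, and the shortest such crossing point is exactly a gateway of $V(t)$; if $p$ is outside $R(t)$, the gateway witnessing this crossing is the one on the transparent edge of $R(t)$ that the path first meets. After the redundancy-removal step (which deletes any gateway dominated in both coordinates by another one, exactly as was done for $V(s)$), no two gateways lie on a common shortest $p$-to-$t$ path, since one would be a non-extreme bend on the path and could be bypassed by the $xy$-monotone segment through the other. For (4), I would pick the endpoint $q$ of the transparent edge $e$ on the side of $e$ closer to $t$ (i.e., in the quadrant of $t$ that contains $e$); by construction $q$ is a gateway, and both $\overline{pq}$ (along $e$) and $\overline{qt}$ are inside $R(t)\subseteq\calP$ by part (2), with $q$ placed so that the two segments together form an $xy$-monotone path, hence a shortest path. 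Part (5) follows by combining (3) and (4): given any $\pi(p,t)$ through a gateway $q'$, the $xy$-monotone witness path through the endpoint $q$ of $e$ given by (4) has length $|\overline{pq}|+|\overline{qt}|$, which equals $d(p,t)$; if $\pi(p,t)$ did not travel along $e$ through $q$, it would have to detour around $q$, making it strictly longer, a contradiction.

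The main obstacle will be part (4), specifically verifying case by case that each transparent edge — and there will be several types once the definition of $R(t)$ is unfolded (edges between two consecutive gateways in the same quadrant, edges adjacent to special gateways from $\calV_1$, and edges bounding the quadrant-interface rectangles around $t$) — really does admit the correct $xy$-monotone pivot. This requires carefully matching the orientation of $e$ to the quadrant of $t$ so that the chosen endpoint $q$ is the one whose two segments $\overline{pq}$ and $\overline{qt}$ move monotonically in both coordinates toward $t$; the staircase-position property from Observation~\ref{obser:rectangle} is exactly what makes this possible, but it has to be checked against the definition of $R(t)$ once that definition is in hand.
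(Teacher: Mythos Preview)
Your treatment of parts (1), (2), and (4) is essentially what the paper does: invoke the explicit construction of $R(t)$ together with Observations~\ref{obser:gatewayregion}--\ref{obser:qhpos}, and for (4) pick the gateway endpoint of the transparent edge and observe the two legs are monotone. One caveat on (3): you phrase the existence part as ``the path must cross a projection cut-line of $t$,'' but the special gateways $q_0,q_{h+1}\in\calV_1$ are not on cut-lines, so the argument has to go through transparent edges of $\partial R(t)$ rather than cut-lines. The paper does exactly this: any path from outside $R(t)$ to $t$ hits a transparent edge, and then (4) supplies the gateway.

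Part (5) has a genuine gap. You argue: let $q$ be the gateway endpoint of $e$ given by (4); if $\pi(p,t)$ did not travel along $e$ through $q$, it would ``detour around $q$'' and be strictly longer. But nothing forces $\pi(p,t)$ to go through \emph{that} $q$. There may be several shortest $p$--$t$ paths, each passing through a different gateway, so a shortest path through some $q'\neq q$ need not detour around $q$ at all (for instance when $p$ sits at a corner where two transparent edges meet). What (5) asserts is that whichever gateway $q$ appears on $\pi(p,t)$, the segment $\overline{pq}$ lies on a transparent edge. The paper's key step, which your argument is missing, is this: since (4) gives $d(p,t)=|\overline{pt}|$, \emph{every} shortest path $\pi(p,t)$ is $xy$-monotone and hence contained in the rectangle $R(p,t)$; therefore the gateway $q$ on $\pi(p,t)$ lies in $R(p,t)$. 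Combined with the structure of $\partial R(t)$ and the cleanup (no gateway lies in the rectangle $R(t,a)$ of another gateway $a$), this forces $q$ to be an endpoint of a transparent edge incident to $p$, and then $\overline{pq}$, being horizontal or vertical, is the unique shortest $p$--$q$ path and hence a sub-segment of $\pi(p,t)$. Your ``detour'' heuristic does not substitute for this containment-in-$R(p,t)$ argument.
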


\paragraph{Remark.}
$R(s)$ and $R(t)$ are defined differently because $s$ and $t$ are not treated symmetrically in our algorithm. For example, we need $R(t)$ to have the properties in Lemma~\ref{lem:extend}, which are not necessary for $R(s)$. Also, as will be clear later, treating $s$ and $t$ differently helps us to further reduce the complexities of our data structure.
\vspace{0.08in}

In the sequel, we present define $R(t)$ in details, after which we will formally prove Lemma~\ref{lem:extend}.

Let $R_1(t)$ be the
sub-region of $R(t)$ in the first quadrant of $t$, which is defined as
follows (e.g. see Fig.~\ref{fig:enlarge}). The sub-regions of $R(t)$
in other quadrants are defined similarly.

\begin{figure}[t]
\begin{minipage}[t]{0.49\linewidth}
\begin{center}
\includegraphics[totalheight=1.7in]{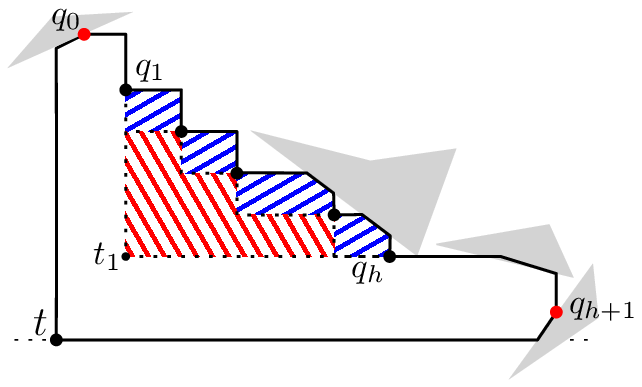}
\caption{\footnotesize
Illustrating $R_1(t)$, bounded by the solid segments. The red segments (of negative slope) illustrate $R'_1(t)$ while the blue segments (of positive slope) show the staircase regions, and their union is $R_1''(t)$.}
\label{fig:enlarge}
\end{center}
\end{minipage}
\hspace{0.05in}
\begin{minipage}[t]{0.49\linewidth}
\begin{center}
\includegraphics[totalheight=1.7in]{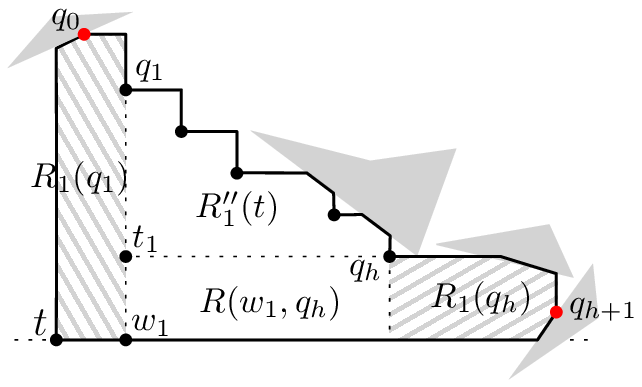}
\caption{\footnotesize
Illustrating the decomposition of $R_1(t)$ into four regions: $R'_1(t)$, $R(w_1,q_h)$, and two special regions $R_1(q_1)$ and $R_1(q_{h})$, to be defined later.
$q_0$ and $q_{h+1}$ are two points in $\calV_1$ to be defined later.
}
\label{fig:specialreg}
\end{center}
\end{minipage}
\vspace*{-0.15in}
\end{figure}

Let $R_1'(t)$ denote the same gateway region as $R(s)$ for $s$.
Let the gateways of $t$ on the ceiling of $R_1'(t)$ from left to right be $q_1,q_2,\ldots,q_{h}$. Let $R_1''(t)$ denote the union of $R_1'(t)$ and the staircase regions $R_t(q_{i-1},q_{i})$ (with respect to $t$) for all $i\in [2,h]$  (e.g. see Fig.~\ref{fig:enlarge}). By Observations~\ref{obser:gatewayregion} and \ref{obser:rectangle}, $R_1''(t)$ is in $\calP$ and does not contain any polygon vertex except on the boundary portion between $q_1$ and $q_h$.
Let $w_1$ denote the intersection of the vertical line through $q_1$
and the horizontal line through $t$ (e.g., see Fig.~\ref{fig:specialreg}). The proof of Observation~\ref{obser:gatewayregion} actually shows that the rectangle $R(w_1,q_h)$  is in $\calP$ and does not contain contain any polygon vertex except $q_h$.

%Let $\calV_1$ denote the set consisting of all polygon vertices and their projection points on $\partial\calP$.
The region
$R_1(t)$ is the union of $R_1''(t)$, $R(w_1,q_h)$, and two additional
regions $R_1(q_1)$ and $R_1(q_h)$, to be defined in the following
(e.g., see Fig.~\ref{fig:specialreg}). In order to define $R_1(q_1)$
and $R_1(q_h)$, we will also need to define two special points $q_0$ and $q_{h+1}$ from $\calV_1$.

\subsubsection{The region $R_1(q_1)$}

Let $l_h(t)$ and $l_v(t)$ be the horizontal and vertical lines through $t$, respectively.
%Note that $\overline{w_1q_1}$ is in $\calP$.
%We define a particular point $q_0\in \calV_1$ on $t^u$ as follows.
For a sequence of points $a_1,a_2,\ldots,a_i$ in the plane, we use
$\square(a_1,a_2,\ldots,a_i)$ to denote the polygon with
$a_1,\ldots,a_i$ as vertices in this cyclic order on its boundary.

Let $e_u$ be the polygon edge that contains the upward projection $t^u$ of $t$.
Due to our general position assumption, $e_u$ is not horizontal.
Depending on whether the slope of $e_u$ is negative or positive, there are two cases for defining $R_1(q_1)$.
%We begin with an observation for the first case.

\begin{figure}[t]
\begin{minipage}[t]{0.49\linewidth}
\begin{center}
\includegraphics[totalheight=1.5in]{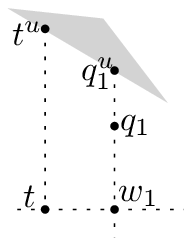}
\caption{\footnotesize
Illustrating the case where the slope of $e_u$ is negative.}
\label{fig:slopeneg}
\end{center}
\end{minipage}
\hspace{0.05in}
\begin{minipage}[t]{0.49\linewidth}
\begin{center}
\includegraphics[totalheight=1.7in]{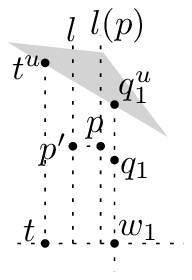}
\caption{\footnotesize
Illustrating the proof of Observation~\ref{obser:negative}.}
\label{fig:slopeneg1}
\end{center}
\end{minipage}
\vspace*{-0.15in}
\end{figure}

\begin{observation}\label{obser:negative}
If the slope of $e_u$ is negative  (e.g., see Fig.~\ref{fig:slopeneg}), then the upward projection $q_1^u$
of $q_1$ is on $e_u$. Further, the trapezoid $\square(t,w_1,q_1^u,t^u)$ is in $\calP$ and does not contain any polygon vertex except on $\overline{q_1q_1^u}$.

In this case, we define $R_1(q_1)$ as the above trapezoid.
\end{observation}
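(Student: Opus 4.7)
The plan is to adapt the arguments of Observations~\ref{obser:gatewayregion} and~\ref{obser:rectangle} to this new ``upper corner'' of $R_1(t)$, exploiting two facts: $q_1$ is the leftmost gateway of $t$ in the first quadrant, and the slope of $e_u$ is negative so that the portion of $e_u$ to the right of $t^u$ lies strictly below $y(t^u)$ and strictly above $y(t)$ within the strip $x(t)\le x\le x(q_1)$.

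First I would verify that $q_1^u$ lies on $e_u$. Let $v$ denote the right endpoint of $e_u$; by the negative slope, $x(v)>x(t)$ and $y(t)<y(v)<y(t^u)$. If $x(v)<x(q_1)$, then $v$ is a polygon vertex strictly in the first quadrant of $t$ with $x(v)<x(q_1)$, and $v$ is horizontally visible to $l_v(t)$ along $\overline{tv^l}$-style segments inside $\calP$. Tracing $v$ down the cut-line tree $\calT$ along the ancestor chain of $l_1$ (which is the chain of projection cut-lines of $t$), $v$ must produce a type-2 Steiner point on some cut-line $l$ that is a proper ancestor of $l_1$ and lies strictly to the left of $l_1$; this Steiner point would be a gateway of $t$ in the first quadrant with $x$-coordinate smaller than $x(q_1)$, contradicting that $q_1$ is leftmost. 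Hence $x(v)\ge x(q_1)$, which means the upward ray from $q_1$ hits $e_u$, so $q_1^u\in e_u$.

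Next I would show that the trapezoid $\square(t,w_1,q_1^u,t^u)$ is contained in $\calP$ and contains no polygon vertex off $\overline{q_1q_1^u}$. The four boundary edges are each in $\calP$: $\overline{tt^u}$ by definition of $t^u$, $\overline{tw_1}$ by the horizontal visibility of $t$ to $l_1$, $\overline{w_1q_1^u}$ by the vertical visibility of $q_1$ to $l_h(t)$ together with the preceding paragraph, and $\overline{t^uq_1^u}\subset e_u$ by the negative-slope assumption. To rule out a polygon vertex $p$ in the interior (or on $\overline{tw_1}\cup\overline{t^uq_1^u}\setminus\{t^u,q_1^u\}$), I would use the same argument as in Observation~\ref{obser:gatewayregion}: such a $p$ would be horizontally visible to $l_v(t)$ along a segment inside the trapezoid, and would, via the ancestor chain of $l_1$ in $\calT$, force a type-2 Steiner point on a projection cut-line of $t$ strictly to the left of $l_1$, contradicting the leftmost choice of $q_1$. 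Polygon vertices on $\overline{q_1q_1^u}$ are allowed (and indeed $q_1^u$ itself is a projection point), so this case is excluded from the claim.

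The main obstacle is the careful cut-line bookkeeping that converts the statement ``there is a polygon vertex strictly to the left of the vertical line through $q_1$ and strictly above $l_h(t)$'' into ``there is a gateway of $t$ strictly to the left of $q_1$.'' This mirrors the claim step in the proof of Observation~\ref{obser:gatewayregion}, with the role of the ceiling of $R(s)$ replaced by the polygon edge $e_u$, and with the leftmost-gateway hypothesis on $q_1$ playing the role of the gateway-exclusion hypothesis used there.
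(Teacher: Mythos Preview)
Your overall strategy matches the paper's: use the leftmost-gateway property of $q_1$ together with the cut-line tree to rule out polygon vertices in the trapezoid, and deduce everything from that. The core argument in your second paragraph is essentially the paper's proof.

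However, your first paragraph contains two genuine gaps. First, the inference ``$x(v)\ge x(q_1)$, which means the upward ray from $q_1$ hits $e_u$'' is a non-sequitur: even if $e_u$ extends past the vertical line $l_1$, some \emph{other} polygon edge could block the upward ray from $q_1$ before it reaches $e_u$. Ruling that out requires exactly the ``no polygon vertex vertically visible to $\overline{tw_1}\setminus\{w_1\}$'' claim, so you cannot separate the two conclusions the way you do. Second, the assertion that the right endpoint $v$ of $e_u$ is horizontally visible to $l_v(t)$ is unjustified; an obstacle could sit between $v$ and $\overline{tt^u}$. The paper sidesteps this by taking the \emph{lowest} polygon vertex $p$ vertically visible to $\overline{tw_1}\setminus\{w_1\}$: minimality of $p$, together with $\overline{t^ut}\cup\overline{tw_1}\subseteq\calP$ and the negative slope of $e_u$, forces $p$ to be horizontally visible to $\overline{tt^u}$, after which the cut-line argument (split into the cases ``$l(p)$ is a projection cut-line of $t$'' versus ``an ancestor of $l(p)$ is'') goes through.

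The fix is simply to reorganize: drop your first paragraph, prove directly (as in your second paragraph, but taking the lowest such vertex) that no polygon vertex above $t$ and below $t^u$ is vertically visible to $\overline{tw_1}\setminus\{w_1\}$, and then read off \emph{both} that $q_1^u\in e_u$ and that the trapezoid is vertex-free except on $\overline{q_1q_1^u}$.
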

\begin{proof}
Since $q_1$ is a gateway, $\overline{w_1q_1}\setminus\{q_1\}$ is in $\calP$ and does not contain a polygon vertex.

We claim that no polygon vertex above $t$ and below $t^u$ is vertically visible to
$\overline{tw_1}\setminus\{w_1\}$. Indeed, assume to the contrary that this is not true. Then, let $p$ be the lowest such point (e.g., see Fig.~\ref{fig:slopeneg1}). Since the slope of $e_u$ is negative and $\overline{t^ut}\cup \overline{tw_1}$ is in $\calP$, $p$ must be horizontally visible to $\overline{tt^u}$.
%Let $p'$ be the horizontal projection of $p$ on $\overline{tt^u}$. Hence, $\overline{pp'}$ is in $\calP$.

By our definition of the graph $G$, there is a cut-line, denoted by $l(p)$, through $p$. Note that $l(p)$ is
between $t$ and the cut-line $l(q_1)$ through $q_1$, and $x(p)<x(q_1)$. Hence, $t$ is horizontally visible to $l(p)$.
Depending on whether $l(p)$ is a projection cut-line of $t$, there are two cases.

\begin{enumerate}
\item
If $l(p)$ is a projection cut-line of $t$, then since $p$ is type-2 Steiner point on $l(p)$ and $p$ is above $t$, $l(p)$ must have a gateway in $V_g^2(t,G)$ above $t$. But this contradicts with that $q_1$ is the leftmost gateway of $V_g^2(t,G)$ in the first quadrant of $t$.

\item
If $l(p)$ is not a projection cut-line of $t$, then there must be a cut-line $l$ in $\calT$ that is an ancestor of $l(p)$ such that $l$ is between $t$ and $l(p)$, i.e., $l$ prevents $l(p)$ from being a projection cut-line of $t$. We further let $l$ be such a cut-line in the highest node of $\calT$ (i.e., $l$ is still between $t$ and $l(p)$, and is an ancestor of $l(p)$). Then, $l$ must be a projection cut-line of $t$.

Since $p$ is horizontally visible to $\overline{tt^u}$, $p$ is also horizontally visible to $l$ and thus defines a type-2 Steiner point $p'$ on $l$ (e.g., see Fig.~\ref{fig:slopeneg1}). Clearly, $p'$ is vertically visible to the horizontal line $l_h(t)$. Therefore, $l$ also has a gateway of $V_g^2(t,G)$ in the first quadrant of $t$. But this contradicts with that $q_1$ is the leftmost gateway of $V_g^2(t,G)$ in the first quadrant of $t$.
\end{enumerate}

The claim is thus proved. The claim implies that $q_1^u$ is on $e_u$.
Further, due to the general position assumption, neither $\overline{t_ut}$ nor $\overline{tw_1}$ contains a polygon vertex. Recall that $\overline{w_1q_1}\setminus\{q_1\}$ does not have a polygon vertex. Hence,
the claim leads to the observation due to $\overline{t_ut}\cup \overline{tw_1}\cup \overline{w_1q_1^u}$ is in $\calP$.
\qed
\end{proof}

%In light of the above observation, if $e_u$ is negative, we define $R_1(q_1)$ as the trapezoid $\square t^utt_1q_1^u$.

If the slope of $e_u$ is positive, we also need to define a point $q_0\in \calV_1$ on $t^u$.
Depending on whether $y(q_1)\geq y(t_u)$, there are two sub-cases.

\begin{figure}[t]
\begin{minipage}[t]{0.49\linewidth}
\begin{center}
\includegraphics[totalheight=1.3in]{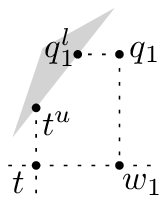}
\caption{\footnotesize
Illustrating the case where the slope of $e_u$ is positive and $y(q_1)\geq y(t^u)$.}
\label{fig:postive2}
\end{center}
\end{minipage}
\hspace{0.05in}
\begin{minipage}[t]{0.49\linewidth}
\begin{center}
\includegraphics[totalheight=1.3in]{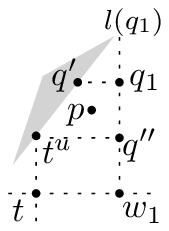}
\caption{\footnotesize
Illustrating the proof of Observation~\ref{obser:posabove}.}
\label{fig:posabove}
\end{center}
\end{minipage}
\vspace*{-0.15in}
\end{figure}

\begin{observation}\label{obser:posabove}
If the slope of $e_u$ is positive and $y(q_1)\geq y(t_u)$, e.g., see Fig.~\ref{fig:postive2}, then the
left projection $q_1^l$ of $q_1$ is on $e_u$, and the pentagon
$\square(t,w_1,q_1,q_1^l,t^u)$ is in $\calP$ and does not contain any polygon vertex
except on the top edge $\overline{q_1q_1^l}$.

In this case, we define $q_0$ as $q_1^l$, which must be in $\calV_1$, and define $R_1(q_1)$ as the above pentagon.
\end{observation}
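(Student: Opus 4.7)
The plan is to mimic the proof of Observation~\ref{obser:negative}, with the geometry adjusted to reflect that $e_u$ now rises to the upper right. First I would establish that the three non-top edges of the would-be pentagon are already in $\calP$ and (except at $q_1$) contain no polygon vertex: $\overline{t^ut}$ lies in $\calP$ by the definition of $t^u$ and has no interior polygon vertex by the general position assumption; $\overline{tw_1}$ lies in $\calP$ because $t$ is horizontally visible to the cut-line through $q_1$; and $\overline{w_1q_1}\setminus\{q_1\}$ lies in the interior of $\calP$ and contains no polygon vertex because $q_1$ is a gateway of $V_g^2(t,G)$ vertically visible to $l_h(t)$. This sets up the ``frame'' $\overline{t^ut}\cup\overline{tw_1}\cup\overline{w_1q_1}$.

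The heart of the argument is a claim analogous to the one inside Observation~\ref{obser:negative}: no polygon vertex $p$ lying in (or on the left/bottom/right edges of) the would-be pentagon, other than those on $\overline{q_1 q_1^l}$ itself, can be horizontally visible to $\overline{tt^u}$. I would assume such a $p$ exists and take the lowest one. Since the frame is in $\calP$, $p$ must be horizontally visible to $\overline{tt^u}$. By construction of $G$ there is a cut-line $l(p)$ through $p$, strictly between $l_v(t)$ and the cut-line $l(q_1)$ carrying $q_1$, so $t$ is horizontally visible to $l(p)$. If $l(p)$ is a projection cut-line of $t$, then $p$ itself (or its horizontal projection on $l(p)$) forces the existence of a gateway of $V_g^2(t,G)$ on $l(p)$ lying above $l_h(t)$, which contradicts $q_1$ being the leftmost first-quadrant gateway of $t$. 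Otherwise a proper ancestor cut-line $l$ of $l(p)$ between $l_v(t)$ and $l(p)$ must be a projection cut-line of $t$; then $p$ defines a type-2 Steiner point $p'$ on $l$, still above $l_h(t)$ and still to the left of $q_1$, contradicting the same minimality of $q_1$.

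Given the claim, statement~(a) follows: the leftward ray from $q_1$ meets no polygon boundary before it first meets $e_u$, and since $e_u$ has positive slope through $t^u$ and $y(q_1)\ge y(t^u)$, the straight line extending $e_u$ does cross the horizontal line through $q_1$ to the left of $q_1$, so $q_1^l$ is on $e_u$ and belongs to $\calV_1$. Combining the frame, the slanted edge $\overline{q_1^l t^u}\subset e_u$, and the claim yields that the pentagon $\square(t,w_1,q_1,q_1^l,t^u)$ lies in $\calP$ and that the only polygon vertices it can contain lie on the top edge $\overline{q_1q_1^l}$. Finally, $q_0:=q_1^l\in\calV_1$ is a valid definition.

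The main obstacle I expect is the clean case analysis in the contradiction step, specifically ensuring that the alternative gateway produced on $l(p)$ or on its ancestor $l$ genuinely sits in the first quadrant of $t$ and strictly to the left of $q_1$; the rest is essentially bookkeeping translated from Observation~\ref{obser:negative} and Observation~\ref{obser:gatewayregion}. A minor subtlety is handling the equality case $y(q_1)=y(t^u)$, where $q_1^l=t^u$ degenerates the pentagon to a quadrilateral, but the same argument goes through unchanged.
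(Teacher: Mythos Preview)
Your argument has a genuine gap when the hypothetical lowest polygon vertex $p$ lies above height $y(t^u)$. Because $y(q_1)\ge y(t^u)$, the pentagon extends above $t^u$, and for such a $p$ the leftward ray from $p$ hits the polygon edge $e_u$ before reaching the vertical line through $t$. Hence the step ``since the frame is in $\calP$, $p$ must be horizontally visible to $\overline{tt^u}$'' fails there, and with it the subsequent step ``$p$ defines a type-2 Steiner point $p'$ on the ancestor cut-line $l$'': that requires $p$ to be horizontally visible to $l$, but $l$ sits between $l_v(t)$ and $l(p)$ and may well lie to the left of $e_u$ at height $y(p)$, in which case $e_u$ blocks the visibility. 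Your contradiction machinery borrowed from Observation~\ref{obser:negative} only covers the part of the pentagon at or below $y(t^u)$.

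The paper closes this gap by splitting the pentagon horizontally at height $y(t^u)$ into the rectangle $R(t,q'')$ below (where $q''$ is the point on $l(q_1)$ at height $y(t^u)$) and a trapezoid above. For the rectangle, the Observation~\ref{obser:negative} argument applies verbatim. For a lowest polygon vertex $p$ in the trapezoid, the paper argues to the \emph{right} rather than the left: $p$ is horizontally visible to $l(q_1)$ (the right side of the trapezoid), yet $p$ cannot define a Steiner point on $l(q_1)$ since $q_1$ is already the Steiner point immediately above $t$'s projection there; so some ancestor cut-line of $l(q_1)$ lies strictly between $p$ and $l(q_1)$, hence strictly between $t$ and $l(q_1)$, contradicting that $l(q_1)$ is a projection cut-line of $t$. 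This rightward argument for the upper trapezoid is what your plan is missing.
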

\begin{proof}
Let $l(q_1)$ be the cut-line through $q_1$ and let $q'$ be the
intersection of the horizontal line through $q_1$ and the line
containing $e_u$. We will show that $q'=q_1^u$.
Let $q''$ be the intersection of $l(q_1)$ with the horizontal line
through $t^u$ (e.g. see Fig.~\ref{fig:posabove}). Note that $\overline{w_1q_1}\setminus\{q_1\}$ does not
contain any type-2 Steiner point.

First of all, by the similar proof as that for Observation~\ref{obser:negative}, we can show that no
polygon vertex above $t$ and below $t^u$ is vertically visible
to $\overline{tw_1}\setminus\{w_1\}$. This implies that the rectangle
$R(t,q'')$ does not contain any polygon vertex except $q_1$ (when $q_1=q''$), since
$\overline{t^ut}\cup\overline{tw_1}\cup\overline{w_1q_1}\setminus\{q_1\}$
is in $\calP$ and
does not contain a polygon vertex. This
further implies that $R(t,q'')$ is in $\calP$ and does not contain any
polygon vertex except possibly $q_1$. In the following, we focus on the
trapezoid $\square(t^u,q'',q_1,q')$, and we let $D$ denote the trapezoid but excluding the top edge
$\overline{q'q_1}$.

We claim that $D$ does not contain any polygon vertex. Assume to
the contrary that this is not true. Let $p$ be the lowest such vertex (e.g., see Fig.~\ref{fig:posabove}). Then
$y(p)<y(q_1)$, and $p$ is vertically visible to $\overline{tw_1}$ and
is horizontally visible to $\overline{q_1w_1}$.
Since $q_1$ is a gateway, $p$ does not define a Steiner point at
$l(p)$. This is only possible when there is a cut-line $l$ in
$\calT$ that is an ancestor of $l(q_1)$ and $l$ is between $p$ and
$l(q_1)$ (and $l\neq l(q_1)$). However, since $l$ is between $t$ and
$l(q_1)$ and $l$ is an ancestor of $l(q_1)$, $l$ would prevent $l(q_1)$ from
being a projection cut-line of $t$, incurring contradiction.
%We further let $l$ be such a cut-line in the highest node of
%$\calT_1$. Then, $p$ must define a type-2 Steiner point on $l$.

Since $D$ does not contain any polygon vertex and
$\overline{q't^u}\cup\overline{t^ut}\cup
\overline{tw_1}\cup \overline{w_1q_1}\subseteq \calP$, the above claim implies that $q'$ must be
$q_1^l$.

The above discussion also implies that the union of $\square(t^u,q'',q_1,q')$ and $R(t,q'')$, which is exactly the pentagon in the lemma statement, is in $\calP$ and does not contain any polygon vertex except the top edge $\overline{q_1q_1^l}$.
%The observation thus follows.
%In the following, we show that $D$ does not contain any
%type-2 Steiner point, which would prove the observation.
%Assume to the contrary that this is not true. Let $p$ be a type-2
%Steiner point in $D$. Let $v_p$ be the point of $\calV_1$ defining $p$.
%Since $y(p)<y(q_1)$, $y(v_p)<y(q_1)$. Hence, $v_p$ cannot define a type-2 Steiner point on $l(q_1)$, since otherwise $q_1$ could not be a gateway of $t$. This also implies that $v_p$ cannot be on $l(q_1)$. Depending on whether $v_p$ is to the left or right of $l(q_1)$, there are two cases.
%
%If $v_p$ is to the left of $l(q_1)$, then since $y(v_p)<y(q_1)$, we can use the similar analysis as above to obtain contradiction.
%
%If $v_p$ is to the right of $l(q_1)$, then since $v_p$ does not define a Steiner point on $l(q_1)$, there must be a cut-line $l$ between $l(q_1)$ and $v_p$ that is an ancestor of $l(q_1)$, e.g., see  Fig.~\ref{fig:posabove}. Let $l(p)$ be the cut-line through $p$. Since $l(p)$ is to the left of $l(q_1)$ and $v_p$ defines a Steiner point on $l(p)$, $l(p)$ must be an ancestor of $l$ and thus is an ancestor of $l(q_1)$. However, this means that $l(p)$ would prevent $l(q_1)$ from being a projection cut-line, incurring contradiction.

Finally, to see that $q_0$ must be a point in $\calV_1$, let $v$ be the polygon vertex defining the Steiner point $q_1$. Then, $q_0=q_1^l$ must be $v^l$, which is in $\calV_1$.
\qed
\end{proof}

%Therefore, to
%find $q_0$ during the query, we can record the leftward and rightward
%projections on $\partial\calP$ for each Steiner point of $G$ in the
%preprocessing.

%If the slope of $e_u$ is positive and $y(q_1)\geq y(t_u)$, then we define we define $q_0$ as $q_1^l$. Recall that by Observation~\ref{obser:G1}, $q_1^l$ is in $\calV_1$. In this case, we define $R_1(q_1)$ as the trapezoid $\square tw_1q_1q_0$.

%We then proceed on the case $y(q_1)< y(t_u)$.
%We have another observation.

\begin{figure}[t]
\begin{minipage}[t]{\linewidth}
\begin{center}
\includegraphics[totalheight=1.4in]{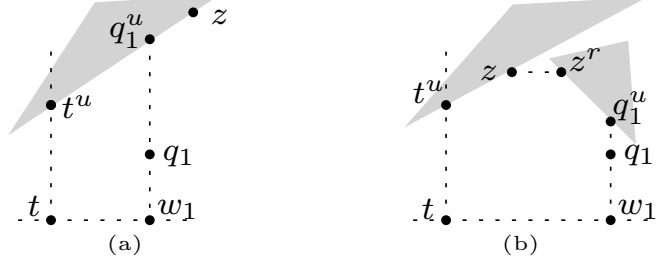}
\caption{\footnotesize
Illustrating the case where the slope of $e_u$ is positive and $y(q_1)< y(t^u)$: (a) $x(z)\geq x(q_1)$; (b) $x(z)<x(q_1)$.}
\label{fig:postive1}
\end{center}
\end{minipage}
\vspace*{-0.15in}
\end{figure}

\begin{observation}\label{obser:postive}
Suppose the slope of $e_u$ is positive and $y(q_1)< y(t_u)$. Let $z$ be the first point of
$\calV_1$ on $e_u$ to the right of $t_u$.
\begin{enumerate}
\item
\label{case:10}
If $x(z)\geq x(q_1)$, e.g., see Fig~\ref{fig:postive1}(a), then the upward projection $q_1^u$ of $q_1$ must be on $e_u$, and the trapezoid $\square (t,w_1,q_1^u,t^u)$ is in $\calP$ and does not contain any polygon vertex except on $\overline{q_1q_1^u}$.

In this case, $q_0$ is undefined and $R_1(q_1)$ is defined as the trapezoid $\square (t,w_1,q_1^u,t^u)$.
\item
If $x(z)<x(q_1)$, e.g., see Fig~\ref{fig:postive1}(b), then $z$ and $q_1$ are in staircase positions (with respect to $t$), and further, the region bounded by $\overline{zt^u}\cup \overline{t^ut}\cup \overline{tw_1}\cup\overline{w_1q_1}$ and the staircase path from $q_1$ to $z$ is
in $\calP$ and does not contain any polygon vertex  except on the horizontal edge (incident to $z$) and the vertical edge (incident to $q_1$) in the staircase path between $z$ and $q_1$.

In this case, we define $q_0$ as $z$ and define $R_1(q_1)$ as the region specified above.
\end{enumerate}
\end{observation}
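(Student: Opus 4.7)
\medskip
\noindent\emph{Proof plan.} The plan is to mirror the arguments used for Observations~\ref{obser:negative} and \ref{obser:posabove}, adapted to the configuration at hand. The common preliminary step, used in both sub-cases, is the following \emph{visibility claim}: no polygon vertex strictly above $l_h(t)$ and strictly below the horizontal line through $t^u$ is vertically visible to $\overline{tw_1}\setminus\{w_1\}$. This is proved exactly as in the first paragraph of the proof of Observation~\ref{obser:negative}: any such vertex $p$ would lie on some cut-line $l(p)$, and by tracing up $\calT$ one finds either that $l(p)$ itself, or an ancestor $l$ of $l(p)$, is a projection cut-line of $t$ on which $p$ (or its Steiner image $p'$) sits vertically above $l_h(t)$; this gives a gateway in $V_g^2(t,G)$ in the first quadrant of $t$ that is strictly to the left of $q_1$, contradicting the choice of $q_1$ as the leftmost such gateway.

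For sub-case~1 ($x(z)\geq x(q_1)$), I would combine the visibility claim with the definition of $z$. Since $z$ is the first point of $\calV_1$ on $e_u$ to the right of $t^u$ and $x(z)\geq x(q_1)$, the sub-segment of $e_u$ between $t^u$ and the point of $e_u$ with $x$-coordinate $x(q_1)$ contains no point of $\calV_1$, so in particular no polygon vertex. Together with the visibility claim and the fact that $\overline{w_1q_1}\setminus\{q_1\}$ is polygon-vertex-free (because $q_1$ is a gateway), this rules out any polygon vertex inside $\square(t,w_1,q_1^u,t^u)$ off its top edge, where $q_1^u$ is the intersection of the vertical line through $q_1$ with $e_u$. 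The chain $\overline{t^u t}\cup\overline{tw_1}\cup\overline{w_1q_1}$ already lies in $\calP$, and no polygon vertex can obstruct the vertical segment $\overline{q_1q_1^u}$ from being in $\calP$; hence $q_1^u$ is the genuine upward projection of $q_1$ onto $\partial\calP$, and the entire trapezoid is in $\calP$ with no polygon vertex except on $\overline{q_1q_1^u}$.

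For sub-case~2 ($x(z)<x(q_1)$), the upward projection of $q_1$ no longer reaches $e_u$; instead the polygon edge containing $z$ (together with $z$ itself) obstructs it. I would argue the staircase property in two pieces. First, applying the visibility claim to the strip between $l_h(t)$ and the horizontal line through $z$, and using that $\overline{w_1q_1}\setminus\{q_1\}$ is polygon-vertex-free, shows that the rectangle with diagonal $\overline{w_1q_1}$ extended upward until it meets the horizontal line through $z$ is in $\calP$ and free of polygon vertices (off its right edge, which contains $q_1$). Second, in the strip between the horizontal line through $z$ and $e_u$, any polygon vertex that would destroy the staircase would either be a point of $\calV_1$ on $e_u$ between $t^u$ and $z$, contradicting the choice of $z$, or a vertex horizontally visible to $\overline{zq_1'}$ (where $q_1'$ is the upward continuation of $\overline{w_1q_1}$ meeting the horizontal line through $z$) that would, by the cut-line argument of the preliminary claim, force a gateway of $V_g^2(t,G)$ strictly to the left of $q_1$. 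Putting the two pieces together shows that $z$ and $q_1$ are in staircase positions with respect to $t$, the staircase path lies in $\calP$, and the bounded region is polygon-vertex-free except on the horizontal step incident to $z$ and the vertical step incident to $q_1$. Finally, $q_0=z\in \calV_1$ by definition.

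The main obstacle is sub-case~2: one must carefully link the two strips so that the full staircase—not just its individual steps—fits in $\calP$, and must use both the minimality of $z$ on $e_u$ and the leftmost-gateway property of $q_1$ together to exclude intervening polygon vertices. All other steps are routine adaptations of the cut-line/projection arguments already deployed in Observations~\ref{obser:gatewayregion}, \ref{obser:negative}, and \ref{obser:posabove}.
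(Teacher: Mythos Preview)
Your proposal is essentially correct and follows the same approach as the paper: the preliminary visibility claim (identical to the paper's first claim) plus an appeal to the minimality of $z$ on $e_u$. One point worth noting: for sub-case~2 the paper packages the second ingredient more cleanly as a single claim---``no polygon vertex is horizontally visible to $\overline{t^u z}\setminus\{z\}$'' (since such a vertex would project leftward to a point of $\calV_1$ on $e_u$ strictly between $t^u$ and $z$, contradicting the choice of $z$)---and then the two claims together with $\overline{zt^u}\cup\overline{t^ut}\cup\overline{tw_1}\cup\overline{w_1q_1}\subseteq\calP$ immediately yield the staircase and the vertex-freeness. Your two-strip decomposition reaches the same conclusion but is more roundabout: the visibility claim as stated only covers heights up to $y(t^u)$, not $y(z)$, so your ``first piece'' needs the second claim anyway to handle the band between $y(t^u)$ and $y(z)$; and the extra cut-line argument you invoke in the ``second piece'' is unnecessary once that horizontal-visibility claim is in place.
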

\begin{proof}
By the similar proof as that for Observation~\ref{obser:negative}, we can show the following claim: No
polygon vertex above $t$ and below $t^u$ is vertically visible to $\overline{tw_1}\setminus\{w_1\}$.
We also claim that no polygon vertex  is horizontally visible to $\overline{t^uz}\setminus\{z\}$, since otherwise its leftward projection (which is in $\calV_1$) would be on $\overline{t^uz}\setminus \{z\}$, contradicting with the definition of $z$.

If $x(z)\geq x(q_1)$, then since $\overline{t^ut}\cup\overline{tw_1}\cup \overline{w_1q_1}$ is in $\calP$, the above two claims imply that
%the trapezoid $\square(t^u,t,w_1,q')$ does not contain any polygon vertex except on $\overline{q_1q'}$ and is in $\calP$. Hence,
$q_1^u$ is on $e_u$. This further implies that $\square(t^u,t,w_1,q_1^u)$ is in $\calP$ and does not contain any polygon vertex except on $\overline{q_1q_1^u}$, since $\overline{q_1^ut^u}\cup \overline{t^ut}\cup\overline{tw_1}\cup \overline{w_1q_1}\setminus\{q_1,q_1^u\}$ does not contain a polygon vertex.

If $x(z)<x(q_1)$, since $\overline{zt^u}\cup \overline{t^ut}\cup\overline{tw_1}\cup \overline{w_1q_1}$ is in $\calP$, the above two claims imply that $z$ and $q_1$ are in staircase positions. As in the above case, this further implies that the region specified in the observation is in $\calP$ and does not contain any polygon vertex  except on the horizontal edge incident to $z$ and the vertical edge incident to $q_1$ in the staircase path between $z$ and $q_1$.
\qed
\end{proof}

\subsubsection{The region $R_1(q_h)$}

We proceed to define the region $R_1(q_h)$. Let $e_r$ be the polygon edge containing the right projection $t^r$ of $t$. Let $w_h$ be intersection of $l_h(t)$ and the vertical line through $q_h$. Depending on whether the slope of $e_r$ is negative or positive, there are two cases.

\begin{observation}\label{obser:qhneg}
If the slope of $e_r$ is negative, e.g., see Fig.~\ref{fig:qhneg}, then the right projection $q_h^r$ of $q_h$ is on $e_r$, and the trapezoid $\square(q_h,w_h,t^r,q_h^r)$ is in $\calP$ and does not contain any polygon vertex except on the top edge $\overline{q_hq_h^r}$.

In this case, we define $R_1(q_h)$ as the trapezoid $\square(q_h,w_h,t^r,q_h^r)$.
\end{observation}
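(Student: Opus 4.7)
}
The plan is to mirror the argument used for Observation~\ref{obser:negative}, with the roles of horizontal and vertical swapped and ``first quadrant'' oriented toward the right boundary of the gateway region rather than the top. Specifically, I would show that the analogous ``visibility-by-cut-line'' claim holds: no polygon vertex lying in the horizontal strip between $t$ and $t^r$ and below $e_r$ can be horizontally visible to $\overline{tw_h}\setminus\{w_h\}$. Combined with the fact that $\overline{t^rt}\cup\overline{tw_h}\cup\overline{w_hq_h}\setminus\{q_h\}$ lies in $\calP$ and is free of polygon vertices (the last segment is free by the gateway property of $q_h$, and the first two by the definitions of $t^r$ and $w_h$ plus the general position assumption), this claim will force $q_h^r$ to lie on $e_r$ and the entire trapezoid $\square(q_h,w_h,t^r,q_h^r)$ to lie in $\calP$ with no polygon vertex except possibly on the top edge $\overline{q_hq_h^r}$.

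For the visibility claim, suppose for contradiction that some polygon vertex $p$ with $x(t)<x(p)<x(t^r)$ and $y(t)<y(p)$ is horizontally visible to $\overline{tw_h}\setminus\{w_h\}$; choose $p$ leftmost (the symmetric analogue of ``lowest'' in Observation~\ref{obser:negative}). Since $e_r$ has negative slope and $\overline{tw_h}\cup\overline{tt^r}\subseteq \calP$, $p$ must then also be vertically visible to $\overline{tt^r}$. By the construction of $G$, there is a cut-line $l(p)$ through $p$, and $l(p)$ lies between $t$ and the cut-line $l(q_h)$ through $q_h$ (since $x(p)<x(q_h)$ would put $p$ inside the already-proved-polygon-vertex-free region, so $x(p)\ge x(q_h)$, but here the roles are swapped so I would argue that $p$ defines a type-2 Steiner point on some horizontal cut-line... actually the construction uses vertical cut-lines only, so $l(p)$ is vertical and the argument goes through as in Observation~\ref{obser:negative} but with ``leftmost'' in place of ``lowest'').

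Then I split into two cases exactly as in Observation~\ref{obser:negative}: either $l(p)$ is itself a projection cut-line of $t$, in which case $p$ yields a type-2 Steiner point on $l(p)$ above the horizontal line through $t$, producing a gateway of $V_g^2(t,G)$ strictly to the right of $q_h$ and contradicting the maximality of $q_h$; or $l(p)$ fails to be a projection cut-line because some ancestor cut-line $l$ in $\calT$ between $t$ and $l(p)$ blocks horizontal visibility. Choosing $l$ to be the highest such ancestor makes $l$ itself a projection cut-line of $t$, and the type-2 Steiner point $p'$ that $p$ defines on $l$ is again vertically visible to $l_h(t)$ and lies to the right of $q_h$, the same contradiction.

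The main obstacle is just the careful bookkeeping to make the symmetry between ``Observation~\ref{obser:negative} about $e_u$/$q_1$'' and ``this observation about $e_r$/$q_h$'' precise, since the cut-lines in $\calT$ are all vertical and thus not themselves symmetric under the swap of axes; I must verify that the gateway-contradiction argument still goes through for a rightmost (rather than leftmost) gateway of $t$ in the first quadrant. Once the visibility claim is established, the final step is immediate: $\overline{q_hq_h^r}$ lies on $e_r$, the trapezoid $\square(q_h,w_h,t^r,q_h^r)$ is the union of the vertex-free rectangle $R(w_h,q_h^r\text{-projected})$ (analogous to $R(w_{i-1},p_i)$ in Observation~\ref{obser:gatewayregion}) with the vertex-free wedge bounded by $\overline{t^rq_h^r}$ and $e_r$, and no polygon vertex appears except possibly on the top edge $\overline{q_hq_h^r}$, completing the proof.
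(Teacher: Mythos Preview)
Your attempt to obtain this observation by axis-swapping the proof of Observation~\ref{obser:negative} does not go through, and the confusion shows up already in the visibility claim. The two trapezoids are not mirror images: in Observation~\ref{obser:negative} the trapezoid has $t$ as a corner and bottom edge $\overline{tw_1}$, whereas here the trapezoid $\square(q_h,w_h,t^r,q_h^r)$ does \emph{not} contain $t$ and its bottom edge is $\overline{w_ht^r}$, not $\overline{tw_h}$. Your stated claim---that no polygon vertex with $x(t)<x(p)<x(t^r)$ is ``horizontally visible to $\overline{tw_h}$''---is about the wrong segment (and horizontal visibility to a horizontal segment is not even well-defined). Consequently the parenthetical reasoning ``$x(p)<x(q_h)$ would put $p$ inside the already-proved-polygon-vertex-free region'' and the case split on $l(p)$ never get off the ground.

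The paper's argument is structurally different from Observation~\ref{obser:negative}, precisely because cut-lines are vertical and the asymmetry cannot be dodged. The correct claim is: no polygon vertex $p$ with $y(w_h)<y(p)<y(q_h)$ is \emph{vertically} visible to $\overline{w_ht^r}$. For the lowest such $p$, one observes that $p$ lies strictly to the right of $l(q_h)$ and (since $\overline{q_hw_h}\cup\overline{w_ht^r}\subseteq\calP$) is horizontally visible to $l(q_h)$. Now the key step is \emph{not} to look at $l(p)$, but to ask whether $p$ defines a type-2 Steiner point on $l(q_h)$ itself: it cannot, because $y(p)<y(q_h)$ would then contradict $q_h$ being the gateway immediately above $t$'s projection on $l(q_h)$. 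Hence some ancestor cut-line $l$ of $l(q_h)$ lies between $l(q_h)$ and $p$; this $l$ is between $l(q_h)$ and $t^r$, so $t$ is horizontally visible to it and it is a projection cut-line of $t$. The Steiner point that $p$ defines on $l$ then yields a gateway strictly to the right of $q_h$, the desired contradiction. Once this claim is in hand, the trapezoid conclusion follows exactly as you describe in your last paragraph.
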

\begin{proof}
We first claim that no polygon vertex above $w_h$ and strictly below
$q_{h}$ is vertically visible to $\overline{w_ht^r}$.
Indeed, assume to
the contrary this is not true. Let $p$ be the lowest such vertex. Note that $p$ cannot be on $\overline{q_hw_h}$ since otherwise $q_h$ would not be a gateway of $t$.
Let $l(q_{h})$ be the cut-line through $q_{h}$.
Since $\overline{q_{h}w_h}\cup \overline{w_ht^r}$ is in $\calP$, $p$ must be horizontally visible to $l(q_{h})$.
Due to $y(p)<y(q_{h})$ and $q_{h}$ is a gateway of $t$, $p$ cannot
define a type-2 Steiner point on $l(q_h)$. Hence, there is a cut-line
$l$ between $p$ and $l(q_{h})$ such that $l$ is an ancestor of
$l(q_{h})$ in $\calT$. We let $l$ be the highest such ancestor. Hence,
$p$ defines a type-2 Steiner point $p'$ at $l$. Since $l$ is between
$l(q_{h})$ and $t_r$, $t$ is horizontally visible to $l$. Since $l$ is
an ancestor of $l(q_{h})$ and $l(q_{h})$ is a projection cut-line of
$t$, $l$ must be a projection cut-line of $t$. Since $p'$ is a type-2
Steiner point vertically visible to $\overline{tt^r}$, $l$ also has a
gateway of $V_g^2(t,G)$ above $t$. But this contradicts with that
$q_{h}$ is the rightmost gateway of $V_g^2(t,G)$ in the first quadrant
of $t$.
%ince $l$ is strictly to the right of $q_h$.

As $\overline{q_hw_h}\cup \overline{w_ht^r}$ is in $\calP$, the above claim implies that the right projection $q_h^r$ of $q_h$ is on $e_r$, and the trapezoid $\square(q_h,w_h,t^r,q_h^r)$ is in $\calP$ and does not contain any polygon vertex except on the top edge $\overline{q_hq_h^r}$.
\qed
\end{proof}

\begin{figure}[t]
\begin{minipage}[t]{0.49\linewidth}
\begin{center}
\includegraphics[totalheight=1.0in]{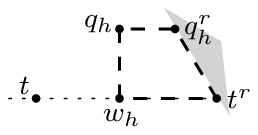}
\caption{\footnotesize
Illustrating the region $R_1(q_{h})$ (bounded by the thick dashed segments) in the case where the slope of $e_r$ is negative.}
\label{fig:qhneg}
\end{center}
\end{minipage}
\hspace{0.05in}
\begin{minipage}[t]{0.49\linewidth}
\begin{center}
\includegraphics[totalheight=1.0in]{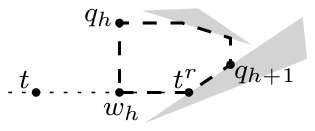}
\caption{\footnotesize
Illustrating the region $R_1(q_{h})$ (bounded by the thick dashed segments) in the case where the slope of $e_r$ is positive.}
\label{fig:rightside}
\end{center}
\end{minipage}
\vspace*{-0.15in}
\end{figure}

%Next, we define another region containing the rightward projection $t^r$ of $t$.

%Recall that we have assumed that there does not exist a trivial shortest \st\ path, which helps prove the following observation.

\begin{observation}\label{obser:qhpos}
If the slope of $e_r$ is positive, e.g., see Fig.~\ref{fig:rightside}, define $q_{h+1}$ to be the first point of $\calV_1$ on $e_r$ above $t^r$.
\begin{enumerate}
\item
$y(q_{h+1})\leq y(q_{h})$, and the two points $q_h$ and $q_{h+1}$ are in
staircase positions (with respect to $t$).
\item
The region bounded by $\overline{q_{h}w_h}\cup
\overline{w_ht^r}\cup \overline{t^rq_{h+1}}$ and the staircase path from
$q_{h+1}$ to $q_{h}$ is in $\calP$, and does not contain any polygon
vertex except on the vertical edge (incident to $q_{h+1}$) and the
horizontal edge (incident to $q_h$) in the staircase path from
$q_{h+1}$ to $q_h$.
\end{enumerate}
In this case, we define $R_1(q_h)$ as the region specified above.
\end{observation}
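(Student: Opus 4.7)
The plan is to mirror the structure of Observations \ref{obser:postive} and \ref{obser:qhneg}, combined in the natural way: the ``positive slope'' flavor comes from Observation~\ref{obser:postive} (which gives the staircase region construction and the definition of $q_{h+1}$ as the first $\calV_1$ point on $e_r$), while the ``right projection below the gateway'' argument comes from Observation~\ref{obser:qhneg} (which handles the rightward direction and the maximality of $q_h$). The first step is to establish two visibility claims.

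\smallskip
\noindent\textbf{Claim (A).} No polygon vertex lying above $w_h$ and strictly below $q_h$ is vertically visible to $\overline{w_ht^r}\setminus\{w_h\}$. The proof is verbatim the one in Observation~\ref{obser:qhneg}: such a vertex $p$ would, via $\overline{q_hw_h}\cup\overline{w_ht^r}\subseteq\calP$, be horizontally visible to the cut-line $l(q_h)$; since $q_h$ is a gateway, $p$ cannot define a type-2 Steiner point on $l(q_h)$, so there is a highest proper ancestor $l$ of $l(q_h)$ between $p$ and $l(q_h)$, and $l$ is forced to be a projection cut-line of $t$ with a gateway of $V_g^2(t,G)$ in the first quadrant of $t$ to the right of $q_h$, contradicting maximality of $q_h$.

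\smallskip
\noindent\textbf{Claim (B).} No polygon vertex is horizontally visible to $\overline{t^rq_{h+1}}\setminus\{q_{h+1}\}$. Otherwise such a vertex's leftward projection would be a point of $\calV_1$ lying on $\overline{t^rq_{h+1}}\setminus\{q_{h+1}\}$, contradicting the definition of $q_{h+1}$ as the first $\calV_1$ point on $e_r$ above $t^r$.

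\smallskip
With these in hand, I would prove part (1) by contradiction. Assume $y(q_{h+1})>y(q_h)$. Because $e_r$ has positive slope and passes through $t^r$ with $y(t^r)<y(q_h)<y(q_{h+1})$, the line containing $e_r$ crosses the horizontal line $y=y(q_h)$ at a point $u$ lying on $e_r$ strictly between $t^r$ and $q_{h+1}$. Claims (A) and (B), together with $\overline{q_hw_h}\cup\overline{w_ht^r}\subseteq\calP$ and $e_r\subseteq\partial\calP$, force the trapezoid $\square(q_h,w_h,t^r,u)$ to lie in $\calP$ (no obstructing polygon vertex can appear in its interior). Consequently the horizontal segment $\overline{q_hu}$ is in $\calP$, so $u=q_h^r\in\calV_1$, contradicting the definition of $q_{h+1}$. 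Hence $y(q_{h+1})\leq y(q_h)$; combined with $x(q_{h+1})>x(q_h)$ (since $q_{h+1}$ lies on $e_r$ above $t^r$ and $x(t^r)=x(w_h)=x(q_h)$), $q_h$ and $q_{h+1}$ are in staircase positions with respect to $t$.

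Part (2) then follows by a case analysis on whether the staircase path between $q_{h+1}$ and $q_h$ goes through $\overline{q_{h+1}q_{h+1}^l}\cap\overline{q_hq_h^u}$ or along a shared polygon edge, in exactly the same style as Observation~\ref{obser:postive}(2): the interior of the region is covered by the rectangle $R(w_h,(x(q_{h+1}),y(q_h)))$ plus a ``pentagon corner'' near $q_{h+1}$; Claim~(A) rules out polygon vertices in the former, Claim~(B) rules them out along the $e_r$-side of the latter, and the analog of the proof of Observation~\ref{obser:gatewayregion} rules them out in the remaining corner. The main obstacle I expect is the contradiction step for part (1): one must be careful to argue that, despite $e_r$ cutting across the would-be trapezoid, no polygon edge or vertex can enter the interior of $\square(q_h,w_h,t^r,u)$, which is precisely what Claims (A) and (B) are designed to preclude.
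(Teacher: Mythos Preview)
Your two–claim skeleton is exactly the paper's, and Claim~(A) is verbatim the paper's first claim. The problem is Claim~(B): you copied the phrasing from Observation~\ref{obser:postive} without performing the $90^\circ$ rotation that carries $(t^u,e_u,q_1)$ to $(t^r,e_r,q_h)$. Under that rotation, ``horizontally visible to $\overline{t^uz}$'' becomes ``\emph{vertically} visible to $\overline{t^rq_{h+1}}$'' and ``leftward projection'' becomes ``\emph{downward} projection''. Indeed, the paper's second claim is precisely: \emph{no polygon vertex above $t^r$ and strictly below $q_h$ is vertically visible to $\overline{t^rq_{h+1}}\setminus\{q_{h+1}\}$}, proved because $p^d\in\calV_1$. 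With that vertical version, the two claims combine cleanly with the standard lowest–vertex argument: any intruding hole must enter through the top, its lowest vertex $p$ has $p^d$ on $\overline{w_ht^r}\cup\overline{t^rq_{h+1}}$, and one of the two claims fires. Your horizontal version of (B) is a true statement, but it does not mesh with the lowest–vertex argument: for a vertex $p$ inside the region with $x(p)>x(t^r)$ you would need $p$ to be \emph{horizontally} visible to $e_r$, which is exactly what is not yet established (nothing rules out another edge blocking the rightward ray from $p$). So the step ``Claims (A) and (B) force the trapezoid to lie in $\calP$'' has a gap as written; switching Claim~(B) to its vertical/rotated form closes it.

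Two smaller slips, both symptoms of not carrying out the rotation:
(i)~$x(t^r)\neq x(w_h)$; by definition $x(w_h)=x(q_h)$, while $x(t^r)\ge x(q_h)$ since $t$ is horizontally visible to the cut-line $l(q_h)$. The desired inequality $x(q_{h+1})>x(q_h)$ still holds via $x(q_{h+1})>x(t^r)\ge x(q_h)$.
(ii)~For $q_h$ northwest of $q_{h+1}$, the staircase path uses $\overline{q_hq_h^{\,r}}$ and $\overline{q_{h+1}q_{h+1}^{\,u}}$, not $\overline{q_{h+1}q_{h+1}^{\,l}}$ and $\overline{q_hq_h^{\,u}}$.
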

\begin{proof}
Since $q_h$ is type-2 Steiner point, its right projection on
$\partial\calP$ is in $\calV_1$. Based on this and due to $\overline{q_hw_h}\cup \overline{w_ht^r}\subseteq\calP$, we can show $y(q_{h+1})\leq y(q_{h})$. The analysis is similar as before and we omit the details.

By the same analysis as in the proof of Observation~\ref{obser:qhneg}, we can show that no polygon vertex above $w_h$ and strictly below
$q_{h}$ is vertically visible to $\overline{w_ht^r}$.

We claim that no polygon vertex above $t_r$ and strictly below
$q_{h}$ is vertically visible to $\overline{t^rq_{h+1}}\setminus\{q_{h+1}\}$.
Assume to the contrary this is not true. Let $p$ be such a vertex.
Then, the downward projection $p^d$ of $p$ is at
$\overline{t^rq_{h+1}}\setminus\{q_{h+1}\}$. But this contradicts with the
definition of $q_{h+1}$ since $p^d$ is in $\calV_1$.

The above two claims, together with $\overline{q_{h}w_h}\cup
\overline{w_ht^r}\cup \overline{t^rq_{h+1}}$ is in $\calP$, lead to that $q_h$ and $q_{h+1}$ are in
staircase positions and the region specified in the observation is in $\calP$.
Further, as discussed before, neither $\overline{q_hw_h}\setminus\{q_h\}$
nor $\overline{w_ht^r}\setminus\{t^r\}$ contains a polygon vertex of
$\calP$. This proves the observation.
\qed
\end{proof}

%In light of the above observation, we define $R_1(q_h)$ as the region specified in the observation.

\subsubsection{A summary of the extended gateway region $R(t)$}

The above defined $R_1(q_1)$ and $R_1(q_h)$, and in some cases we also defined $q_0$ and $q_{h+1}$, both from $\calV_1$. We consider $q_0$ and $q_{h+1}$ as two special gateways for $t$ and include them in $V(t)$.
Note that both $q_0$ and $q_{h+1}$ can be computed in additional $O(\log n)$ time.

We perform the following {\em cleanup} procedure as part of our query
algorithm. If two consecutive gateways $q_i$ and $q_{i+1}$ for any
$i\in [0,h]$ have the same $x$-coordinate, then we remove $q_{i+1}$
from $V(t)$. The reason is that for any point $p\in \calP$ such that a
shortest path from $s$ to $p$ contains a $q_{i+1}$, there must be a
shortest path from $s$ to $p$ that contains $q_i$ because there is a
shortest path from $t$ to $q_{i+1}$ that contains $q_i$.
%Intuitively, $q_i$ can replace the role of $q_{i+1}$ as a gateway.
The cleanup procedure can be done in $O(n_t)$ time. Without loss of
generality, we assume that none of the gateways $q_0$ (if exists),
$q_1,q_2,\ldots,q_h$, $q_{h+1}$ (if exists)
has been removed by the cleanup procedure since otherwise we could
simply re-index them.
The following observation follows from our definition of $q_0$ and $q_{h+1}$ as well as the cleanup procedure.

\begin{observation}\label{obser:sorted}
The gateways $q_0$ (if exists), $q_1, \ldots, q_{h}$, and $q_{h+1}$ (if exists) are sorted by $x$-coordinate in strictly increasing order and also sorted by $y$-coordinate in strictly decreasing order.
\end{observation}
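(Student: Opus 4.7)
The plan is to split the claim into three components: the interior gateways $q_1,\dots,q_h$ themselves, the extension by $q_0$ on the left, and the extension by $q_{h+1}$ on the right. Each component will be settled by invoking one of the preceding observations together with the cleanup procedure and the general-position assumption; no new geometric argument should be needed.

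For the interior gateways I would recall the fact, recorded at the start of this subsection, that when listed left to right the gateways of $t$ in the first quadrant have non-increasing $y$-coordinates. The redundancy removal collapses any trailing group of equal $y$ into its first member, so that no other gateway in the list shares the minimum $y$; combined with the general-position assumption (each $q_i$ is the horizontal projection of a distinct polygon vertex, and no two polygon vertices share a $y$-coordinate), this yields strict decrease of $y$ from $q_1$ to $q_h$. The cleanup procedure then removes any two consecutive gateways with equal $x$-coordinate, so after both steps $q_1,\dots,q_h$ are strictly increasing in $x$ and strictly decreasing in $y$.

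For $q_0$ I would run the case analysis dictated by the definition of $R_1(q_1)$. In the case of Observation~\ref{obser:negative} and in case~1 of Observation~\ref{obser:postive}, $q_0$ is undefined and nothing needs to be shown. In case~2 of Observation~\ref{obser:postive}, $q_0=z$ lies on $e_u$ strictly above $t^u$, whence $y(q_0)>y(t^u)>y(q_1)$ and $x(q_0)=x(z)<x(q_1)$ follow immediately from the hypotheses of that subcase. In Observation~\ref{obser:posabove}, $q_0=q_1^l$ is strictly to the left of $q_1$ because the associated pentagon is non-degenerate, while the $y$-ordering with respect to $q_1$ comes from the cleanup together with general position applied to $q_0\in\calV_1$ and the polygon vertex defining the type-2 Steiner point $q_1$.

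The treatment of $q_{h+1}$ is symmetric, and only Observation~\ref{obser:qhpos} is relevant. There $q_{h+1}$ lies on $e_r$ strictly above $t^r$, so $x(q_{h+1})>x(t^r)>x(w_h)=x(q_h)$, and $y(q_{h+1})\leq y(q_h)$ is already an explicit conclusion of that observation; strict inequality in $y$ comes once again from general position together with the cleanup. I expect the main obstacle to lie not in the geometry, which is already encapsulated in the earlier observations, but in the bookkeeping needed to identify which subcase of the piecewise definition of $R_1(q_1)$ or $R_1(q_h)$ is in force and to track which strict inequalities come from the earlier observations, which from the cleanup, and which from the general-position assumption.
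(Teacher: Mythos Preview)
The paper gives no proof beyond the sentence that the observation ``follows from our definition of $q_0$ and $q_{h+1}$ as well as the cleanup procedure,'' so your proposal is exactly the intended expansion of that sentence: case-split over Observations~\ref{obser:negative}--\ref{obser:qhpos} for the two special gateways, invoke the cleanup procedure for strict $x$-increase, and appeal to general position. In that sense the approaches match.

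One step in your treatment of the interior gateways is not airtight. You write that ``each $q_i$ is the horizontal projection of a distinct polygon vertex'' and fold this into the general-position assumption, using it to upgrade the non-increasing $y$-coordinates to strictly decreasing ones. But general position only says that distinct polygon vertices have distinct $y$-coordinates; it does not by itself rule out two gateways on different projection cut-lines being horizontal projections of the \emph{same} polygon vertex (this can occur whenever a vertex lies in $\calV(u_i)\cap\calV(u_{i+1})$ and is horizontally visible to both cut-lines), and in that case the two gateways would share a $y$-coordinate. Neither the trailing-$y$ removal nor the $x$-coordinate cleanup resolves this. The paper does not address this point either, so the looseness may lie in the stated observation rather than in your proof; still, as written your parenthetical justification is circular, and you should either supply a direct argument that consecutive interior gateways are defined by different polygon vertices or note explicitly that strict $y$-monotonicity of the interior block is being taken as given.
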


The definition of $R_1(t)$ is thus complete. So is the extended
gateway region $R(t)$, since  sub-regions of $R(t)$ in other quadrants of $t$
are defined similarly.
If we store the four projections on $\partial\calP$ for each
Steiner point of $G$ (this costs $O(n\log n)$ additional space),
then $R(t)$ can be explicitly computed in
$O(\log n)$ time.

Note that some edges of the boundary of $R(t)$ are on polygon edges, and we call other edges {\em transparent edges} (e.g., see Fig.~\ref{fig:trans}). We refer to the {\em outside} of $R(t)$
as the points of $\calP$ that are either not in $R(t)$ or on the transparent edges.
Clearly, for any point $p$ of $\calP$ outside $R(t)$, any path from $p$ to $t$ in $\calP$ must intersect a transparent edge of $R(t)$.

Lemma~\ref{lem:extend} given earlier summarizes some properties of $R(t)$ that will be
used later in our algorithm. We formally prove it below.

%\begin{lemma}\label{lem:extend}
%\begin{enumerate}
%\item
%The point $t$ is visible to each gateway in $V(t)$.
%\item
%$R(t)$ is in $\calP$.
%
%
%\item
%For any point $p$ outside $R(t)$, there is a shortest path from $p$ to $t$ that contains a gateway in $V(t)$, and no shortest path from $p$ to $t$ contains more than one gateway of $V(t)$.
%
%\item
%For any point $p$ on a transparent edge of $R(t)$, if $q$ is a gateway of $V(t)$ such that $\overline{pq}$ is on a transparent edge of $R(t)$, then $\overline{pq}\cup \overline{qt}$ is an $xy$-monotone path.
%
%\item
%For any point $p$ on a transparent edge of $R(t)$, if a shortest path $\pi(p,t)$ from $p$ to $t$ contains a gateway $q$ of $V(t)$, then $\overline{pq}$ is in $\pi(p,t)$ and is on a transparent edge of $R(t)$.
%
%\end{enumerate}
%\end{lemma}

\paragraph{Proof of Lemma~\ref{lem:extend}.}
%\begin{proof}
The first and second parts of the lemma can be seen from the
definition of $R(t)$ along with Observations~\ref{obser:gatewayregion},
\ref{obser:rectangle}, \ref{obser:negative}, \ref{obser:posabove},
\ref{obser:postive}, \ref{obser:qhneg}, and
\ref{obser:qhpos}.

For the third part, any shortest path $\pi(p,t)$ from $p$ to $t$ must intersect a point at a transparent edge of $R(t)$. Observe that each transparent edge is either horizontal or vertical. Further, for each transparent edge $e$, it always has an endpoint $v$ such that for each point $q\in e$, $\overline{vq}\cup\overline{qt}$ is a $xy$-monotone path from $q$ to $t$, and thus is a shortest path. Therefore, we obtain that there is a shortest path from $p$ to $t$ containing a gateway. On the other hand, assume to the contrary that the path contains two gateways $a$ and $b$ of $V(t)$. Without loss of generality, assume that we meet $a$ first if we move from $p$ to $t$ on the path, and thus $b$ is in the sub-path from $a$ to $t$. This implies that $b$ must be in the rectangle $R(t,a)$ since there is an $xy$-monotone path from $a$ to $t$. However, this is not possible according to our definition of $V(t)$ (in particular, due to the cleanup procedure).

The fourth part follows immediately from the above discussion.

For the fifth part, since there is an $xy$-monotone path from $t$ to $p$, $\pi(p,t)$ must be in the rectangle $R(p,t)$. Thus, $q$ is in $R(p,t)$. According to our definition of $R(t)$, $\overline{pq}$ must be on a transparent edge and $q$ is an endpoint of the edge. Since every transparent edge is either vertical or horizontal, $\overline{pq}$ is either horizontal or vertical, and thus $\overline{pq}$ is the only shortest path from $p$ to $q$. This implies that $\overline{pq}$ is in $\pi(p,t)$.
\qed
%\end{proof}

\begin{figure}[t]
\begin{minipage}[t]{\linewidth}
\begin{center}
\includegraphics[totalheight=1.2in]{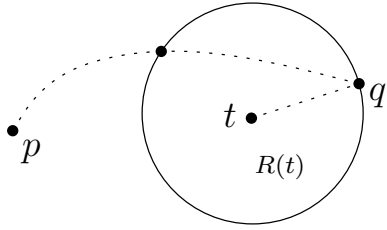}
\caption{\footnotesize
The following situation cannot occur: A shortest path  (the dotted
curve) from $t$ to a point $p$ outside $R(t)$ separates the boundary
of $R(t)$ (the solid circle) into two disjoint pieces. }
\label{fig:disconnect1}
\end{center}
\end{minipage}
\vspace*{-0.15in}
\end{figure}

\paragraph{Remark.} Lemma~\ref{lem:extend}(5) guarantees that for any
point $p$ outside $R(t)$, a shortest path $\pi(p,t)$ cannot separate
the boundary of $R(t)$ into two disjoint pieces (e.g., see Fig.~\ref{fig:disconnect1}).

\subsection{The Query Algorithm}

We have all necessary geometric prerequisites ready for explaining our algorithm.

Consider the gateway region $R(s)$ of $s$. Note that for any $p_i\in
V(s)$, there is always a shortest path from $s$ to $p_i$ containing
$s_1$ as there is an $xy$-monotone path from $s$ to $s_1$ in $\calP$.
Recall that we have assumed that there exists a shortest \st\ path that contains a gateway of $V(s)$.
The above implies that there exists a shortest path from $s_1$ to $t$
that contains a gateway of $V(s)$, and if we can find such a path,
by attaching an $xy$-monotone path from $s$ to $s_1$ to the path,
we can obtain a shortest \st\ path. For convenience, in the following,
we will focus on finding a shortest path from $s_1$ to
$t$ that contains a gateway of $V(s)$. By slightly abusing the notation, we still use $s$ to represent $s_1$. Again, our goal is to find a via gateway of $s$ in $V(s)$.

We first check whether there is a trivial shortest \st\ path in
$O(\log n)$ time. If yes, we are done. Otherwise, we proceed as follows. We begin with the following lemma.

\begin{lemma}\label{lem:disjoint}
If $R(t)$ contains a gateway $p$ of $V(s)$, then $\overline{sp}\cup\overline{pt}$ is a shortest \st\ path; otherwise, $R(s)$ does not intersect $R(t)$.
\end{lemma}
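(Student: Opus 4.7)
The plan is to prove the two implications of the lemma separately.

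For the forward direction (Part 1), suppose $p \in V(s) \cap R(t)$. Because $p$ is a gateway of $s$ in the first quadrant, $\overline{sp}$ lies in $\calP$ and is $xy$-monotone, so $d(s,p) = |\overline{sp}|$. Because $p \in R(t) \subseteq \calP$ (Lemma~\ref{lem:extend}(2)), and the explicit decomposition of $R(t)$ into $R_1'(t)$, the staircase regions $R_t(q_{i-1},q_i)$, the rectangle $R(w_1,q_h)$, and the special regions $R_1(q_1)$, $R_1(q_h)$ (Observations~\ref{obser:gatewayregion}--\ref{obser:qhpos}) guarantees an $xy$-monotone path from any of its points to $t$, we get $d(p,t) = |\overline{pt}|$. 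Therefore $\overline{sp}\cup\overline{pt}$ is a valid $s$-$t$ path of length $|\overline{sp}| + |\overline{pt}|$. To conclude this is a \emph{shortest} $s$-$t$ path, I will verify that $p$ lies in the axis-aligned rectangle $R(s,t)$ determined by $s$ and $t$, which gives $|\overline{sp}| + |\overline{pt}| = |\overline{st}| \leq d(s,t)$ and hence, with the triangle inequality, equality. The containment $p \in R(s,t)$ follows by combining $p$ northeast of $s$ (it is a first-quadrant gateway) with a structural argument that no point of $R(t)$ northeast of $s$ can lie northeast of $t$ without violating the ``no trivial shortest path'' standing assumption together with the minimality built into $V(s)$ and $V(t)$.

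For the converse (Part 2), I prove the contrapositive: if $R(s)\cap R(t)\neq\emptyset$, then $V(s)\cap R(t)\neq\emptyset$. Fix any $x\in R(s)\cap R(t)$ and perform an upward axis-aligned sweep inside $R(s)$ until hitting its ceiling $\beta_s$ at a point $y^\star$. I will argue that this segment stays inside $R(t)$: by the remark after Lemma~\ref{lem:extend}, no shortest path from any point to $t$ can disconnect $\partial R(t)$, and since $R(t)$ admits $xy$-monotone access to $t$ everywhere, the upward segment cannot exit $R(t)$ before reaching $\beta_s$. Thus $y^\star\in R(t)$. The ceiling $\beta_s$ is a staircase whose edges alternate between vertical pieces $\overline{p'_{i+1}p_{i+1}}$ and horizontal pieces $\overline{p_ip'_{i+1}}$, with each $p_i\in V(s)$. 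A short case analysis---on which type of edge contains $y^\star$---shows that at least one of the nearby gateways $p_i$ lies in $R(t)$, using that $\partial R(t)$ is made of axis-parallel transparent edges and polygon-edge segments.

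The main obstacle is Part 2: rigorously confirming the upward sweep stays in $R(t)$, and then extracting a specific gateway from the staircase segment containing $y^\star$. This requires carefully interleaving the two staircase geometries of $R(s)$ and $R(t)$ and invoking both Lemma~\ref{lem:extend}(4) and the ``no disconnecting shortest path'' remark. For Part 1 the subtle point is the geometric claim that first-quadrant gateways of $s$ inside $R(t)$ lie in the bounding rectangle $R(s,t)$, which relies on the minimality properties that define both gateway sets rather than a purely topological argument.
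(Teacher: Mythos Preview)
Your proposal has genuine gaps in both parts.

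\textbf{Part 1.} To conclude that $\overline{sp}\cup\overline{pt}$ is a shortest path (and in particular a path in $\calP$), you need $\overline{pt}\subseteq\calP$, not just $d(p,t)=|\overline{pt}|$; the latter follows from an $xy$-monotone path, but visibility of $p$ from $t$ does not. More seriously, your argument that $p\in R(s,t)$ is incomplete: you only propose to show that $p$ is not \emph{northeast} of $t$. That still allows $p$ to be northwest or southeast of $t$, in which case $|\overline{sp}|+|\overline{pt}|\neq |\overline{st}|$. (It also implicitly assumes $t$ is northeast of $s$, which is not given.) The paper closes this by a full four-quadrant case analysis: assuming $p\in R(s)\cap R(t)$, it shows that if $t$ lies in the second, third, or fourth quadrant of $p$ then a \emph{trivial} shortest \st\ path exists, contradicting the standing assumption; hence $t$ is in the first quadrant of $p$, and since $s$ is in the third quadrant of $p$, the concatenation has length $|\overline{st}|$.

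\textbf{Part 2.} The upward-sweep idea does not work as stated. There is no reason the vertical segment from $x$ up to $\beta_s$ stays in $R(t)$: if, for instance, $x$ lies in the first-quadrant piece $R_1(t)$ (so $x$ is already northeast of $t$), moving upward moves away from $t$ and you exit $R(t)$ immediately through its staircase boundary. The ``no disconnecting shortest path'' remark concerns paths from a point \emph{outside} $R(t)$ to $t$; it says nothing about a segment moving inside $R(s)$. Even granting $y^\star\in R(t)\cap\beta_s$, the final step---extracting a nearby gateway $p_i\in R(t)$---would require sliding along $\beta_s$ while staying in $R(t)$, which again is not justified. The paper's argument is quite different: having already established (from Part~1's quadrant analysis) that $t$ is northeast of any $p\in R(s)\cap R(t)$, it takes the gateway $p_{i+1}$ immediately to the right of $p$ and, by a two-case analysis on whether $x(t)<x(p_{i+1})$, shows that either $t$ is vertically visible to the bottom boundary of $R(s)$ or horizontally visible to its left boundary---each yielding a trivial shortest \st\ path and a contradiction.
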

\begin{proof}
Assume that there is a point $p$ that is in both $R(s)$ and $R(t)$. In the following, we
first show that $t$ must be in the first quadrant of $p$.

By the definition of $R(s)$, the rectangle $R(s,p)$ is in
$R(s)$. For the rectangle $R(t,p)$, it may not be in $R(t)$, but this
only happens when one (or both) of its other two corners than $t$ and
$p$ is cut by a polygon edge (e.g., see Fig.~\ref{fig:tripath}).
In particular, we have the following observation: (1) If a point $q\in
R(t,p)$ is visible to $p$, then $q$ is also visible to $t$ and there
are two trivial shortest paths from $q$ to $t$ whose edges incident to
$q$ are vertical and horizontal, respectively (e.g., see
Fig.~\ref{fig:tripath}).

Note that $t$ cannot be in $R(s)$, since otherwise there would be a
trivial shortest \st\ path, a contradiction. Let $w_1$ and $w_2$ be the other two
corners of $R(p,t)$ such that $p,w_1,t,w_2$ are ordered clockwise on the boundary of $R(p,t)$.

Assume to the contrary that $t$ is not in the first quadrant of $p$.
Then, $t$ is in the second, third, or fourth quadrant of $p$. In the
following we will show that in each case there is a trivial
shortest \st\ path, which incurs contradiction.

If $t$ is in the second quadrant of $p$, then depending on whether $x(t)\geq x(s)$, there are two subcases.

\begin{figure}[t]
\begin{minipage}[t]{0.52\linewidth}
\begin{center}
\includegraphics[totalheight=0.8in]{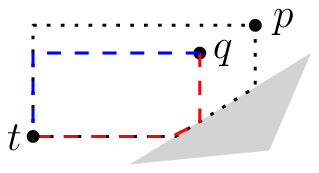}
\caption{\footnotesize
The region bounded by the dotted segments are the region of $R(t,p)$
contained in $R(t)$. The blue and red paths are two trivial shortest
paths from $q$ to $t$ whose edges incident to $q$ are vertical and
horizontal, respectively.}
\label{fig:tripath}
\end{center}
\end{minipage}
\hspace{0.05in}
\begin{minipage}[t]{0.47\linewidth}
\begin{center}
\includegraphics[totalheight=1.3in]{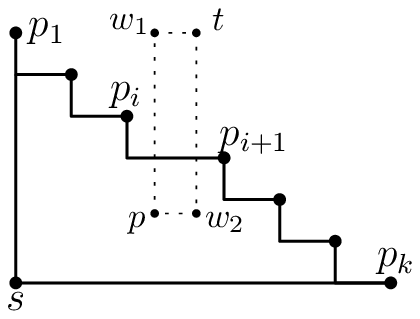}
\caption{\footnotesize
Illustrating the case where $t$ is in the first quadrant of $p$.}
\label{fig:quadrant1}
\end{center}
\end{minipage}
\vspace*{-0.15in}
\end{figure}

\begin{enumerate}
\item
If $x(t)\geq x(s)$, then $\overline{pw_1}$ must be in $R(s)$.
By the above observation, $t$ is visible to $w_1$ and thus is
vertically visible to the bottom boundary of $R(s)$. This implies that
there is a trivial shortest \st\ path.

\item
If $x(t)<x(s)$, then let $w$ be the intersection of $\overline{pw_1}$
and the left boundary of $R(s)$. By our above observation, there is a
trivial shortest path from $w$ to $t$ such that the edge of the path
incident to $w$ is vertical. Since $y(t)\geq y(w)\geq y(s)$, if we
append $\overline{sw}$ in front of the above path, we obtain a trivial
shortest \st\ path.
\end{enumerate}

If $t$ is in the third quadrant of $p$, then since $t\not\in R(s)$, $t$
cannot be in the first quadrant of $s$. Depending on which of the
other three quadrants of $s$ contains $t$, there are further three
subcases.

\begin{enumerate}
\item
If $t$ is in the second quadrant of $s$, then $w_1$ is in $R(s)$ and
thus $t$ is visible to $w_1$. Hence, $\overline{tw_1}$ intersects the
left boundary of $R(s)$, implying that there is a trivial shortest
\st\ path.

\item
If $t$ is in the third quadrant of $s$, then $s$ is in $R(t,p)$. Since
$s$ is visible to $p$, by our above observation, there is a trivial
shortest \st\ path.

\item

If $t$ is in the fourth quadrant of $s$, then $w_2$ is in $R(s)$ and
thus $t$ is visible to $w_2$. Hence, $\overline{tw_2}$ intersects the
bottom boundary of $R(s)$, implying that there is a trivial shortest
\st\ path.
\end{enumerate}

If $t$ is in the fourth quadrant of $p$, then
depending on whether $y(t)\geq y(s)$, there are two subcases.

\begin{enumerate}
\item
If $y(t)\geq y(s)$, then $\overline{pw_2}$ is in $R(s)$. By the above
observation, $t$ is visible to $w_2$ and is thus horizontally visible
to the left boundary of $R(s)$. Hence, there is a trivial shortest
\st\ path.

\item
If $y(t)<y(s)$, then $\overline{pw_2}$
intersects the bottom boundary of $R(s)$, say, at a point $w$. Since
$w$ is visible to $p$, by the above observation, there is a trivial
shortest path from $w$ to $t$ such that the edge of the path incident to $w$ is
horizontal. Since $x(t)\geq x(w)\geq x(s)$, if we append $\overline{sw}$ in front of the above path,
we obtain a trivial shortest \st\ path.
\end{enumerate}

The above proves that $t$ must be in the first quadrant of $p$. Since
$s$ is in the third quadrant of $p$, $\overline{sp}\cup \overline{pt}$
is a shortest \st\ path. This proves the lemma if $R(t)$ contains a gateway $p$ of $V(s)$.

In the following, we assume that $R(t)$ does not contain any gateway of $V(s)$. Our goal is to prove that $R(s)$ does not intersect $R(t)$. Assume to the contrary that this is not true and let $p$ be a point in  $R(s)\cap R(t)$. According to the above discussion, $t$ must be in the first quadrant of $p$. In the following, we show that there exists a trivial shortest \st\ path, which incurs contradiction.

Let $i\in [1,k]$ be the largest index such that $x(p_i)\leq x(p)$ (e.g., see Fig.~\ref{fig:quadrant1}).
Recall that the rightmost point of $R(s)$ is $p_k$. Since $R(t)$ does not contain any gateway of $V(s)$, $p$ is not $p_k$. This implies that $i< k$, and thus $p_{i+1}$ exists.
Depending on whether $x(t)< x(p_{i+1})$, there are two subcases.

\begin{enumerate}
\item
If $x(t)< x(p_{i+1})$, then $y(t)>y(p_{i+1})$ must hold since
otherwise $t$ would be in $R(s)$, e.g., see Fig.~\ref{fig:quadrant1}.
This implies that $\overline{pw_2}$ must be in $R(s)$. By the above
observation, $t$ is visible to $w_2$ and thus is vertically visible to
the bottom boundary of $R(s)$. This implies that there is a trivial shortest \st\ path.

\item
If $x(t)\geq x(p_{i+1})$, then $y(t)<y(p_{i+1})$ must hold since
otherwise $p_{i+1}$ would be in $R(t,p)$ and also in $R(t)$, contradicting with that $R(t)$ does not contain any gateway of $V(s)$. This implies that  $\overline{pw_1}$ must be in $R(s)$. By the above observation, $t$ is visible to $w_1$ and thus is horizontally visible to the left boundary of $R(s)$. This implies that there is a trivial shortest \st\ path.\qed
\end{enumerate}
\end{proof}

Our algorithm starts with checking whether $R(t)$
contains a gateway of $V(s)$. This can be done in $O(n_t+n_s)$ time, as
follows. We check the four quadrants of $t$ separately. Let
$R_1(t)$ be $R(t)$ in the first quadrant of $t$. To check whether
$R_1(t)$ contains a gateway of $V(s)$, we can simply scan the gateways of $V(s)$
and the gateways of $V(t)$ in $R_1(t)$ simultaneously from left to
right (somewhat like merge sort), which takes $O(n_t+n_s)$ time.
We do the same for other quadrants of $t$.

If $R(t)$ contains a gateway of $V(s)$, then by
Lemma~\ref{lem:disjoint},
we have found a shortest \st\ path. Otherwise, $R(s)$ and $R(t)$ are
disjoint and we proceed as follows.

%Lemma~\ref{lem:disjoint} also implies that for any $p\in V(s)$ and $q\in V(t)$, $\overline{sp}$ does not intersect $\overline{sq}$. In the following, we will use this observation implicitly without further discussion.

By Lemma~\ref{lem:extend}, for each $p\in V(s)$, $d(p,t)=\min_{q\in V(t)}(d(p,q)+|\overline{qt}|)$, and we call such a gateway $q$ of $V(t)$ minimizing the above value a {\em coupled gateway} of $p$ and use $c(p)$ to denote it.

Our algorithm will compute a ``candidate'' coupled gateway $c'(p)$ for every gateway $p$ of $V(s)$ such that if $p\in V(s)$ is a via gateway, then $c(p)=c'(p)$.
Therefore, once the algorithm is done, the gateway $p$ that minimizes the value $|\overline{sp}|+d(p,c'(p))+|\overline{c'(p)t}|$ is a via gateway.

For any two points $a$ and $b$ on the ceiling $\beta_s$ of $R(s)$, we
use $\beta_s[a,b]$ to denote the sub-path of $\beta_s$ between
$a$ and $b$, which is $xy$-monotone.  This means that we can compute
$d(p_i,p_j)=\overline{|p_ip_j|}$ in constant time for every two
gateways $p_i$ and $p_j$ in $V(s)$.

We consider $V(t)$ as a cyclic list of points in {\em
counterclockwise} order around $t$ (we use ``counterclockwise''
since the list of $V(s)=\{p_1,p_2,\ldots,p_k\}$ are in
clockwise order around $s$).
%Note that this order can be obtained in
%$O(\log n)$ time when computing the gateways of $t$.

We first compute $c(p_1)$ in a straightforward manner, i.e., check
every gateway of $V(t)$ (since $d(p,q)$ for any $p\in V(s)$ and $q\in V(t)$ is already computed in our preprocessing). This takes $O(n_t)$ time. We also compute
$c(p_k)$ in the same way. If there are multiple $c(p_k)$'s, then we
let $c(p_k)$ refer to the first one from $c(p_1)$ in the counterclockwise
order around $t$. Further, if there is more than one $c(p_1)$ from
the current $c(p_1)$ to $c(p_k)$ in the counterclockwise order, then we
update $c(p_1)$ to the one closest to $c(p_k)$. To simplify the
notation, let $q_1=c(p_1)$ and $q_k=c(p_k)$. Note that it is possible
that $q_1=q_k$.

%\subsection{The Algorithm Correctness}

The following lemma will be useful for circumventing the ``non-ideal'' situation depicted in Fig.~\ref{fig:dif1}.
Its correctness relies on the fact that the ceiling $\beta_s$ of $R(s)$ is $xy$-monotone (and thus is a shortest path).

\begin{lemma}\label{lem:40}
For any $p_i$ of $V(s)$, if $d(p_1,p_i)+d(p_i,q_1)=d(p_1,q_1)$, then $d(p_1,p_j)+d(p_j,q_1)=d(p_1,q_1)$ and $c(p_j)=q_1$ for each $j\in [1,i]$; similarly, if $d(p_k,p_i)+d(p_i,q_k)=d(p_k,q_k)$, then $d(p_k,p_j)+d(p_j,q_k)=d(p_k,q_k)$ and $c(p_j)=q_k$ for each $j\in [i,k]$.
\end{lemma}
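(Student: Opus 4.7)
The plan is to prove the first half of the lemma; the second half is completely symmetric, exchanging the roles of $p_1$ with $p_k$ and $q_1$ with $q_k$, and using the sub-curve $\beta_s[p_j, p_k]$ instead of $\beta_s[p_1, p_i]$. The structural fact I rely on is that the ceiling $\beta_s$ of $R(s)$ is $xy$-monotone: its gateway vertices $p_1, p_2, \ldots, p_k$ have strictly increasing $x$-coordinates and (after the trimming step that removed tied $p_j$'s) strictly decreasing $y$-coordinates, so the piecewise axis-parallel polyline through them is $xy$-monotone and therefore a shortest path in $\calP$. In particular, for every $j \in [1, i]$ the gateway $p_j$ lies on a shortest path from $p_1$ to $p_i$, so $d(p_1, p_j) + d(p_j, p_i) = d(p_1, p_i)$.

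Given the hypothesis $d(p_1, p_i) + d(p_i, q_1) = d(p_1, q_1)$, I substitute this identity and apply the triangle inequality $d(p_j, p_i) + d(p_i, q_1) \geq d(p_j, q_1)$ to obtain
\[
d(p_1, q_1) \;=\; d(p_1, p_j) + d(p_j, p_i) + d(p_i, q_1) \;\geq\; d(p_1, p_j) + d(p_j, q_1).
\]
The reverse inequality $d(p_1, p_j) + d(p_j, q_1) \geq d(p_1, q_1)$ is just the triangle inequality on $p_1, p_j, q_1$, so equality holds, proving the first conclusion.

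For the second conclusion $c(p_j) = q_1$, it suffices to show that $q_1$ realizes the minimum of $d(p_j, q) + |\overline{qt}|$ over $q \in V(t)$. Fix any $q \in V(t)$. Since $q_1 = c(p_1)$, we have $d(p_1, q) + |\overline{qt}| \geq d(p_1, q_1) + |\overline{q_1 t}|$; by the triangle inequality, $d(p_1, p_j) + d(p_j, q) \geq d(p_1, q)$; and the first conclusion gives $d(p_1, q_1) = d(p_1, p_j) + d(p_j, q_1)$. Chaining these inequalities and cancelling the shared term $d(p_1, p_j)$ yields $d(p_j, q) + |\overline{qt}| \geq d(p_j, q_1) + |\overline{q_1 t}|$, so $q_1$ is a coupled gateway of $p_j$.

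The one point that deserves care, rather than a serious obstacle, is that the identity $d(p_1, p_j) + d(p_j, p_i) = d(p_1, p_i)$ must be an honest equality of $L_1$ distances in $\calP$, not merely of path-lengths along $\beta_s$; this is guaranteed by the $xy$-monotonicity of $\beta_s$ together with Observation~\ref{obser:gatewayregion}, which places $\beta_s$ entirely inside $\calP$. Once that identity is in hand, the remainder is purely triangle inequalities plus the definition of $c(p_1)$, so I do not foresee any further complication.
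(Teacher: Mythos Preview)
Your proof is correct and follows essentially the same approach as the paper: both arguments hinge on the $xy$-monotonicity of $\beta_s$ to obtain $d(p_1,p_j)+d(p_j,p_i)=d(p_1,p_i)$, and then combine this with triangle inequalities and the definition of $q_1=c(p_1)$. The only cosmetic differences are that the paper phrases the first conclusion via an explicit shortest path through $\beta_s[p_1,p_i]$ and argues the second conclusion by contradiction, whereas you chain the inequalities directly; also note that the $y$-coordinates of the $p_j$ are only non-increasing (not strictly decreasing) in general, but this does not affect the $xy$-monotonicity you need.
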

\begin{proof}
We only prove the first part of the lemma since the second part is similar.

First of all, since $d(p_1,p_i)+d(p_i,q_1)=d(p_1,q_1)$, there is a shortest path from $p_1$ to $q_1$ that contains $p_i$. Because $\beta_s[p_1,p_i]$ is $xy$-monotone, there is a shortest path $\pi(p_1,q_1)$ from $p_1$ to $q_1$ that contains $\beta_s[p_1,p_i]$. Since $\beta_s[p_1,p_i]$ contains $p_j$, $\pi(p_1,q_1)$ contains $p_j$. Therefore, $d(p_1,p_j)+d(p_j,q_1)=d(p_1,q_1)$ holds.

Next, we prove that $c(p_j)=q_1$.
Assume to the contrary that there exists a point $q\in V(t)$ such that $d(p_j,q)+d(q,t)<d(p_j,q_1)+d(q_1,t)$.
Because $\beta_s[p_1,p_i]$ is $xy$-monotone and contains $p_j$, we have $d(p_1,p_i)=d(p_1,p_j)+d(p_j,p_i)$. Therefore, we can derive the following
\begin{equation}
\begin{split}
d(p_1,q)+d(q,t) &\leq d(p_1,p_j) + d(p_j,q) + d(q,t)\\
& < d(p_1,p_j) + d(p_j,q_1)+d(q_1,t) \\
& \leq d(p_1,p_j) + d(p_j,p_i) + d(p_i,q_1)+d(q_1,t) \\
& = d(p_1,q_1) + d(q_1,t).
\end{split}
\end{equation}

%Let $\pi$ be the path from $p_1$ to $q$ that is a concatenation of $\beta_s[p_1,p_i]$ and $\pi(p_i,q)$. Then, the length of $\pi$ is equal to $d(p_1,p_i)+d(p_i,q)=d(p_1,q_1)$.
%Since $\beta_s[p_1,p_i]$ contains $p_j$, we have $d(p_1,p_i)=d(p_1,p_j)+d(p_j,p_i)$. Therefore, $d(p_1,q_1)=d(p_1,p_j)+d(p_j,p_i)$
%Therefore, we obtain $d(p_1,q)+d(q,t)\leq d(p_1,p_i)+d(p_i,q)+d(q,t)<d(p_1,p_i)+d(p_i,q_1)+d(q_1,t)= d(p_1,q_1)+d(q_1,t)$.
But this contradicts with that $q_1$ is a coupled gateway of $p_1$.
\qed
\end{proof}

%\begin{lemma}\label{lem:50}
%For any $1\leq i \leq k$, if $d(p_1,p_i)+d(p_i,q_1)=d(p_1,q_1)$, then $c(p_j)=q_1$ for every $j\in [1,i]$. Similarly, if $d(p_k,p_i)+d(p_i,q_k)=d(p_k,q_k)$, then $c(p_j)=q_k$ for every $j\in [i,k]$.
%\end{lemma}
%\begin{proof}
%We only prove the first part as the second part is similar.
%
%Note that $d(p_1,p_i)+d(p_i,q_1)=d(p_1,q_1)$ implies that there is a shortest path from $p_1$ to $q_1$ that contains a shortest path from $p_1$ to $p_i$. Since $\beta_s[p_1,p_i]$ is a shortest path from $p_1$ to $p_i$ and it contains $p_j$, there is a shortest path from $p_1$ to $q_1$ that contains $p_j$. Therefore, $d(p_1,p_j)+d(p_j,q_1)=d(p_1,q_1)$. By Lemma~\ref{lem:40}, $c(p_j)=q_1$.
%\qed
%\end{proof}

Let $a_1$ be the largest index $i\in [1,k]$ such that
$d(p_1,p_i)+d(p_i,q_1)=d(p_1,q_1)$, which can be computed in $O(a_1)$
time, as follows. Starting from $i=2$, we simply check whether
$d(p_1,p_i)+d(p_i,q_1)=d(p_1,q_1)$, which can be done in $O(1)$ time
since $d(p_1,p_i)=|\overline{p_1p_i}|$ can be computed in constant
time and $d(p_i,q_1)$ has been computed in the preprocessing. If yes, we proceed with $i+1$; otherwise, we stop the algorithm and set $a_1=i-1$. We call the above a {\em stair-walking procedure}. The correctness is due to Lemma~\ref{lem:40}.

Similarly, define $b_k$ to be the smallest index $i\in [1,k]$ such that $d(p_k,p_i)+d(p_i,q_k)=d(p_k,q_k)$. By a symmetric stair-walking procedure, we can compute $b_k$ as well.
By Lemma~\ref{lem:40}, for each $i\in [1,a_1]\cup[b_k,k]$, $c(p_i)$ is computed. Hence, if $a_1\geq b_k$, then $c(p)$ for each $p\in V(s)$ is computed and we can finish the algorithm. Otherwise, we proceed as follows.

Our analysis will repeatedly use the following simple observation.

\begin{observation}
Suppose $p$ and $q$ are two points in a path $\pi$ in $\calP$. If the length of the sub-path of $\pi$ between $p$ and $q$ is not equal to $d(p,q)$, then $\pi$ cannot be a shortest path.
\end{observation}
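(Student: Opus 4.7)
The plan is to give a direct exchange argument. Let $\pi$ be a path in $\calP$ from some endpoint $a$ to some endpoint $b$, and assume $p$ and $q$ appear on $\pi$ in this order (the other order is symmetric). I would decompose $\pi$ into three sub-paths: $\pi_1$ from $a$ to $p$, $\pi_2$ from $p$ to $q$, and $\pi_3$ from $q$ to $b$, so that $|\pi| = |\pi_1| + |\pi_2| + |\pi_3|$.

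Since $\pi_2$ is itself a path from $p$ to $q$ in $\calP$, by the definition of $d(p,q)$ we have $|\pi_2| \geq d(p,q)$. The hypothesis is $|\pi_2| \neq d(p,q)$, so the inequality is strict: $|\pi_2| > d(p,q)$. I would then let $\pi(p,q)$ be a shortest path from $p$ to $q$ in $\calP$ (of length exactly $d(p,q)$) and form the concatenation $\pi' = \pi_1 \cup \pi(p,q) \cup \pi_3$. Each of the three pieces lies in $\calP$, and they match endpoints at $p$ and $q$, so $\pi'$ is a valid path from $a$ to $b$ in $\calP$.

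Then $|\pi'| = |\pi_1| + d(p,q) + |\pi_3| < |\pi_1| + |\pi_2| + |\pi_3| = |\pi|$, so $\pi$ is strictly longer than the alternative path $\pi'$ between its own endpoints, hence $\pi$ is not a shortest path. This yields the contrapositive form the observation asks for.

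There is no real obstacle here: the statement is just the optimal-substructure property of shortest paths specialized to $\calP$ under the $L_1$ metric, and the proof is a one-line cut-and-paste exchange. The only thing to be slightly careful about is that the replacement $\pi(p,q)$ exists and lies in $\calP$, which is immediate from the definition of $d(p,q)$ as the length of a shortest path in $\calP$.
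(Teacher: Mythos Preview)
Your proof is correct; it is the standard cut-and-paste exchange argument for the optimal-substructure property of shortest paths. The paper itself treats this observation as self-evident and states it without proof, so your argument is actually more detailed than what appears in the paper.
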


Recall that $\pi_{q_1}(p_{a_1})$ is the shortest path between $q_1$ and $p_{a_1}$ in the shortest path tree $T(q_1)$, and $\pi_{q_k}(p_{b_k})$ is the shortest path in $T(q_k)$. The following two lemmas present our strategy for dealing with the non-ideal situation in which $\pi_{q_1}(p_{a_1})$ (resp., $\pi_{q_k}(p_{b_k})$) goes through the interior of $R(s)$ (e.g., see Fig.~\ref{fig:gothrough}).
%Then, we have the following lemma.

\begin{lemma}\label{lem:interior}
\begin{enumerate}
\item
The shortest path $\pi_{q_1}(p_{a_1})$ contains a point in the interior of $R(s)$ only if its last edge (i.e., the edge incident to $p_{a_1}$) intersects the bottom boundary of $R(s)$, in which case the intersection at the bottom boundary of $R(s)$ has $x$-coordinate in $[x(s),x(p_{a_1+1})]$.
\item
The shortest path $\pi_{q_k}(p_{b_k})$ contains a point in the interior of $R(s)$ only if its last edge (i.e., the edge incident to $p_{b_k}$) intersects the left boundary of $R(s)$, in which case the intersection at the left boundary of $R(s)$ has $y$-coordinate in $[y(s),y(p_{b_k-1})]$.
\end{enumerate}
\end{lemma}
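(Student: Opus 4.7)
Proof proposal. We prove part (1); part (2) follows symmetrically, interchanging the roles of $(p_1,q_1)$ and $(p_k,q_k)$ as well as ``left'' and ``bottom''.

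The first step is to pin down where the intermediate vertices of $\pi_{q_1}(p_{a_1})$ can sit. Because this path is produced by Mitchell's algorithm, every intermediate vertex is a polygon vertex of $\calP$; Observation~\ref{obser:gatewayregion} forbids such vertices inside $R(s)\setminus\beta_s$, and applying Observation~\ref{obser:rectangle} to each staircase region $R_s(p_{j-1},p_j)$ further restricts polygon vertices on $\beta_s$ to the gateways $p_j$ themselves. No gateway $p_j$ with $j\neq a_1$ can appear as an intermediate vertex either: if it did, then $d(q_1,p_{a_1})=d(q_1,p_j)+d(p_j,p_{a_1})$ combined with the $\beta_s$-monotonicity identity ($d(p_1,p_j)+d(p_j,p_{a_1})=d(p_1,p_{a_1})$ for $j\le a_1$, or $d(p_1,p_j)=d(p_1,p_{a_1})+d(p_{a_1},p_j)$ for $j>a_1$) would yield either $d(p_1,p_j)+d(p_j,q_1)=d(p_1,q_1)-2d(p_j,p_{a_1})<d(p_1,q_1)$ (violating the triangle inequality) or $d(p_1,p_j)+d(p_j,q_1)=d(p_1,q_1)$ with $j>a_1$ (violating the maximality of $a_1$). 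So every intermediate vertex lies strictly outside $R(s)$; in particular, $v_1$ is strictly outside $R(s)$ or equals $q_1$.

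Next I would analyze the last edge $\overline{v_1p_{a_1}}$. Locally at $p_{a_1}$ the staircase turns from the horizontal segment $\overline{p_{a_1}'p_{a_1}}$ (to the west) to the vertical segment $\overline{p_{a_1}p_{a_1+1}'}$ (to the south), so the interior of $R(s)$ near $p_{a_1}$ occupies exactly the open south-west quadrant. Hence the last edge enters the interior only if (a) $v_1$ is strictly south-west of $p_{a_1}$, or (b) $v_1$ is directly south of $p_{a_1}$ with $y(v_1)<y(p_{a_1+1})$. Case (b) is immediate: the edge coincides with $\overline{p_{a_1}p_{a_1+1}'}$ on $\beta_s$ and then descends through the interior, exiting the bottom boundary at $(x(p_{a_1}),y(s))$ whose $x$-coordinate clearly lies in $[x(s),x(p_{a_1+1})]$. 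For case (a), a direct comparison shows that at every $x<x(p_{a_1})$ the staircase (and the flap $\overline{p_1p_2'}$) sits at $y\ge y(p_{a_1})$ while the edge is at $y<y(p_{a_1})$, so once it enters the interior the edge stays strictly below $\beta_s$ and can exit only through the left boundary $\overline{s_1p_2'}$ or the bottom boundary $\overline{s_1p_k'}$.

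The crux is to rule out a left-boundary exit in case (a). Suppose such an exit occurs at $u=(x(p_1),y_u)$ with $y(s)\le y_u<y(p_{a_1})$, and compare the shortest path $\beta_s[p_1,p_{a_1}]\cdot\pi_{q_1}(p_{a_1})$ of length $d(p_1,q_1)$ against the alternative $\overline{p_1u}\cdot\overline{uv_1}\cdot\pi_{q_1}(v_1)$, whose initial leg runs along $l_1$ through the left boundary and flap (both inside $\calP\subseteq R(s)$). Using collinearity of $v_1,u,p_{a_1}$ (so $|\overline{v_1p_{a_1}}|=|\overline{v_1u}|+|\overline{up_{a_1}}|$), the staircase identity $|\beta_s[p_1,p_{a_1}]|=(x(p_{a_1})-x(p_1))+(y(p_1)-y(p_{a_1}))$, and $|\overline{p_1u}|=y(p_1)-y_u$, a one-line $L_1$-bookkeeping shows that the alternative is shorter by exactly $2(x(p_{a_1})-x(p_1))>0$, contradicting optimality. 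Hence in case (a) the exit must lie on the bottom boundary, and since $u\in\overline{v_1p_{a_1}}$ we automatically get $x(u)\in[x(v_1),x(p_{a_1})]\subseteq[x(s),x(p_{a_1+1}))$. It remains to handle the configuration in which the last edge avoids the interior yet some earlier edge crosses it: since intermediate vertices lie outside $R(s)$, such a crossing enters and exits $\partial R(s)$ at two non-vertex points, and replacing the crossing segment by an $xy$-monotone piece along $\partial R(s)$ together with the same bookkeeping yields either a strictly shorter path or a shortest path through some gateway $p_{j'}$ with $j'>a_1$, again violating the maximality of $a_1$. I expect this last case analysis to be the main obstacle because the possible crossing geometries of $\partial R(s)$ must be enumerated carefully, but the detour-and-bookkeeping template developed for the left-boundary case is the model for each configuration.
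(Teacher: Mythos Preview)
Your analysis of the last edge (steps 2--4) is essentially correct, and your step 1 (no intermediate vertex of $\pi_{q_1}(p_{a_1})$ lies in $R(s)$) is a clean observation that the paper uses only implicitly. However, step 5 is a genuine gap: you acknowledge it as the ``main obstacle'' and propose a detour-and-bookkeeping template, but you do not carry it out, and enumerating all crossing geometries of an earlier edge with $\partial R(s)$ would be substantial work with no guarantee that every configuration yields one of your two contradictions.

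The paper avoids this case analysis with a single unifying claim: for \emph{any} point $w$ of $\pi_{q_1}(p_{a_1})$ lying in the interior of $R(s)$, one has $w\in D:=R(w_i,p_{i+1})$, where $w_i$ is the foot of $p_{a_1}$ on the bottom boundary and $i=a_1$. If $w$ were strictly to the left of $D$, then $\beta_s[p_1,p_{a_1}]\cup\pi_{q_1}(p_{a_1})$ is a shortest $p_1$--$q_1$ path whose sub-path from $p_1$ through $p_{a_1}$ to $w$ is longer than $|\overline{p_1w}|$ (since $p_{a_1}\notin R(p_1,w)$), contradicting optimality---this is exactly your left-boundary bookkeeping, but stated once for an arbitrary $w$ rather than tied to the last edge. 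If $w$ were strictly to the right of $D$, the $xy$-monotone geometry of $R(s)$ forces a shortest $p_{a_1}$--$w$ path through $p_{a_1+1}$, yielding $d(p_1,p_{a_1+1})+d(p_{a_1+1},q_1)=d(p_1,q_1)$ and contradicting the maximality of $a_1$. Once $w\in D$ is known, the sub-path of $\pi_{q_1}(p_{a_1})$ from $w$ to $p_{a_1}$ has length $|\overline{wp_{a_1}}|$ and is therefore $xy$-monotone, so all its intermediate vertices lie in $R(w,p_{a_1})$; but Observations~\ref{obser:gatewayregion} and~\ref{obser:rectangle} show $R(w,p_{a_1})$ contains no polygon vertex, hence there are no intermediate vertices and $w$ lies on the last edge $e$. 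Since every interior point of the path lies in $D$, the intersection of $e$ with the bottom boundary lies in $D$ as well, giving the claimed $x$-range. This subsumes your step 5 without any enumeration.
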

\begin{proof}
Note that since $a_1<b_k\leq k$, $p_{a_1+1}$ exists in $V(s)$. So does $p_{b_k-1}$.
We only prove the first part of the lemma, and the second part can be
proved in a similar way. To simplify the notation, let $i=a_1$.
%Since $a_1<k$, $p_{i+1}$ exists in $V(s)$.
Let $e$ be the last edge of $\pi_{q_1}(p_{a_1})$.
%If $e$ intersects both $\overline{p_{i+1}'p_{i+1}}$ and the bottom
%boundary of $R(s)$, then it is obviously true that the path contains a
%point in the interior of $R(s)$. In the following, we focus on proving
%the other direction of the lemma.
We assume that $\pi_{q_1}(p_{i})$ contains a point $w$ in the interior of $R(s)$.

\begin{figure}[t]
\begin{minipage}[t]{0.49\linewidth}
\begin{center}
\includegraphics[totalheight=1.3in]{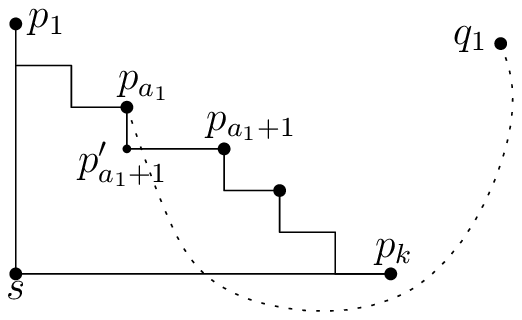}
\caption{\footnotesize The shortest path $\pi_{q_1}(p_{a_1})$ goes through the interior of $R(s)$.}
\label{fig:gothrough}
\end{center}
\end{minipage}
\hspace{0.05in}
\begin{minipage}[t]{0.49\linewidth}
\begin{center}
\includegraphics[totalheight=1.3in]{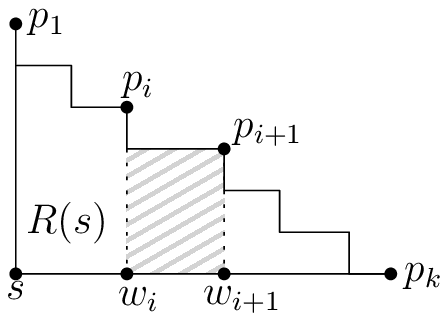}
\caption{\footnotesize The shaded region is $R(w_i,p_{i+1})$.}
\label{fig:recin}
\end{center}
\end{minipage}
\vspace*{-0.15in}
\end{figure}

Let $w_i$ (resp., $w_{i+1}$) be the intersection of the vertical line through $p_i$ (resp, $p_{i+1}$) with the bottom boundary of $R(s)$ (e.g., see Fig.~\ref{fig:recin}). Let $D=R(w_i,p_{i+1})$. We claim that $w$ must be in $D$. Indeed, assume to the contrary this is not true. Depending on whether $w$ is strictly to the left or right of $D$, there are two cases.

\begin{enumerate}
\item
If $w$ is strictly to the left of $D$, then $i>1$. By the definition
of $i=a_1$, $\beta_s[p_1,p_i]\cup \pi_{q_1}(p_i)$ is a shortest path
from $p_1$ to $q_1$, which contains $w$. Let $\pi$ represent the subpath of $\beta_s[p_1,p_i]\cup \pi_{q_1}(p_i)$ between $p_1$ and $w$. Note that $\pi$ contains $p_i$.
Since there is an $xy$-monotone path in $R(s)$ from $p_1$ to $w$, we have $d(p_1,w)=|\overline{p_1w}|$. On the other hand, since $w$ is strictly to the left of $D$, $p_i$ is not in $R(p_1,w)$. This implies that the length of $\pi$ must be larger than $d(p_1,w)=|\overline{p_1w}|$, contradicting with that $\beta_s[p_1,p_i]\cup \pi_{q_1}(p_i)$ is a shortest path from $p_1$ to $q_1$.

\item
If $w$ is strictly to the right of $D$, then there is an $xy$-monotone path from $p_i$ to $w$ that contains $p_{i+1}$ and the path is contained in a shortest path from $p_i$ to $q_1$. Hence,
$d(p_i,q_1)=d(p_i,p_{i+1})+d(p_{i+1},q_1)$. Since
$d(p_1,q_1)=d(p_1,p_i)+d(p_i,q_1)$, we obtain that
$d(p_1,q_1)=d(p_1,p_i)+d(p_i,p_{i+1})+d(p_{i+1},q_1)=d(p_1,p_{i+1})+d(p_{i+1},q_1)$.
But this contradicts with the definition of $i=a_1$.

\end{enumerate}

The above proves that $w$ must be in $D$. Observe that $d(p_i,w)=|\overline{p_iw}|$. Further, due to  Observations~\ref{obser:gatewayregion} and \ref {obser:rectangle}, $p_i$ is visible to $w$.
By Observation~\ref{obser:gatewayregion} and due to $R(s)\cap R(t)=\emptyset$, the endpoint of $e$ other
than $p_i$, which is a polygon vertex or $q_1$, is not in $R(s)$.
Hence, $w$ must be contained in $e$ and $e$ must intersect
the bottom boundary of $R(s)$. Further, according to the above claim,
every point $w\in e\cap R(s)$ must be in $D$, and thus, the
intersection of $e$ and the bottom boundary of $R(s)$ must be in $D$.
This proves the lemma. \qed
\end{proof}

%We further have the following lemma.

\begin{lemma}\label{lem:70}
If the last edge of $\pi_{q_1}(p_{a_1})$ intersects the bottom boundary of $R(s)$, or
the last edge of $\pi_{q_1}(p_{b_k})$ intersects the left boundary of $R(s)$, then
$p_j$ for each $j\in [a_1+1,b_k-1]$ cannot be a via gateway.
\end{lemma}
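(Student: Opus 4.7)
The plan is to argue by contradiction. Suppose some $p_j$ with $j\in[a_1+1,b_k-1]$ is a via gateway (if the interval is empty there is nothing to prove), so there exists $q\in V(t)$ with $d(s,t)=|\overline{sp_j}|+d(p_j,q)+|\overline{qt}|$. I will treat the first condition in detail; the second follows by a fully symmetric argument where $p_{b_k},q_k$, and the left boundary of $R(s)$ take the roles of $p_{a_1},q_1$, and the bottom boundary. The contradiction will come from exhibiting an explicit $s$-$t$ path $P_C$ that is strictly shorter than every $s$-$t$ path going through $p_j$.

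Denote the last edge of $\pi_{q_1}(p_{a_1})$ by $\overline{u\,p_{a_1}}$ and let $w$ be its intersection with the bottom boundary of $R(s)$. Define $P_C$ as the concatenation of $\overline{sw}$, $\overline{wu}$, the sub-path of $\pi_{q_1}(p_{a_1})$ from $u$ to $q_1$, and $\overline{q_1t}$. Every piece lies in $\calP$: $\overline{sw}$ sits on the bottom boundary of $R(s)\subseteq\calP$ (Observation~\ref{obser:gatewayregion}); $\overline{wu}$ is a sub-segment of $\overline{u\,p_{a_1}}\subseteq\calP$; the $u$-to-$q_1$ portion is part of the tree $T(q_1)$; and $\overline{q_1t}\subseteq\calP$ by Lemma~\ref{lem:extend}(1). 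Hence $P_C$ is a valid $s$-$t$ path, so $|P_C|\ge d(s,t)$.

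Using that $w$ lies on $\overline{u\,p_{a_1}}$ (so $|\overline{uw}|+|\overline{w\,p_{a_1}}|=|\overline{u\,p_{a_1}}|$) together with Lemma~\ref{lem:40} (which gives $d(p_1,q_1)=|\overline{p_1p_{a_1}}|+d(p_{a_1},q_1)$), a short $L_1$-calculation with $s=(x(p_1),y(p_k))$ simplifies the length to
\[
|P_C|\;=\;|\overline{sp_1}|+d(p_1,q_1)+|\overline{q_1t}|-2\bigl(y(p_1)-y(p_k)\bigr).
\]
On the other hand, for any $q\in V(t)$ the triangle inequality $d(p_j,q)\ge d(p_1,q)-|\overline{p_1p_j}|$ (valid because the $xy$-monotone ceiling $\beta_s$ gives $d(p_1,p_j)=|\overline{p_1p_j}|$), combined with the optimality $d(p_1,q_1)+|\overline{q_1t}|\le d(p_1,q)+|\overline{qt}|$ of $q_1=c(p_1)$, yields after a routine simplification
\[
|\overline{sp_j}|+d(p_j,q)+|\overline{qt}|\;\ge\;|\overline{sp_1}|+d(p_1,q_1)+|\overline{q_1t}|-2\bigl(y(p_1)-y(p_j)\bigr).
\]
Subtracting, the path through $p_j$ exceeds $|P_C|$ by at least $2(y(p_j)-y(p_k))$, which is strictly positive because $j\le b_k-1\le k-1$ and the cleanup procedure guarantees $y(p_j)>y(p_k)$. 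Specializing to $q=c(p_j)$ produces $d(s,t)>|P_C|\ge d(s,t)$, the desired contradiction.

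The main obstacle I foresee is the careful $L_1$-bookkeeping in the two displayed inequalities together with checking that each sub-segment of $P_C$ really stays inside $\calP$; the rest is merely packaging of Lemma~\ref{lem:40}, the triangle inequality in $\calP$, and the minimality defining $q_1=c(p_1)$ over $V(t)$. For the symmetric second condition (last edge of $\pi_{q_k}(p_{b_k})$ crossing the left boundary), the analogous construction yields a slack of $2(x(p_j)-x(p_1))>0$, strictly positive because $j\ge a_1+1\ge 2$.
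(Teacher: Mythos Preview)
Your proof is correct and takes a genuinely different route from the paper's.

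The paper argues topologically: it routes the assumed shortest $s$-$t$ path through $p_j$ along the L-shaped path $\overline{sw'}\cup\overline{w'p_j}$ inside $R(s)$, observes that the last edge $e$ of $\pi_{q_1}(p_{a_1})$ must cross this L-path at some $w'$, and then swaps the tail of the two paths at $w'$ to manufacture a shortest $s$-$t$ path whose $s$-to-$w$ portion is strictly longer than $|\overline{sw}|$---a contradiction.  Your approach is purely metric: you build the explicit competitor $P_C=\overline{sw}\cup\overline{wu}\cup\pi_{q_1}(u,q_1)\cup\overline{q_1t}$, compute its $L_1$ length in closed form, and compare it to a lower bound for any path through $p_j$ obtained from the triangle inequality and the optimality of $q_1=c(p_1)$.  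Your argument is more quantitative (it pins the slack at exactly $2(y(p_j)-y(p_k))$), avoids the path-crossing picture altogether, and does not need to invoke Lemma~\ref{lem:optint}; the paper's argument is more geometric and sidesteps all coordinate arithmetic.

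One point worth tightening: your displayed equality
\[
|P_C|\;=\;|\overline{sp_1}|+d(p_1,q_1)+|\overline{q_1t}|-2\bigl(y(p_1)-y(p_k)\bigr)
\]
uses $|\overline{sw}|-|\overline{wp_{a_1}}|=(x(p_{a_1})-x(p_1))-(y(p_{a_1})-y(p_k))$, which in turn needs $x(w)\ge x(p_{a_1})$.  This is true---the proof of Lemma~\ref{lem:interior} actually places $w$ in $R(w_{a_1},p_{a_1+1})$---but the lemma \emph{statement} only records $x(w)\in[x(s),x(p_{a_1+1})]$.  If you want to cite only the statement, replace ``$=$'' by ``$\le$'' in that display; the remainder of your argument is unchanged, since you only need $|P_C|$ to be \emph{at most} the right-hand side.
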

\begin{proof}
We only prove the case for $\pi_{q_1}(p_{a_1})$ since the other
case is similar. To simplify the notation, let $i=a_1$.
Let $e$ be the last edge of  $\pi_{q_1}(p_{a_1})$, which intersects
the bottom boundary of $R(s)$, say, at a point $w$  (e.g., see Fig.~\ref{fig:notpossible2}). By Lemma~\ref{lem:interior}, $x(w)\in [x(s),x(p_{i+1})]$.
Consider any $j\in [a_1+1,b_k-1]$.
In the following, we show that $p_j$ cannot be a via gateway.
Since $j<b_k\leq k$, $y(p_j)> y(p_k)=y(s)=y(w)$.

\begin{figure}[t]
\begin{minipage}[t]{0.49\linewidth}
\begin{center}
\includegraphics[totalheight=1.3in]{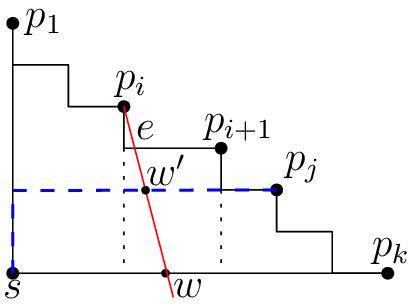}
\caption{\footnotesize
Illustrating the proof of Lemma~\ref{lem:70}. The red edge $e$ is the last edge in $\pi_{q_1}(p_i)$. The blue dashed path is the subpath of $\pi(s,t)$ between $s$ and $p_j$.}
\label{fig:notpossible2}
\end{center}
\end{minipage}
\hspace{0.05in}
\begin{minipage}[t]{0.49\linewidth}
\begin{center}
\includegraphics[totalheight=1.2in]{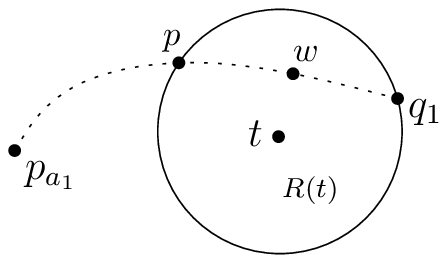}
\caption{\footnotesize
This following situation cannot occur: The path $\pi_{q_1}(p_{a_1})$ (the dotted curve) separates the boundary of $R(t)$ (the solid circle) into two (or more) disconnected pieces. }
\label{fig:disconnect}
\end{center}
\end{minipage}
\vspace*{-0.15in}
\end{figure}

Assume to the contrary that $p_j$ is a via gateway. Then there is a shortest \st\ path $\pi(s,t)$ that contains $p_j$. Without loss of generality, we assume that the sub-path of $\pi(s,t)$ between $s$ and $p_j$, denoted by $\pi(s,p_j)$, consists of a vertical segment through $s$ and a horizontal segment through $p_j$ (e.g., see Fig.~\ref{fig:notpossible2}).
Since $j>i$ and $x(w)\in [x(s),x(p_{i+1})]$, $e$ intersects the horizontal segment of $\pi(s,p_j)$ at a point $w'$ and $x(w')\in [x(s),x(p_{i+1})]$. Note that $y(w')>y(w)=y(s)$ since $y(p_j)>y(s)$. As $c(p_i)=q_1$,  $\pi_{q_1}(p_i)\cup\overline{q_1t}$ is a shortest path from $p_i$ to $t$. Hence, the sub-path of $\pi_{q_1}(p_i)\cup\overline{q_1t}$ between $w'$ and $t$ is a shortest path from $w'$ to $t$. Also, as $w'\in \pi(s,t)$, the sub-path of $\pi(s,t)$ between $w'$ and  $t$ is also a shortest path from $w'$ to $t$.
Therefore, the concatenation of $\pi_1$ and $\pi_2$, denoted by $\pi$, is also a shortest \st\ path, where $\pi_1$ is the sub-path of $\pi(s,p_j)$ between $s$ and $w'$ and $\pi_2$ is the sub-path of $\pi_{q_1}(p_i)\cup\overline{q_1t}$  between $w'$ and $t$.
Notice that $\pi$ contains $s$, $w'$, and $w$ in this order. Since $y(w')>y(s)=y(w)$, the length of the subpath between $s$ and $w$ is strictly larger than $d(s,w)=|\overline{sw}|$. However, this contradicts with that $\pi$ is a shortest \st\ path.
Hence, $p_j$ cannot be a via gateway. The lemma thus follows.
\qed
\end{proof}

Due to our preprocessing, we check in constant time whether the last edge of $\pi_{q_1}(p_{a_1})$ intersects the bottom boundary of $R(s)$. Similarly, we can check whether the last edge of $\pi_{q_1}(p_{b_k})$ intersects the left boundary of $R(s)$.
If the answer is yes for either case, then by Lemma~\ref{lem:70}, we can stop
the algorithm (i.e., no need to compute the coupled gateways for any $p_i$ with $i\in [a_1+1,b_k-1]$). Otherwise, by Lemma~\ref{lem:interior}, neither
$\pi_{q_1}(p_{a_1})$ nor $\pi_{q_1}(p_{b_k})$ contains a point
in the interior of $R(s)$. Thus, the situation depicted in
Fig.~\ref{fig:gothrough} does not happen to either path. Our algorithm
proceeds as follows.

%Let $\partial R(t)$ be the boundary of the extended gateway region $R(t)$.
Due to the properties of  $R(t)$ in Lemma~\ref{lem:extend}, the following lemma shows that $\pi_{q_1}(p_{a_1})$ (resp., $\pi_{q_k}(p_{b_k})$) cannot separate the boundary of $R(t)$ into two disconnected pieces (e.g., see Fig.~\ref{fig:disconnect}).
\begin{lemma}\label{lem:disconnect}
The path $\pi_{q_1}(p_{a_1})$ (resp., $\pi_{q_k}(p_{b_k})$)
does not contain any point in the interior of $R(t)$, and thus, the
intersection of the path with $\partial R(t)$ is connected.
Further, $q_1$ is the only gateway of $V(t)$ in
$\pi_{q_1}(p_{a_1})$,
and similarly, $q_k$ is the only gateway of $V(t)$ in
$\pi_{q_k}(p_{b_k})$.
\end{lemma}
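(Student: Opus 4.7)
The plan is to prove the three assertions separately, using Lemma~\ref{lem:extend} and the Remark immediately following it as the main tools. The key object is the concatenated path $\pi(p_{a_1},t):=\pi_{q_1}(p_{a_1})\cup\overline{q_1t}$, which is a shortest $p_{a_1}$-$t$ path in $\calP$: by Lemma~\ref{lem:40} with $i=a_1$ we have $c(p_{a_1})=q_1$, so its length equals $d(p_{a_1},q_1)+|\overline{q_1t}|=d(p_{a_1},t)$. Moreover, Lemma~\ref{lem:disjoint} has already ensured $R(s)\cap R(t)=\emptyset$ at this point in the algorithm, so $p_{a_1}\in R(s)$ lies strictly outside $R(t)$, and all properties in Lemma~\ref{lem:extend} are available for $\pi(p_{a_1},t)$.

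For the first assertion, I would argue by contradiction: suppose $\pi_{q_1}(p_{a_1})$ visits the interior of $R(t)$. Traversing the path from $p_{a_1}$ toward $q_1$, let $w$ be the first point at which it crosses $\partial R(t)$ into the interior. The non-transparent parts of $\partial R(t)$ lie on $\partial\calP$ and cannot be crossed by a path in $\calP$, so $w$ must lie on a transparent edge of $R(t)$. Consider the sub-path of $\pi(p_{a_1},t)$ from $w$ to $t$: it is a shortest $w$-$t$ path containing the gateway $q_1\in V(t)$, and by Lemma~\ref{lem:extend}(3) it can contain no other gateway, so the gateway supplied by Lemma~\ref{lem:extend}(5) at $w$ must be $q_1$ itself. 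Lemma~\ref{lem:extend}(5) then forces $\overline{wq_1}$ to be the initial segment of the sub-path and to lie on a transparent edge. But then, from $w$ onwards, $\pi_{q_1}(p_{a_1})$ hugs $\partial R(t)$ straight to $q_1$, contradicting the choice of $w$ as an entry point into the interior.

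With Claim~1 in hand, the remaining assertions drop out of the tools already set up. The connectedness of $\pi_{q_1}(p_{a_1})\cap\partial R(t)$ follows from the Remark after Lemma~\ref{lem:extend}: $\pi(p_{a_1},t)$ cannot separate $\partial R(t)$ into disjoint pieces, so $\pi(p_{a_1},t)\cap\partial R(t)$ is connected; the tail $\overline{q_1t}\setminus\{q_1\}$ lies in the interior of $R(t)$ and contributes only $\{q_1\}$ to this intersection, so $\pi_{q_1}(p_{a_1})\cap\partial R(t)$ inherits the connectedness. For the last assertion, Lemma~\ref{lem:extend}(3) applied to $\pi(p_{a_1},t)$ rules out any gateway of $V(t)$ other than $q_1$ on the entire concatenated path, and in particular on $\pi_{q_1}(p_{a_1})$. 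The argument for $\pi_{q_k}(p_{b_k})$ and $q_k$ is completely symmetric, using $c(p_{b_k})=q_k$ from Lemma~\ref{lem:40} applied to the other end.

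The main obstacle is the contradiction in the second paragraph. The subtlety is that Lemma~\ref{lem:extend}(5) must be leveraged as more than a mere existence statement about a nearby gateway: combined with the uniqueness in Lemma~\ref{lem:extend}(3), it forces the entire sub-path from the very first transparent-edge contact to coincide with the straight segment to $q_1$ along that edge, pinning the path onto $\partial R(t)$ and thereby ruling out any crossing back into the interior. Once this subtlety is understood, the connectedness and ``only gateway'' assertions reduce cleanly to the Remark and to Lemma~\ref{lem:extend}(3), respectively.
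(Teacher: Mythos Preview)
Your proposal is correct and follows essentially the same argument as the paper: both set up the concatenated path $\pi_{q_1}(p_{a_1})\cup\overline{q_1t}$ as a shortest $p_{a_1}$--$t$ path, reach a contradiction for the interior claim via Lemma~\ref{lem:extend}(5) (the path must proceed as the segment $\overline{wq_1}$ along a transparent edge once it touches $\partial R(t)$, so it cannot enter the interior), and obtain the ``only gateway'' claim directly from Lemma~\ref{lem:extend}(3). Your write-up is, if anything, slightly more explicit than the paper's in justifying why the concatenated path is shortest (via Lemma~\ref{lem:40}) and in combining parts (3) and (5) of Lemma~\ref{lem:extend} to identify the gateway at $w$ as $q_1$; the paper leaves the connectedness clause implicit in the ``and thus'' of the lemma statement, while you spell it out via the Remark.
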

\begin{proof}
We only discuss the case for $\pi_{q_1}(p_{a_1})$, since the case for the other path is similar.

Assume to the contrary that $\pi_{q_1}(p_{a_1})$ contains a point
$w$ in the interior of $R(t)$. Then, the subpath from $p_{a_1}$ to $w$
must intersect a transparent edge of $R(t)$ at a point $p$ (e.g., see
Fig.~\ref{fig:disconnect}). Let $\pi=\pi_{q_1}(p_{a_1})\cup
\overline{q_1t}$. Since $\pi$ is a shortest path from $p_{a_1}$ to
$t$, $q_1$ must be in the rectangle $R(t,p)$. By Lemma~\ref{lem:extend}(5), the subpath of $\pi$ from $p$ to
$q_1$ must be the line segment $\overline{q_1t}$, which is on $\partial R(t)$. However, this contradicts with that the subpath of
$\pi$ from $p$ to $q_1$ contains a point $w$ in the interior of
$R(t)$.
Further, since $\pi$ is a shortest path, by Lemma~\ref{lem:extend}(3), $\pi$ only
contains a single gateway of $V(t)$. Hence, $q_1$ is the only gateway of $V(t)$ in
$\pi_{q_1}(p_{a_1})$.
\qed
\end{proof}

Recall that  $q_1=q_k$ is possible.
Depending on whether $q_1=q_k$, there are two cases. In the following,
we first describe our algorithm for the {\em unequal case} $q_1\neq q_k$, and later
we will show that the {\em equal-case} $q_1= q_k$ can be reduced to the unequal case.

\subsubsection{The unequal case $q_1\neq q_k$}

Since $q_1\neq q_k$, $q_1$ and $q_k$ partition the cyclic list $V(t)$ into two sequential lists, one of which has $q_1$ as the first point and $q_k$ as the last point following the counterclockwise order around $t$, and we use $V_t(1,k)$ to denote that list. The following observation follows from our definitions of $q_1$ and $q_k$.

\begin{observation}\label{obser:tie}
Suppose $q$ is a gateway in $V_t(1,k)$.
\begin{enumerate}
\item
If $q\neq q_1$, then $d(p_1,q_1)+d(q_1,t)<d(p_1,q)+d(q,t)$, which further implies that $d(p_{a_1},q_1)+d(q_1,t)<d(p_{a_1},q)+d(q,t)$.
\item
if $q\neq q_k$, then $d(p_k,q_k)+d(q_k,t)<d(p_k,q)+d(q,t)$, which further implies that $d(p_{b_k},q_k)+d(q_k,t)<d(p_{b_k},q)+d(q,t)$.
\end{enumerate}
\end{observation}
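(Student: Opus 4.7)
The plan is to split each part into two steps: a strict inequality involving $p_1$ (resp.~$p_k$), and then a transfer step to $p_{a_1}$ (resp.~$p_{b_k}$) via one application of the triangle inequality.

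For Part 1, I will first show that $d(p_1, q_1) + d(q_1, t) < d(p_1, q) + d(q, t)$ holds for every $q \in V_t(1,k) \setminus \{q_1\}$. The weak inequality $\le$ is immediate from $q_1 = c(p_1)$ being a minimizer of $d(p_1, \cdot) + d(\cdot, t)$ over $V(t)$. To upgrade it to strict, I invoke the tie-breaking rule in the algorithm: after the initial computation of $c(p_1)$ and $c(p_k)$, the algorithm reset $q_1$ to the minimizer of $d(p_1, \cdot) + d(\cdot, t)$ that is closest to $q_k$ along the counterclockwise arc from $q_1$ to $q_k$. Every $q \in V_t(1,k)\setminus\{q_1\}$ lies strictly between $q_1$ and $q_k$ in the counterclockwise order or coincides with $q_k$, hence is strictly closer to $q_k$ than the current $q_1$. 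If such a $q$ were also a minimizer, the update rule would have moved $q_1$ to $q$, contradicting the choice of $q_1$. (In the edge case $q = q_k$, the rule would additionally force $q_1 = q_k$, which is excluded under the present unequal-case assumption.) Hence the inequality is strict.

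For the transfer step, I will use the defining relation of $a_1$, namely $d(p_1, p_{a_1}) + d(p_{a_1}, q_1) = d(p_1, q_1)$, together with the triangle inequality $d(p_1, q) \le d(p_1, p_{a_1}) + d(p_{a_1}, q)$ in $\calP$. Substituting these into the strict inequality above and cancelling the common term $d(p_1, p_{a_1})$ from both sides yields $d(p_{a_1}, q_1) + d(q_1, t) < d(p_{a_1}, q) + d(q, t)$, as required.

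Part 2 is entirely symmetric: $q_k$ is chosen as the first minimizer of $d(p_k, \cdot) + d(\cdot, t)$ encountered in the counterclockwise traversal starting from $q_1$, so any $q \in V_t(1,k) \setminus \{q_k\}$ lies earlier on that arc and cannot also be a minimizer (lest $q_k$ would have been set to $q$); the transfer from $p_k$ to $p_{b_k}$ then proceeds identically, using $d(p_k, p_{b_k}) + d(p_{b_k}, q_k) = d(p_k, q_k)$ and triangle inequality. There is no substantive obstacle here; the main point is recognizing that the tie-breaking conventions, combined with $q_1 \neq q_k$, are precisely engineered to produce strict inequality for every gateway on $V_t(1,k)$ other than the designated endpoint, after which the observation reduces to routine triangle-inequality manipulation.
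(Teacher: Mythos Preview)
Your proof is correct and follows essentially the same approach as the paper: the strict inequality comes from the tie-breaking rules in the definitions of $q_1$ and $q_k$ (which the paper simply cites as ``by the definitions of $q_1$ and $q_k$''), and the transfer to $p_{a_1}$ (resp.\ $p_{b_k}$) is done via the identity $d(p_1,q_1)=d(p_1,p_{a_1})+d(p_{a_1},q_1)$ combined with the triangle inequality $d(p_1,q)\le d(p_1,p_{a_1})+d(p_{a_1},q)$, exactly as in the paper.
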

\begin{proof}
We only prove the first part of the observation, since the second part is similar.
By the definitions of $q_1$ and $q_k$, we can immediately obtain that $d(p_{1},q_1)+d(q_1,t)< d(p_{1},q)+d(q,t)$. Further, by the definition of $a_1$, $d(p_{1},q_1)=d(p_1,p_{a_1})+d(p_{a_1},q_1)$. On the other hand, it holds that $d(p_{1},q)\leq d(p_1,p_{a_1})+d(p_{a_1},q)$. The above three inequalities together lead to $d(p_{a_1},q_1)+d(q_1,t)< d(p_{a_1},q)+d(q,t)$. \qed
\end{proof}

For any $i$ and $j$ with $1\leq i\leq j\leq k$, we use the interval $[i,j]$ to represent the gateways $p_i,p_{i+1},\ldots,p_j$.
Our algorithm works on the interval $[1,k]$ and $V_t(1,k)$.
%We set $q_1$ (resp., $q_k$) as the candidate coupled gateway for $p_1$ (resp, $p_k$).
Since $q_1\neq q_k$, we have the following lemma.

\begin{lemma}\label{lem:intersect}
The two paths $\pi_{q_1}(p_{a_1})$ and $\pi_{q_k}(p_{b_k})$ do not intersect.
\end{lemma}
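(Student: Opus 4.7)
The plan is to argue by contradiction using the standard ``path swap'' trick, combined with Observation~\ref{obser:tie}, which we have already established for $V_t(1,k)$ in the unequal case $q_1\neq q_k$.

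First, I would assume for contradiction that $\pi_{q_1}(p_{a_1})$ and $\pi_{q_k}(p_{b_k})$ share at least one common point, and let $w$ be any such point. Let $\alpha_1$ denote the sub-path of $\pi_{q_1}(p_{a_1})$ from $p_{a_1}$ to $w$ and $\alpha_2$ the sub-path from $w$ to $q_1$; similarly, let $\beta_1$ denote the sub-path of $\pi_{q_k}(p_{b_k})$ from $p_{b_k}$ to $w$ and $\beta_2$ the sub-path from $w$ to $q_k$. By concatenation, $\alpha_1\cup\beta_2$ is a path from $p_{a_1}$ to $q_k$ and $\beta_1\cup\alpha_2$ is a path from $p_{b_k}$ to $q_1$, so
\[
d(p_{a_1},q_k)+d(p_{b_k},q_1)\;\le\;|\alpha_1|+|\beta_2|+|\beta_1|+|\alpha_2|\;=\;d(p_{a_1},q_1)+d(p_{b_k},q_k).
\]
This is the only geometric content of the proof; the rest is a purely metric contradiction.

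Next, I would apply Observation~\ref{obser:tie}, noting that in the unequal case $q_1,q_k\in V_t(1,k)$ are distinct, so the observation applies with $q=q_k$ and with $q=q_1$, yielding
\[
d(p_{a_1},q_1)+d(q_1,t)<d(p_{a_1},q_k)+d(q_k,t),\qquad d(p_{b_k},q_k)+d(q_k,t)<d(p_{b_k},q_1)+d(q_1,t).
\]
Adding these two strict inequalities and cancelling the $d(q_1,t)+d(q_k,t)$ terms gives
\[
d(p_{a_1},q_1)+d(p_{b_k},q_k)\;<\;d(p_{a_1},q_k)+d(p_{b_k},q_1),
\]
which directly contradicts the inequality from the swap step. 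Hence no such $w$ exists and the two paths are disjoint.

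I do not anticipate any real obstacle here, since the swap is straightforward once we have two paths meeting at $w$, and Observation~\ref{obser:tie} has already been established precisely to make this metric argument go through. The only point to be careful about is to invoke the observation at $p_{a_1}$ and $p_{b_k}$ rather than at $p_1$ and $p_k$ (Observation~\ref{obser:tie} explicitly gives both forms), and to note that the unequal case $q_1\neq q_k$ is exactly what is needed to ensure that each of $q_1,q_k$ lies in $V_t(1,k)$ and differs from the other, so both strict inequalities apply.
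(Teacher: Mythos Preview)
Your proposal is correct and follows essentially the same path-swap argument as the paper: assume an intersection point $w$, exchange the tails at $w$ to obtain the inequality $d(p_{a_1},q_k)+d(p_{b_k},q_1)\le d(p_{a_1},q_1)+d(p_{b_k},q_k)$, and contradict this using the strict inequalities from Observation~\ref{obser:tie}. The only cosmetic difference is that the paper swaps the tails all the way to $t$ (via $\overline{q_1t}$ and $\overline{q_kt}$) and uses one half of Observation~\ref{obser:tie} together with the definition of $q_k$, whereas you swap only up to $q_1,q_k$ and add both halves of the observation; the underlying idea is identical.
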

\begin{proof}
Since $q_1\neq q_k$, by Observation~\ref{obser:tie}, $d(p_{a_1},q_1)+d(q_1,t)< d(p_{a_1},q_k)+d(q_k,t)$.
Assume to the contrary that $\pi_{q_1}(p_{a_1})$ and $\pi_{q_k}(p_{b_k})$ intersect, say, at a point $w$ (e.g., see Fig.~\ref{fig:intersect}).

Let $\pi_1$ be the path $\pi_{q_1}(p_{a_1})\cup \overline{q_1t}$ and let $\pi_2$ be the path
$\pi_{q_k}(p_{b_k})\cup \overline{q_kt}$.
Let $\pi_1'$ be the sub-path of $\pi_1$ between $w$ and $t$. Let $\pi_2'$ be the sub-path of $\pi_2$ between $w$ and $t$.

If we replace $\pi_1'$ by $\pi_2'$ in $\pi_1$, we obtain a path $\pi_3$ from $p_{a_1}$ to $t$ that contains $q_k$, and the length of $\pi_3$ is at least $d(p_{a_1},q_k)+d(q_k,t)$. Since $d(p_{a_1},q_1)+d(q_1,t)< d(p_{a_1},q_k)+d(q_k,t)$, the length of $\pi_3$ is larger than that of $\pi_1$. This further implies $|\pi_1'|$ (i.e., the length of $\pi_1'$) is smaller than $|\pi_2'|$.

Now if we replace $\pi_2'$ by $\pi_1'$ in $\pi_2$, then we obtain another path $\pi_4$ from $p_{b_k}$ to $t$ that contains $q_1$. Since $|\pi_1'|<|\pi_2'|$, we obtain that $|\pi_4|<|\pi_2|$. As $d(p_{b_k},q_1)+d(q_1,t)\leq |\pi_4|$, we have $d(p_{b_k},q_1)+d(q_1,t)< |\pi_2|=d(p_{b_k},q_k)+d(q_k,t)$. However, this contradicts with the definition of $q_k$.
\qed
\end{proof}

\begin{figure}[t]
\begin{minipage}[t]{0.49\linewidth}
\begin{center}
\includegraphics[totalheight=1.7in]{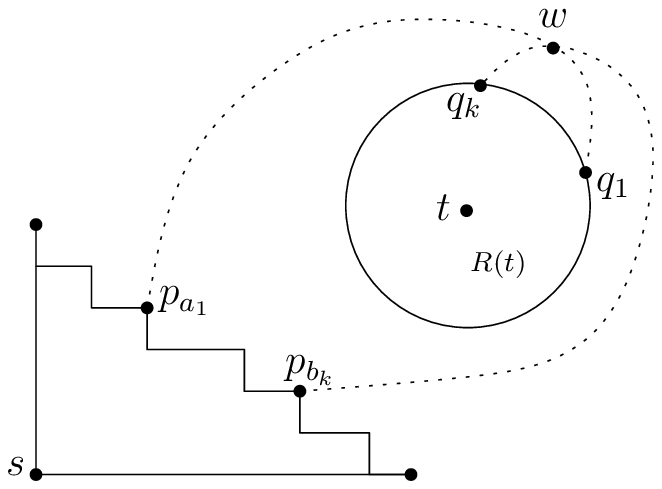}
\caption{\footnotesize
Illustrating the proof of Lemma~\ref{lem:intersect}. The two paths $\pi_{q_1}(p_{a_1})$ and $\pi_{q_k}(p_{b_k})$ intersect at $w$.}
\label{fig:intersect}
\end{center}
\end{minipage}
\hspace{0.05in}
\begin{minipage}[t]{0.49\linewidth}
\begin{center}
\includegraphics[totalheight=1.7in]{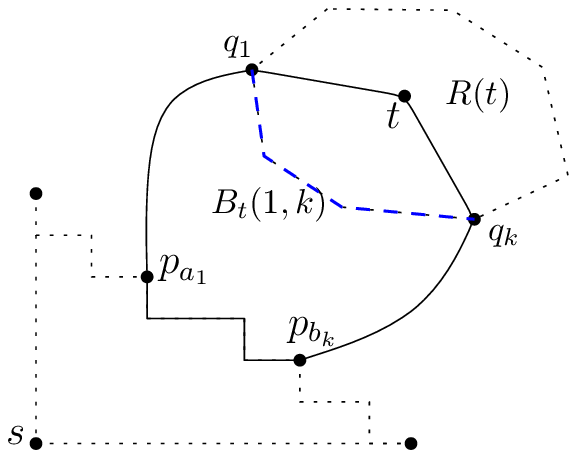}
\caption{\footnotesize
The region bounded by the sold curves is $D'$. The (blue) bold dashed boundary portion of $\partial R(t)$ is $B_t(1,k)$.}
\label{fig:cycle}
\end{center}
\end{minipage}
\vspace*{-0.15in}
\end{figure}

Lemma~\ref{lem:enclose} shows why we need the list $V_t(1,k)$, and its proof will need Lemma~\ref{lem:optint}, which shows an important property of a shortest \st\ path.

\begin{lemma}\label{lem:optint}
Suppose $\pi(s,t)$ is a shortest path that contains a gateway $p\in V(s)$. Then, the sub-path of $\pi(s,t)$ between $p$ and $t$ does not contain any interior point of $R(s)$.
\end{lemma}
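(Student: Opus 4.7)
The plan is to argue by contradiction: I will assume $\pi(s,t)$ contains some $p=p_i\in V(s)$ while the sub-path from $p_i$ to $t$ contains a point $w$ in the interior of $R(s)$, then use $L_1$-shortness to force $w$ into the closed first quadrant of $p_i$, and finally use the staircase geometry of $R(s)$ to reach a contradiction.

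The first step is to establish that $d(s,x)=|\overline{sx}|_1$ for every $x\in R(s)$. Since $R(s)\subseteq \calP$ by Observation~\ref{obser:gatewayregion}, and every $x\in R(s)$ satisfies $x(x)\geq x(s)$ and $y(x)\geq y(s)$, one can travel from $s$ to $x$ inside $R(s)$ via the $L$-shaped path that goes right along the bottom boundary to the vertical line through $x$ and then up; the vertical piece stays below the ceiling $\beta_s$ precisely because $x$ itself lies in $R(s)$. Applied to $p_i$ and $w$ this yields $d(s,p_i)=(x(p_i)-x(s))+(y(p_i)-y(s))$ and $d(s,w)=(x(w)-x(s))+(y(w)-y(s))$.

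Next, because $\pi(s,t)$ is a shortest $s$-$t$ path visiting $s$, then $p_i$, then $w$, its prefix from $s$ to $w$ is itself a shortest $s$-$w$ path. So its length $d(s,p_i)+\ell$ must equal $d(s,w)$, where $\ell$ denotes the $\pi$-length of the sub-path from $p_i$ to $w$. Subtracting the two expressions gives $\ell=(x(w)-x(p_i))+(y(w)-y(p_i))$, and combining with $\ell\geq d(p_i,w)\geq |\overline{p_iw}|_1=|x(w)-x(p_i)|+|y(w)-y(p_i)|$ produces the inequality $|a|+|b|\leq a+b$ with $a=x(w)-x(p_i)$ and $b=y(w)-y(p_i)$, which forces $a,b\geq 0$. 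Thus $w$ lies in the closed first quadrant of $p_i$.

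The final and main step is to rule this out geometrically. The ceiling $\beta_s$ is an $xy$-monotone staircase running from $p_1$ (upper-left) down-right to $p_k$ (lower-right), with $p_i$ as one of its stair corners. Hence, at any height strictly above $y(p_i)$ the right boundary of $R(s)$ sits at $x=x(p_{i-1})<x(p_i)$ (or $R(s)$ is empty above, when $i=1$), and at any $x$ strictly exceeding $x(p_i)$ the upper boundary of $R(s)$ sits at $y=y(p_{i+1})<y(p_i)$ (or $R(s)$ is empty to the right, when $i=k$). Therefore the only point of $R(s)$ in the closed first quadrant of $p_i$ is $p_i$ itself, while $w\in R(s)\setminus\beta_s$ forces $w\neq p_i$, a contradiction. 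The only subtle step is this final geometric one, where the two endpoint cases $i=1$ and $i=k$ must be handled explicitly; everything prior is a short $L_1$ computation built on Observation~\ref{obser:gatewayregion}.
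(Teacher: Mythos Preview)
Your argument is correct and follows essentially the same route as the paper's proof: both derive from $d(s,w)=|\overline{sw}|$ that $p_i$ must lie in the rectangle $R(s,w)$ (equivalently, that $w$ lies in the closed first quadrant of $p_i$), and then obtain a contradiction from the staircase geometry of $R(s)$. One small imprecision: your claim that $y(p_{i+1})<y(p_i)$ need not hold, since the $y$-coordinates of $p_1,\ldots,p_k$ are only non-increasing and the removal step guarantees merely $y(p_i)>y(p_k)$ for $i<k$; the correct (and sufficient) statement is that every point of $R(s)$ in the closed first quadrant of $p_i$ lies on $\beta_s$, which still yields the contradiction because $w$ is assumed to be in the interior $R(s)\setminus\beta_s$.
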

\begin{proof}
Let $\pi(p,t)$ be the subpath of $\pi(s,t)$ from $p$ to $t$. Assume to the contrary that $\pi(p,t)$ contains a point $w$ in the interior of $R(s)$. Then, by Observation~\ref{obser:gatewayregion}, $d(s,w)=|\overline{sw}|$. Therefore, the length of the subpath of $\pi(s,t)$ between $s$ and $w$, which contains $p$, is equal to $|\overline{sw}|$. This is possible only if $p$ is in the rectangle $R(s,w)$. Since $w$ is in the interior of $R(s)$, all points of $R(s,w)$ are in the interior of $R(s)$. However, by definition, the gateway $p$, which is on the ceiling of $R(s)$, is not in the interior of $R(s)$. Therefore, $p$ cannot be in $R(s,w)$. This incurs contradiction.
\qed
\end{proof}

\begin{lemma}\label{lem:enclose}
For any gateway $p_j$ with $j\in [a_1+1,b_k-1]$, if $p_j$ is a via gateway, then it has a coupled gateway in $V_t(1,k)$.
%and further, the region $P_1$ contains a shortest path between $p_m$ and $q_m$.
\end{lemma}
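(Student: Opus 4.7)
The plan is to enclose $p_j$ and the ``useful'' arc $B_t(1,k)$ of $\partial R(t)$ inside a topological disk $D'$ bounded by four non-crossing curves, and then run a first-exit argument combined with a cut-and-paste exchange on shortest paths.

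First, I will assemble the region $D'$ depicted in Fig.~\ref{fig:cycle}: its boundary is the closed curve formed by concatenating, in cyclic order, the ceiling arc $\beta_s[p_{a_1},p_{b_k}]$, the tree path $\pi_{q_1}(p_{a_1})$, the arc $B_t(1,k)$ along $\partial R(t)$, and the tree path $\pi_{q_k}(p_{b_k})$. I need to verify that $\partial D'$ is a simple closed curve: Lemma~\ref{lem:intersect} rules out a crossing between the two tree paths; in the sub-case under consideration Lemma~\ref{lem:interior} implies that neither tree path enters $\mathrm{int}(R(s))$, hence neither crosses $\beta_s[p_{a_1},p_{b_k}]$; Lemma~\ref{lem:disconnect} implies that neither tree path enters $\mathrm{int}(R(t))$, hence neither crosses $B_t(1,k)$; and $R(s)\cap R(t)=\emptyset$ (Lemma~\ref{lem:disjoint}) forces $\beta_s[p_{a_1},p_{b_k}]$ and $B_t(1,k)$ to be disjoint. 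Thus $D'$ is a well-defined simply connected region, with $p_j\in\partial D'$ on the ceiling arc and $t\notin\overline{D'}$ since $t$ lies strictly inside $R(t)$.

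Next, assuming $p_j$ is a via gateway, fix a shortest \st{} path through $p_j$ and let $\pi'$ denote its sub-path from $p_j$ to $t$. Lemma~\ref{lem:optint} forbids $\pi'$ from meeting $\mathrm{int}(R(s))$, so just after leaving $p_j$ the path $\pi'$ enters $\overline{D'}$; since $t\notin\overline{D'}$ the path must exit $\overline{D'}$, and moreover it cannot exit through the ceiling arc (that would require entering $\mathrm{int}(R(s))$). Let $w$ be the first point of $\pi'$ after $p_j$ lying on $\pi_{q_1}(p_{a_1})\cup\pi_{q_k}(p_{b_k})\cup B_t(1,k)$.

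I will then case-split on which boundary piece contains $w$ and apply the standard exchange. If $w\in\pi_{q_1}(p_{a_1})$, then Lemma~\ref{lem:40} yields $c(p_{a_1})=q_1$, so $\pi_{q_1}(p_{a_1})\cup\overline{q_1t}$ is a shortest $p_{a_1}$-to-$t$ path and its sub-path from $w$ to $t$ has length $d(w,t)$; substituting this for the portion of $\pi'$ from $w$ to $t$ produces a shortest $p_j$-to-$t$ path through $q_1\in V_t(1,k)$, exhibiting $q_1$ as a coupled gateway of $p_j$. The case $w\in\pi_{q_k}(p_{b_k})$ is symmetric and produces $q_k\in V_t(1,k)$. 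If instead $w\in B_t(1,k)$, then since $\pi'$ cannot cross a polygon edge of $\partial R(t)$, the point $w$ must lie on a transparent edge $e$ of $R(t)$; because $B_t(1,k)$ is composed of entire transparent and polygon edges between consecutive gateways of $V_t(1,k)$, both endpoints of $e$ belong to $V_t(1,k)$, and Lemma~\ref{lem:extend}(4) supplies an endpoint $q$ of $e$ with $\overline{wq}\cup\overline{qt}$ a shortest $w$-to-$t$ path. The same substitution yields a shortest $p_j$-to-$t$ path through $q\in V_t(1,k)$.

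The hard part will be the topological bookkeeping for $D'$: I have to assemble Lemmas~\ref{lem:interior}, \ref{lem:disconnect}, \ref{lem:intersect}, and \ref{lem:optint} to make $\partial D'$ a simple closed curve and to rule out $\pi'$ leaving $\overline{D'}$ through the ceiling arc. Once the first-exit point $w$ is isolated, the three exchange steps are essentially identical and follow from the routine fact that sub-paths of shortest paths are themselves shortest.
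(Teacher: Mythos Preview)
Your approach is essentially the paper's: enclose $p_j$ in a region bounded by the ceiling arc, the two tree paths, and the arc $B_t(1,k)$ of $\partial R(t)$, then run a first-exit argument with a cut-and-paste exchange on each boundary piece (the paper organizes the cases slightly differently, first disposing of the two tree-path intersections and only then building the region, but the content is identical). One small correction: a transparent edge of $R(t)$ has only \emph{one} endpoint guaranteed to be a gateway (Lemma~\ref{lem:extend}(4)), not both; this is all you need, and that gateway endpoint lies in $V_t(1,k)$ because Lemma~\ref{lem:disconnect} gives $V(t)\cap B_t(1,k)=V_t(1,k)$.
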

\begin{proof}
Suppose $p_j$ is a via gateway  with $j\in [a_1+1,b_k-1]$. Thus, there is a shortest \st\ path that contains $p_j$,
%and a gateway $q_j$ of $V(t)$,
and we let $\pi(p_j,t)$ denote the sub-path from $p_j$ to $t$.

If $\pi(p_j,t)$ intersects the path $\pi_{q_1}(p_{a_1})$, say, at a point $w$, then we claim that $q_1$ is coupled gateway of $p_j$. Indeed, observe that $q_1$ is a gateway $q$ in $V(t)$ that minimizes the value $d(w,q)+d(q,t)$. Since $\pi(p_j,t)$ contains $w$, $q_1$ is also a gateway $q$ in $V(t)$ that minimizes the value $d(p_j,q)+d(q,t)$. Thus, $q_1$ is a coupled gateway of $p_j$. As $q_1\in V_t(1,k)$, the lemma holds.

If $\pi(p_j,t)$ intersects the path $\pi_{q_k}(p_{b_k})$, then by the similar analysis as above, $q_k$ is a coupled gateway of $p_j$.  As $q_k\in V_t(1,k)$, the lemma also holds for this case.

In the following, we assume that  $\pi(p_j,t)$ does not intersect either $\pi_{q_1}(p_{a_1})$ or $\pi_{q_k}(p_{b_k})$.

By Lemma~\ref{lem:intersect}, $\pi_{q_1}(p_{a_1})$ and $\pi_{q_k}(p_{b_k})$ do not intersect. Recall that neither path contains an interior point of $R(s)$. Hence, $\pi_{q_1}(p_{a_1})$, $\pi_{q_k}(p_{b_k})$, $\overline{q_1t}$, $\overline{q_kt}$, and $\beta_s[p_{a_1},p_{b_k}]$ together form a closed curve that divides the plane into two regions (e.g., see Fig.~\ref{fig:cycle}), one of which (denoted by $D'$) does not contain $s$. Let $B_t(1,k)$ be the boundary portion of $R(t)$ contained in $D'$. Lemma~\ref{lem:disconnect} implies that the set of gateways of $V(t)$ on $B_t(1,k)$ is exactly $V_t(1,k)$. Further, $B_t(1,k)$ divides $D'$ into two subregions: one of them, denoted by $D(1,k)$, contains $\beta_s[p_{a_1},p_{b_k}]$ (and thus contains $p_j$), and the other contains $t$ (e.g., see Fig.~\ref{fig:cycle}).

By Lemma~\ref{lem:optint}, $\pi(p_j,t)$ does not contain any point in the interior of $R(s)$.
Since $p_j$ is in $D(1,k)$ and $t$ is not, and $\pi(p_j,t)$ does not intersect either $\pi(p_{a_1},q_1)$ or $\pi(p_{b_k},q_k)$, $\pi(p_j,t)$ must intersect $B_t(1,k)$. By Lemma~\ref{lem:extend} and our definition of $B_t(1,k)$, for any point $p$ in $B_t(1,k)$, $B_t(1,k)$ contains a gateway $q$ such that there is an $xy$-monotone path from $p$ to $t$ that contains $q$, and further, $q$ is in $V_t(1,k)$ since $V(t)\cap B_t(1,k)=V_t(1,k)$. Consequently, since $\pi(p_j,t)$ intersects $B_t(1,k)$, we obtain that $V_t(1,k)$ has a gateway $q$ such that there is a shortest path from $p_j$ to $t$ that contains $q$.  This leads to the lemma.
\qed
\end{proof}

In light of Lemma~\ref{lem:enclose}, to compute the candidate coupled gateways for all $p_i$ with $i\in [a_1+1,b_k-1]$, we only need to consider the gateways in $V_t(1,k)$. In the following, we work on the problem recursively. We may consider each recursive step as working on a subproblem, denoted by $([i',j'], [i,j],V_t(i,j))$ with $[i',j']\subseteq [i,j]\subseteq [1,k]$, where the goal is to find candidate coupled  gateways from a sublist $V_t(i,j)$ of $V_t(1,k)$ for the gateways in $[i',j']$, and further, there exist a shortest path from $p_{a_i}$ to the first point of $V_t(i,j)$ and a shortest path from $p_{b_j}$ to the last point of $V_t(i,j)$ such that the two paths do not intersect and neither path contains a point in the interior of $R(s)$. Initially, our subproblem is $([a_1+1,b_k-1],[1,k],V_t(1,k))$. We proceed as follows.

If $b_k-1=a_1+1$, then the interval $[a_1+1,b_k-1]$ has only one gateway $p$. We simply check all gateways of $V_t(1,k)$ to find the point $q$ that minimizes the value $d(p,q)+d(q,t)$ among all $q\in V_t(1,k)$, and then return $q$ as the candidate coupled gateway of $p$. The algorithm can stop. Otherwise, we proceed as follows.
%If $b_k-1<a_m+1$, then we proceed as follows.

\begin{figure}[t]
\begin{minipage}[t]{\linewidth}
\begin{center}
\includegraphics[totalheight=1.7in]{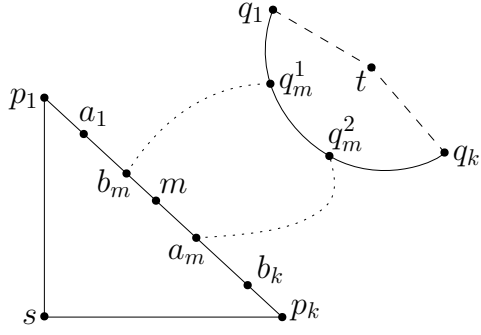}
\caption{\footnotesize
Illustrating a schematic view of the indices: $a_1$, $b_m$, $m$, $a_m$, and $b_k$.}
\label{fig:median}
\end{center}
\end{minipage}
\vspace*{-0.15in}
\end{figure}

Let $m=\lfloor(a_1+b_k)/2\rfloor$.
We compute a gateway in $V_t(1,k)$ that minimizes the value
$d(p_m,q)+d(q,t)$ for all $q\in V_t(1,k)$, and in case of a tie, we use
$q_m^1$ and $q_m^2$ to refer to the first and the last such gateways in
$V_t(1,k)$, respectively. Let $V_t(1,m)$ and $V_t(m,k)$ denote the
sublists of $V_t(1,k)$ from $q_1$ to $q_m^1$ and from $q_m^2$ to $q_k$,
respectively. We set one of $q^1_m$ and $q^2_m$ as the candidate
coupled gateway of $p_m$.

Define $a_m$ to be the largest index $i\in [m,b_k-1]$ such that
$d(p_{m},q^2_m)=d(p_{m},p_{i})+d(p_{i},q^2_m)$ and $b_m$ the smallest
index $i\in [a_1+1,m]$ such that $d(p_m,q^1_m)=d(p_m,p_{i})+d(p_{i},q^1_m)$. See Fig.~\ref{fig:median}.
We can compute $a_m$ and $b_m$ by a similar stair-walking
procedure as before.
According to Lemma~\ref{lem:enclose}, by similar proofs as
Lemma~\ref{lem:40}, we can show that for each $i\in [b_m,m-1]$, if $p_i$ is a via gateway, then $q^1_m$ is a coupled gateway of $p_i$, and
for each $i\in [m+1,a_m]$, if $p_i$ is a via gateway, then $q^2_m$ is a coupled gateway of $p_i$.
Thus we set $q^1_m$ as the candidate coupled gateway for each $p_i$ with $i\in [b_m,m-1]$, and set
$q^2_m$ as the candidate coupled gateway for each $p_i$ with $i\in [m+1,a_m]$.

If $a_m=b_k-1$ and $b_m=a_1+1$, then the candidate coupled gateways of all gateways in
$[1,k]$ have been computed and we can stop the algorithm.
If $a_m=b_k-1$ but $b_m>a_1+1$, the candidate coupled gateways of all gateways in
$[m,k]$ have been computed, and thus we work recursively on the subproblem $([a_1+1,b_m-1],[1,k],V_t(1,k))$ (note that the size of the first interval
%i.e., the number of gateways of $V(s)$ whose candidate coupled gateways need to be found,
is reduced by at least half). Similarly, if $b_m=a_1+1$ but $a_m<b_k-1$, then we work recursively on the subproblem $([a_m+1,b_k-1],[1,k],V_t(1,k))$. Otherwise, both $b_m>a_1+1$ and $a_m<b_k-1$ hold, and we proceed as follows.

%If $b_k-1=a_m+1$, then the interval $[a_m+1,b_k-1]$ has only one gateway $p$. We simply check all gateways of $V_t(1,k)$ to find the point $q$ that minimizes the value $d(p,q)+d(q,t)$ among all $q\in V_t(1,k)$, and then return $q$ as the candidate coupled gateway of $p$. The algorithm can stop.
%If $b_k-1<a_m+1$, then we proceed as follows.

We have the following two lemmas that are similar to Lemmas~\ref{lem:interior} and \ref{lem:70}.

\begin{lemma}\label{lem:110}
\begin{enumerate}
\item
The path $\pi_{q^2_m}(p_{a_m})$ contains a point in the interior of $R(s)$ only if
the last edge of the path intersects the bottom boundary of $R(s)$, in which the intersection at the bottom boundary of $R(s)$ has $x$-coordinate in $[x(s),x(p_{a_m+1})]$.
\item
The path $\pi_{q^1_m}(p_{b_m})$ contains a point in the interior of $R(s)$ only if the last edge of the path intersects the left boundary of $R(s)$, in which case the intersection at the left boundary of $R(s)$ has $y$-coordinate in $[y(s),y(p_{b_m-1})]$.
\end{enumerate}
\end{lemma}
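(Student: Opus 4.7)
The plan is to mirror the proof of Lemma~\ref{lem:interior} almost verbatim, substituting $(p_1, q_1, a_1)$ by $(p_m, q^2_m, a_m)$ for the first part, and by $(p_m, q^1_m, b_m)$ for the second part. By symmetry I would only write out the first case in detail. Set $i := a_m$, let $e$ be the last edge of $\pi_{q^2_m}(p_i)$, and assume for contradiction that $\pi_{q^2_m}(p_i)$ contains a point $w$ in the interior of $R(s)$. As before, let $w_i$ and $w_{i+1}$ be the intersections of the vertical lines through $p_i$ and $p_{i+1}$ with the bottom boundary of $R(s)$, and put $D := R(w_i, p_{i+1})$.

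The core step is to show $w \in D$, and I would split into the same two cases as in Lemma~\ref{lem:interior}. If $w$ lies strictly to the left of $D$, then by the definition of $a_m$ we have $d(p_m,q^2_m)=d(p_m,p_i)+d(p_i,q^2_m)$, so $\beta_s[p_m,p_i]\cup \pi_{q^2_m}(p_i)$ is a shortest path from $p_m$ to $q^2_m$ containing $w$; since $\beta_s[p_m,p_i]$ is $xy$-monotone inside $R(s)$, an $xy$-monotone path from $p_m$ to $w$ exists in $\calP$, giving $d(p_m,w)=|\overline{p_m w}|$, and yet $p_i\notin R(p_m,w)$ forces the subpath from $p_m$ through $p_i$ to $w$ to be strictly longer, a contradiction. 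If $w$ lies strictly to the right of $D$, then an $xy$-monotone path from $p_i$ to $w$ through $p_{i+1}$ exists, yielding $d(p_i,q^2_m)=d(p_i,p_{i+1})+d(p_{i+1},q^2_m)$; combined with monotonicity of $\beta_s$ along $[p_m,p_{i+1}]$, this gives $d(p_m,q^2_m)=d(p_m,p_{i+1})+d(p_{i+1},q^2_m)$, contradicting the maximality in the definition of $a_m$.

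Once $w\in D$ is established, the rest is exactly the endgame of Lemma~\ref{lem:interior}: by Observations~\ref{obser:gatewayregion} and \ref{obser:rectangle}, $p_i$ is visible to $w$ so $\overline{p_iw}\subseteq e$; moreover since $R(s)\cap R(t)=\emptyset$ (the situation in which we invoke the lemma) the other endpoint of $e$, which is either a polygon vertex or $q^2_m$, lies outside $R(s)$, so $e$ must cross the bottom boundary of $R(s)$; and because every intersection point of $e$ with $R(s)$ lies in $D$, the crossing point has $x$-coordinate in $[x(s),x(p_{a_m+1})]$, as required.

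The main obstacle I anticipate is purely bookkeeping: the invariants used in the original lemma (that $\beta_s[p_1,p_i]$ is $xy$-monotone, that $\beta_s[p_m,p_{i+1}]=\beta_s[p_m,p_i]\cup \beta_s[p_i,p_{i+1}]$ with additive lengths, and the fact that $c(p_m)=q^2_m$ plays the role that $c(p_1)=q_1$ played before) all remain valid in the recursive setting because $\beta_s$ is globally $xy$-monotone and because $a_m$ was defined exactly by the analogous stair-walking procedure; so the only care required is in justifying that the definition of $a_m$ in the recursive subproblem $([a_1+1,b_k-1],[1,k],V_t(1,k))$ still yields $d(p_m,q^2_m)=d(p_m,p_{a_m})+d(p_{a_m},q^2_m)$ and that $q^2_m$ plays the role of a coupled gateway of $p_m$ among $V_t(1,k)$, which follows from the way $q^2_m$, $a_m$, and $b_m$ were chosen immediately before the lemma is stated. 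The second part of the lemma, concerning $\pi_{q^1_m}(p_{b_m})$ and the left boundary of $R(s)$, is obtained by interchanging the roles of the bottom/left boundaries and of right/down monotonicity throughout.
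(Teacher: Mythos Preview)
Your proposal is correct and follows exactly the approach the paper indicates: the paper's own proof simply states that the argument is similar to that of Lemma~\ref{lem:interior} and omits all details, and your write-up carries out precisely that adaptation, substituting $(p_1,q_1,a_1)$ by $(p_m,q_m^2,a_m)$ (resp.\ $(p_m,q_m^1,b_m)$) and tracking the analogous invariants. Your closing remarks about the bookkeeping (that the stair-walking definition of $a_m,b_m$ supplies the identity $d(p_m,q_m^2)=d(p_m,p_{a_m})+d(p_{a_m},q_m^2)$ needed in place of the $a_1$-identity, and that $q_m^2$ plays the role of $q_1$ only relative to $V_t(1,k)$) are apt and are exactly the points one must check when transporting the proof.
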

\begin{proof}
The proof is similar to that for Lemma~\ref{lem:interior} and we omit the details. \qed
\end{proof}

%Note that if $a_m=b_m=m$, then both statements of the above lemma apply.

%\paragraph{Remark.} If $a_m=b_m$, then $a_m=b_m=m$, and $\pi_{q^2_m}(p_{a_m})$ and $\pi_{q^1_m}(p_{b_m})$ refer to the same path. In this case, both statements of Lemma~\ref{lem:110} apply, i.e., the path contains a point in the interior of $R(s)$ only if one of the two cases in the lemma happens.

\begin{lemma}\label{lem:120}
%Let $e$ be the last edge of $\pi_{q^2_m}(p_{a_m})$ (resp., $\pi_{q^1_m}(p_{b_m})$).
\begin{enumerate}
\item
If the last edge of $\pi_{q^2_m}(p_{a_m})$ intersects the bottom boundary of $R(s)$,
then $p_i$ cannot be a via gateway for any $i\in [a_m+1,b_k-1]$.
\item
If the last edge of $\pi_{q^1_m}(p_{b_m})$ intersects the left boundary of $R(s)$,
then $p_i$ cannot be a via gateway for any $i\in [a_1+1,b_m-1]$.
\end{enumerate}
\end{lemma}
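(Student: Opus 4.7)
The plan is to mirror the proof of Lemma~\ref{lem:70}. I would first focus on part~1; part~2 will then follow by a symmetric argument in which the roles of the $x$- and $y$-axes are swapped and $(a_m,q^2_m,\text{bottom boundary})$ is replaced by $(b_m,q^1_m,\text{left boundary})$.

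For part~1, suppose toward a contradiction that $p_j$ is a via gateway for some $j\in[a_m+1,b_k-1]$. By Lemma~\ref{lem:110}, the last edge $e$ of $\pi_{q^2_m}(p_{a_m})$ meets the bottom boundary of $R(s)$ at a point $w$ with $x(w)\in[x(s),x(p_{a_m+1})]$. Proceeding exactly as in Lemma~\ref{lem:70}, I would choose the shortest \st\ path $\pi(s,t)$ through $p_j$ so that its sub-path from $s$ to $p_j$ is a vertical segment through $s$ followed by a horizontal segment through $p_j$; since $j>a_m$, the edge $e$ crosses this horizontal segment at a point $w'$ with $y(w')=y(p_j)>y(s)=y(w)$. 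I would then form the path $\pi$ by concatenating the sub-path of $\pi(s,t)$ from $s$ to $w'$ with the sub-path of $\pi_{q^2_m}(p_{a_m})\cup\overline{q^2_m t}$ from $w'$ to $t$ (traversing $e$ away from $p_{a_m}$). Once I can show that $\pi$ is a shortest \st\ path, the contradiction follows exactly as in Lemma~\ref{lem:70}: the sub-path of $\pi$ from $s$ to $w$ visits $w'$ in between, and its length strictly exceeds $d(s,w)=|\overline{sw}|$ due to the gratuitous vertical excursion up to $w'$.

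The main obstacle is that, unlike in Lemma~\ref{lem:70}, we cannot directly invoke $c(p_{a_m})=q^2_m$, because $q^2_m$ only minimizes $d(p_m,q)+d(q,t)$ over $q\in V_t(1,k)$ rather than over all of $V(t)$, so we cannot conclude a priori that $\pi_{q^2_m}(p_{a_m})\cup\overline{q^2_m t}$ is a shortest path from $p_{a_m}$ to $t$. To circumvent this, I would apply Lemma~\ref{lem:enclose} to $p_j$ to guarantee $c(p_j)\in V_t(1,k)$, so that $d(p_j,t)=d(p_j,c(p_j))+d(c(p_j),t)$. Combining this with the minimizer property of $q^2_m$ at $p_m$, the $xy$-monotonicity of $\beta_s$ (which gives $d(p_m,p_j)=|\overline{p_m p_j}|=|\overline{p_m p_{a_m}}|+|\overline{p_{a_m}p_j}|$ and the triangle bound $d(p_m,c(p_j))\le|\overline{p_m p_j}|+d(p_j,c(p_j))$), and the defining identity $d(p_m,q^2_m)=|\overline{p_m p_{a_m}}|+d(p_{a_m},q^2_m)$, I would derive the key inequality
\[
d(p_{a_m},q^2_m)+d(q^2_m,t)\;\le\;|\overline{p_{a_m}p_j}|+d(p_j,t).
\]
A direct $L_1$-length calculation along $e$ (splitting into the cases where $e$ has positive or negative slope) then yields $|\pi|\le d(s,t)$; together with the trivial $|\pi|\ge d(s,t)$ (since $\pi$ is a valid \st\ path), this forces equality, so $\pi$ is a shortest \st\ path and the contradiction sketched above goes through.

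Part~2 follows by the symmetric plan: choose the sub-path of $\pi(s,t)$ from $s$ to $p_j$ to be horizontal through $s$ then vertical through $p_j$, so that the last edge of $\pi_{q^1_m}(p_{b_m})$ crosses this vertical segment at a point $w'$ with $x(w')>x(w)=x(s)$. Because $j<b_m\le m$, the $xy$-monotone $\beta_s$ yields $|\overline{p_m p_j}|=|\overline{p_m p_{b_m}}|+|\overline{p_{b_m}p_j}|$, and the analogous derivation gives the companion inequality $d(p_{b_m},q^1_m)+d(q^1_m,t)\le|\overline{p_{b_m}p_j}|+d(p_j,t)$; the mirrored $L_1$-length computation (with $x$ and $y$ interchanged) then completes the contradiction.
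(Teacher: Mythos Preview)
Your proposal is correct and follows essentially the same route as the paper: both proofs set up the L-shaped sub-path from $s$ to $p_j$, find the crossing point $w'$ on the last edge $e$, and use Lemma~\ref{lem:enclose} to ensure that the via gateway $p_j$ has a coupled gateway $q_j\in V_t(1,k)$, which is exactly what overcomes the obstacle you identified (that $q^2_m$ is only optimal over $V_t(1,k)$ rather than over all of $V(t)$).

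The only difference is in how the concatenated path $\pi$ is certified to be a shortest \st\ path. You push the comparison back to $p_{a_m}$ and $p_j$, derive the inequality $d(p_{a_m},q^2_m)+d(q^2_m,t)\le|\overline{p_{a_m}p_j}|+d(p_j,t)$, and then unwind it via an explicit $L_1$ length computation along $e$. The paper instead compares directly at $w'$: since $q^2_m$ minimizes $d(p_{a_m},q')+d(q',t)$ over $V_t(1,k)$ and $w'$ lies on $\pi_{q^2_m}(p_{a_m})$, it follows that $q^2_m$ also minimizes $d(w',q')+d(q',t)$ over $V_t(1,k)$; and since $q_j\in V_t(1,k)$ lies on a shortest path from $w'$ to $t$, $q_j$ is also such a minimizer. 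Hence the two candidate sub-paths from $w'$ to $t$ have \emph{equal} length, so $\pi$ is automatically a shortest \st\ path with no case analysis on the slope of $e$ needed. Your approach reaches the same conclusion and is perfectly valid; the paper's shortcut just spares you the coordinate bookkeeping.
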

\begin{proof}
The proof is similar to that of Lemma~\ref{lem:70}, but also relies on Lemma~\ref{lem:enclose}.
We briefly discuss it below.  We only prove the first part of the lemma
since the second part is similar. Let $e$ be the last edge of $\pi_{q^2_m}(p_{a_m})$. To simplify the notation, let $i=a_m$ and $q=q_m^2$.
%if $e$ is the last edge of $\pi_{q^1_m}(p_{b_m})$, then let $i=b_m$ and $q=q_m^1$.
%As such, in either case $e$ is the last edge of $\pi_q(p_i)$.

\begin{figure}[t]
\begin{minipage}[t]{\linewidth}
\begin{center}
\includegraphics[totalheight=1.5in]{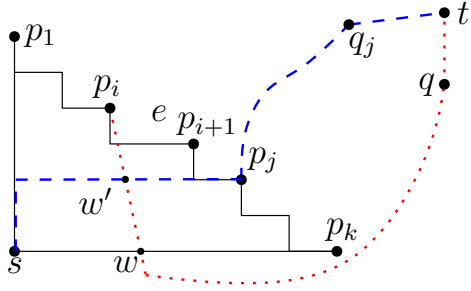}
\caption{\footnotesize Illustrating the the proof of Lemma~\ref{lem:120}: The (red) dotted path is $\pi(p_i,t)$ and the (blue) dashed path is $\pi(s,t)$. }
\label{fig:notpossible1}
\end{center}
\end{minipage}
\vspace*{-0.15in}
\end{figure}

Let $w$ be the
intersection of $e$ and the bottom boundary of $R(s)$. By Lemma~\ref{lem:110}, $x(w)\in [x(s),x(p_{i+1})]$. Assume to the contrary that $p_j$ for some $j\in [a_m+1,b_k-1]$ is a via
gateway. Then, by Lemma~\ref{lem:enclose}, there must be a shortest \st\
path $\pi(s,t)$ that contains $\overline{sp_j}$ and a gateway of $q_j$ in $V_t(1,k)$.
Without loss of generality, we assume that the sub-path of $\pi(s,t)$ between $s$ and $p_j$, denoted by $\pi(s,p_j)$, consists of a vertical segment through $s$ and a horizontal segment through $p_j$ (e.g., see Fig.~\ref{fig:notpossible1}).
Then, $\pi(s,p_j)$ intersects $e$ at a point, say, $w'$. Since $j\leq b_k-1<k$, $y(p_j)>y(p_k)=y(s)$. Thus, $y(w')>y(s)$.

Let $\pi(p_{i},t)$ denote the path $\pi_{q}(p_{i})\cup
\overline{qt}$, which contains $w'$.
Recall that $q$ is a gateway in $V_t(1,k)$ that minimizes the value
$d(p_{i},q')+d(q',t)$ for all $q'\in V_t(1,k)$. This implies that
$q$ is a gateway in $V_t(1,k)$ that minimizes the value
$d(w',q')+d(q',t)$ for all $q'\in V_t(1,k)$. Let $\pi'(w',t)$ be the
subpath of $\pi(p_{i},t)$ between $w'$ and $t$.

Let $\pi(w',t)$ be the sub-path of $\pi(s,t)$ between $w'$ and $t$. Since
$\pi(s,t)$ is a shortest \st\ path, $\pi(w',t)$ is also a shortest path
from $w'$ to $t$. Since $\pi(w',t)$ contains a gateway $q_j$ in
$V_t(1,k)$, $q_j$ is a gateway in $V_t(1,k)$ that minimizes the value
$d(w',q')+d(q',t)$ for all $q'\in V_t(1,k)$. Therefore,
the length of $\pi'(w',t)$ must be the same as that of
$\pi(w',t)$.
Hence, if we replace the subpath $\pi(w',t)$ of $\pi(s,t)$ by
$\pi'(w',t)$, we obtain another shortest \st\ path $\pi'(s,t)$.

Notice that the sub-path of $\pi'(s,t)$ between $s$ and $w$ is the concatenation of the sub-path of $\pi(s,p_j)$ from $s$ to $w'$ and $\overline{w'w}$, whose length is
strictly larger than $|\overline{sw}|$ because $y(w')>y(s)=y(w)$.
However, since $d(s,w)=|\overline{sw}|$,
$\pi'(s,t)$ cannot be a shortest path. Thus we obtain contradiction.
\qed
\end{proof}

In constant time we can check whether the two cases in
Lemma~\ref{lem:120} happen. If both cases happen, then we can stop the
algorithm. If the second case happens and the first one does not, then
we recursively work on the subproblem $([a_m+1,b_k-1],[1,k],V_t(1,k))$.
If the first case happens and the
second one does not, then we recursively work on the subproblem $([a_1+1,b_m-1],[1,k],V_t(1,k))$.
In the following, we assume that neither case happens.
By Lemma~\ref{lem:110}, neither $\pi_{q^2_m}(p_{a_m})$ nor
$\pi_{q^1_m}(p_{b_m})$ contains a point in the interior of $R(s)$.
Consequently, we have the following lemma.

\begin{lemma}\label{lem:enclose1}
\begin{enumerate}
\item
For each $i\in [{a_m+1},b_k-1]$, if $p_i$ is a via gateway, then $p_i$ has a coupled gateway in $V_t(m,k)$.
If $q_m^2\neq q_k$, then $\pi_{q_m^2}(p_{a_m})$ does not intersect $\pi_{q_k}(p_{b_k})$.
\item
For each $i\in [{a_1+1},b_m-1]$, if $p_i$ is a via gateway, then $p_i$ has a coupled gateway in $V_t(1,m)$.
If $q_m^1\neq q_1$, then $\pi_{q_m^1}(p_{b_m})$ does not intersect $\pi_{q_1}(p_{a_1})$.
\end{enumerate}
\end{lemma}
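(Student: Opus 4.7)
The plan is to adapt the proofs of Lemmas~\ref{lem:enclose} and~\ref{lem:intersect} by substituting the boundary pair $(p_{a_1}, q_1)$ with $(p_{a_m}, q_m^2)$ in part~(1) and $(p_{b_k}, q_k)$ with $(p_{b_m}, q_m^1)$ in part~(2). Since the two parts are symmetric, I would focus on part~(1).

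For the non-intersection claim when $q_m^2 \neq q_k$, the crucial strict inequality is $d(p_{a_m}, q_m^2) + d(q_m^2, t) < d(p_{a_m}, q_k) + d(q_k, t)$. Because $q_m^2$ is defined as the \emph{last} minimizer of $d(p_m, q) + d(q, t)$ over $V_t(1, k)$ in counterclockwise order, and $q_k$ lies strictly later than $q_m^2$ in $V_t(1, k)$, the value at $q_k$ must be strictly larger at $p_m$; combining this with $d(p_m, q_m^2) = d(p_m, p_{a_m}) + d(p_{a_m}, q_m^2)$ (definition of $a_m$) and the triangle inequality on the $q_k$ side transfers the strict inequality to $p_{a_m}$. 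I then replicate Lemma~\ref{lem:intersect} almost verbatim: assuming the two tree-paths meet at some $w$, swap the $w$-to-$t$ tails of $\pi_{q_m^2}(p_{a_m}) \cup \overline{q_m^2 t}$ and $\pi_{q_k}(p_{b_k}) \cup \overline{q_k t}$; the strict inequality forces $|\pi_2'| > |\pi_1'|$, and the second swap yields a $p_{b_k}$-to-$t$ path through $q_m^2$ of length strictly less than $d(p_{b_k}, q_k) + d(q_k, t)$. Invoking $d(p_k, q_k) = d(p_k, p_{b_k}) + d(p_{b_k}, q_k)$ (definition of $b_k$) then lifts this to $d(p_k, q_m^2) + d(q_m^2, t) < d(p_k, q_k) + d(q_k, t)$, contradicting that $q_k$ is a coupled gateway of $p_k$.

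For the coupled-gateway refinement, fix a via gateway $p_i$ with $i \in [a_m + 1, b_k - 1]$ and let $\pi(p_i, t)$ be the sub-path from $p_i$ to $t$ of some shortest \st\ path through $p_i$. Following the proof of Lemma~\ref{lem:enclose}, the five arcs $\pi_{q_m^2}(p_{a_m})$, $\overline{q_m^2 t}$, $\overline{t q_k}$, $\pi_{q_k}(p_{b_k})$, and $\beta_s[p_{a_m}, p_{b_k}]$ bound a region $D'$ disjoint from $s$, and the portion $B_t(m, k) = \partial R(t) \cap D'$ carries exactly the gateways $V_t(m, k)$. Lemma~\ref{lem:optint} prevents $\pi(p_i, t)$ from re-entering the interior of $R(s)$, so starting at $p_i \in \beta_s[p_{a_m}, p_{b_k}]$ the path must leave $D'$ across one of the four non-$\beta_s$ arcs. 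A crossing of $\pi_{q_m^2}(p_{a_m})$ or $\overline{q_m^2 t}$ splices with the tail of $\pi_{q_m^2}(p_{a_m}) \cup \overline{q_m^2 t}$ to exhibit a shortest $p_i$-to-$t$ path through $q_m^2$; symmetrically a crossing of $\pi_{q_k}(p_{b_k})$ or $\overline{t q_k}$ produces one through $q_k$; and a crossing of $B_t(m, k)$ invokes Lemma~\ref{lem:extend}(4) to supply a shortest path through some $q \in V_t(m, k)$. Since $q_m^2, q_k \in V_t(m, k)$, in every case a coupled gateway of $p_i$ lies in $V_t(m, k)$.

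The main obstacle I anticipate is justifying that the five arcs above form a \emph{simple} closed curve, specifically that $\pi_{q_m^2}(p_{a_m})$ avoids $\overline{q_k t}$ and the interior of $R(t)$ (and symmetrically for $\pi_{q_k}(p_{b_k})$ and $\overline{q_m^2 t}$). Lemma~\ref{lem:disconnect} settled this previously by exploiting that $\pi_{q_1}(p_{a_1}) \cup \overline{q_1 t}$ is a \emph{globally} shortest $p_{a_1}$-to-$t$ path, whereas here $q_m^2$ is only optimal over $V_t(1, k)$, so the argument does not transfer verbatim. I would close this gap by noting that Lemma~\ref{lem:enclose} locates any coupled gateway of $p_{a_m}$ in $V_t(1, k)$, and the tie-breaking defining $q_m^2$ then identifies $q_m^2$ itself as a coupled gateway of $p_{a_m}$ whenever $p_{a_m}$ is via, after which Lemma~\ref{lem:disconnect} applies directly; when $p_{a_m}$ is not via, a direct application of Lemma~\ref{lem:extend}(3, 5) to any putative crossing of $\pi_{q_m^2}(p_{a_m})$ with $\partial R(t)$ forces a second gateway of $V(t)$ to appear on the tree-path, which contradicts the fact that a root-to-node path in $T(q_m^2)$ meets no vertex of $V(t)$ besides $q_m^2$.
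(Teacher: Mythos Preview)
Your non-intersection argument for $\pi_{q_m^2}(p_{a_m})$ and $\pi_{q_k}(p_{b_k})$ is fine and essentially matches the paper's swap argument.  The problem lies in the region construction and, specifically, in the ``not via'' branch of your proposed fix.

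Your claim that ``a root-to-node path in $T(q_m^2)$ meets no vertex of $V(t)$ besides $q_m^2$'' is not justified: internal vertices of $\pi_{q_m^2}(p_{a_m})$ are polygon vertices, but polygon vertices are themselves type-2 Steiner points and can be gateways of $t$; moreover a gateway could lie in the interior of an edge of the path.  More importantly, Lemma~\ref{lem:extend}(3,5) applies only to \emph{shortest} paths from a point to $t$, and when $p_{a_m}$ is not via you have no guarantee that $\pi_{q_m^2}(p_{a_m})\cup\overline{q_m^2t}$ is such a shortest path (since $q_m^2$ is only a minimizer over $V_t(1,k)$, not over all of $V(t)$).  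So the ``not via'' case collapses, and with it the simplicity of your five-arc closed curve.

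The paper avoids this difficulty by \emph{not} building a fresh region.  Instead it reuses the region $D(1,k)$ already established in Lemma~\ref{lem:enclose} and shows that $\pi_{q_m^2}(p_{a_m})$ lies inside it and partitions it.  The extra ingredient you are missing is a second non-intersection claim: $\pi_{q_m^2}(p_{a_m})$ does not intersect $\pi_{q_1}(p_{a_1})$ (proved by the same swap argument together with Observation~\ref{obser:tie}).  Once that is known, the path starts at $p_{a_m}\in\partial D(1,k)$, cannot cross $\pi_{q_1}(p_{a_1})$, $\pi_{q_k}(p_{b_k})$, or the interior of $R(s)$, and therefore can leave $D(1,k)$ only through $B_t(1,k)$; Lemma~\ref{lem:extend}(5) then pins the remaining segment to $B_t(1,k)$.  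This traps the path without ever needing $\pi_{q_m^2}(p_{a_m})\cup\overline{q_m^2t}$ to be a global shortest $p_{a_m}$-to-$t$ path.  The paper also treats separately the boundary cases $q_m^2=q_k$ (where the two paths may share a prefix and one argues with their first meeting point $w$) and $q_m^2=q_1$ (trivial since then $V_t(m,k)=V_t(1,k)$), neither of which your proposal addresses.
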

\begin{proof}
We only prove the first part of the lemma, since the second part is
similar. Suppose $p_i$ is via gateway with $i\in [{a_m+1},b_k-1]$.
Then, there is a shortest \st\ path $\pi(s,t)$ that contains $p_i$, and let
$\pi(p_i,t)$ be the subpath between $p_i$ and $t$. By
Lemma~\ref{lem:optint}, $\pi(p_i,t)$ does not contain any interior
point of $R(s)$.

We first assume that $q_m^2\neq q_1$. Due to Observation~\ref{obser:tie}, we claim that the path
$\pi_{q_m^2}(p_{a_m})$ does not intersect the path
$\pi_{q_1}(p_{a_1})$. Indeed, assume to the contrary that the two
paths intersect, say, at the point $w$. Then, by the definitions of
$q_1$ and $q_2^m$, each of them is a point in $V_t(1,k)$ minimizing
the value $d(w,q)+d(q,t)$ for all $q\in V_t(1,k)$. This means that
$d(p_{a_1},q_1)+d(q_1,t)=d(p_{a_1},q_m^2)+d(q_m^2,t)$. However, this
contradicts with Observation~\ref{obser:tie} since $q_m^2\neq q_1$ and
$q_m^2\in V_t(1,k)$.

\begin{figure}[t]
\begin{minipage}[t]{\linewidth}
\begin{center}
\includegraphics[totalheight=1.7in]{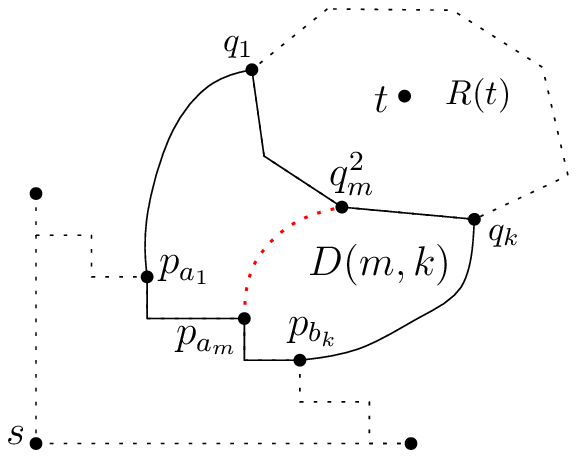}
\caption{\footnotesize
The region bounded by the sold curves is $D(1,k)$, and $\pi_{q_m^2}(p_{a_m})$ (the red dotted curve) partitions it into two subregions, one of which is $D(m,k)$.}
\label{fig:cycle1}
\end{center}
\end{minipage}
\vspace*{-0.15in}
\end{figure}

Depending on whether $q_m^2$ is $q_k$, there are two cases.

If $q_m^2\neq q_k$, then by the similar proof as above, the path
$\pi_{q_m^2}(p_{a_m})$ does not intersect $\pi_{q_k}(p_{b_k})$ either.
Recall that we have defined a region $D(1,k)$ that is bounded by $\beta_{s}[p_{a_1},p_{b_k}]$, $\pi_{q_1}(p_{a_1})$, $\pi_{q_k}(p_{b_k})$, and a boundary portion $B_t(1,k)$ of $R(t)$, e.g., see Fig.~\ref{fig:cycle1}.
Recall that $\pi_{q_m^2}(p_{a_m})$ does not intersect the interior of
$R(s)$. Since $p_{a_m}\in \beta_{s}[p_{a_1},p_{b_k}]$, $\pi_{q_m^2}(p_{a_m})$ does not intersect either
$\pi_{q_1}(p_{a_1})$ or $\pi_{q_k}(p_{b_k})$, and $t$ is not in $D(1,k)$, if $w$ is the first
point of $\pi_{q_m^2}(p_{a_m})$ on $\partial R(t)$ (such a point $w$
must exists since $q_m^2$ is on $\partial R(t)$), then $w$ must be on
$B_t(1,k)$. By our way of defining $B_t(1,k)$ and according to Lemma~\ref{lem:extend}(5),
%$B_t(1,k)$ contains a gateway $q$ such that $\overline{wq}\cup \overline{qt}$ is a
%shortest path from $w$ to $t$, and further,
the sub-path of $\pi_{q_m^2}(p_{a_m})$ between $w$ and $q_m^2$ is $\overline{wq_m^2}$, which must be on $B_t(1,k)$.
This implies that $\pi_{q_m^2}(p_{a_m})$ is in $D(1,k)$.
Since both endpoints of $\pi_{q_m^2}(p_{a_m})$ are on the boundary of $D(1,k)$, $\pi_{q_m^2}(p_{a_m})$ partitions $D(1,k)$ into two subregions, one of which, denoted by $D(m,k)$, contains $\beta_s[p_{a_m},p_{b_k}]$. Let $B_t(m,k)$ denote the portion of $B_t(1,k)$ in $D(m,k)$. By definition, $V_t(m,k)= V(t)\cap B_t(m,k)$. Recall that both $q_m^2$ and $q_k$ are in $V_t(m,k)$.

We proceed to show that $V_t(m,k)$ contains a coupled gateway of $p_i$.
%By Lemma~\ref{lem:enclose}, a point that minimizes the value $d(p_i,q)+d(q,t)$ among all $q\in V_t(1,k)$ is a coupled gateway of $p_i$. We will show that such a point exists in $V_t(m,k)$.
If the path $\pi(p_i,t)$ intersects $\pi_{q_m^2}(p_{a_m})$, then by the similar analysis as before, $q_m^2$ is a coupled gateway of $p_i$. Similarly, if $\pi(p_i,t)$ intersects $\pi_{q_k}(p_{b_k})$, then $q_k$ is a coupled gateway of $p_i$. In the following, we assume that $\pi(p_i,t)$ does not intersect either path.
Recall that the path $\pi(p_i,t)$ does not contain any interior point of $R(s)$.
Since $p_i$ is in $\beta_s[p_{a_m},p_{b_k}]$ (and thus is in $D(m,k)$)
but $t$ is not in $D(m,k)$, $\pi(p_i,t)$ must intersect $B_t(k,m)$,
say, at a point $w$. By our way of defining $B_t(1,k)$ and according
to Lemma~\ref{lem:extend}, $B_t(k,m)$ contains a gateway $q$ such that
$\overline{wq}\cup \overline{qt}$ is a shortest path from $w$ to $t$.
This implies that $q$ is a coupled gateway of $p_i$.
Since $q\in B_t(m,k)$ and $V_t(m,k)= V(t)\cap B_t(m,k)$, $q$ is in $V_t(m,k)$. The lemma is thus proved.

Next, we consider the case where $q_m^2=q_k$. In this case, $V_t(m,k)=\{q_k\}$ and our goal is to show that $q_k$ is a coupled gateway of $p_i$. If we move on $\pi_{q_m^2}(p_{a_m})$ from $p_{a_m}$ to $q_m^2$, let $w$ be the first intersection of $\pi_{q_m^2}(p_{a_m})$ and $\pi_{q_k}(p_{b_k})$. Let $\pi(p_{a_m},w)$ be the sub-path of $\pi_{q_m^2}(p_{a_m})$ between $p_{a_m}$ and $w$, and $\pi(p_{b_k},w)$ the sub-path of $\pi_{q_k}(p_{b_k})$ between $p_{b_k}$ and $w$.
Recall that $\pi_{q_m^2}(p_{a_m})$ does not intersect
$\pi_{q_1}(p_{a_1})$ and does not contain any interior point of
$R(s)$. We claim that $\pi(p_{a_m},w)$ is contained in the region
$D(1,k)$. Indeed, this is obviously true if $\pi(p_{a_m},w)$ does not
intersect $\partial R(t)$. Otherwise, let $z$ be the first
intersection between $\pi(p_{a_m},w)$ and $\partial R(t)$. Note that
$z$ must be on $B_t(1,k)$. According to Lemma~\ref{lem:extend}(5), the
sub-path of $\pi_{q_m^2}(p_{a_m})$ between $z$ and $q_m^2$ must be the
segment $\overline{zq_m^2}$, which is on $B_t(1,k)$. This also implies
that $w\in \overline{zq_m^2}$ and $\pi(p_{a_m},w)$ is in $D(1,k)$, and
further, $\pi(p_{a_m},w)$ does not contain any point in the interior
of $R(t)$. Let $D$ be the sub-region of $D(1,k)$ bounded by
$\pi(p_{a_m},w)$, $\pi(p_{b_k},w)$, and $\beta_{s}[p_{a_m},p_{b_k}]$.
Clearly, $D$ does not contain $t$.

Now consider the path $\pi(p_i,t)$. Since $p_i\in \beta_{s}[p_{a_m},p_{b_k}]\subseteq D$, $t\not\in D$,
$\pi(p_i,t)$ does not contain any interior point of $R(s)$, and
neither $\pi(p_{a_m},w)$ nor $\pi(p_{b_k},w)$ contains an interior
point of $R(t)$,  $\pi(p_i,t)$ must intersect either $\pi(p_{a_m},w)$ or
$\pi(p_{b_k},w)$ (and thus intersect either $\pi_{q_m^2}(p_{a_m})$ or $\pi_{q_k}(p_{b_k})$). In either case, by the similar analysis as above, $q_k$ ($=q_m^2$) is a coupled gateway of $p_i$. The lemma is thus proved.

The above prove the case where $q_m^2\neq q_1$. If $q_m^2=q_1$, then $V_t(m,k)=V_t(1,k)$. By Lemma~\ref{lem:enclose}, it is trivially true that $p_i$ has a coupled gateway in $V_t(m,k)$.
Further, due to Observation~\ref{obser:tie}, by similar analysis as before, $\pi_{q_m^2}(p_{a_m})$ cannot intersect $\pi_{q_k}(p_{b_k})$. The lemma thus follows.
\qed
\end{proof}

Based on Lemma~\ref{lem:enclose1}, our algorithm proceeds as follows.
If $q_m^2=q_k$, then we set $q_k$ as the candidate coupled gateway for each $p_i$ with $i\in [a_m+1,b_k-1]$. Otherwise, we call the algorithm recursively on the subproblem $([a_m+1,b_k-1],[m,k],V_t(m,k))$. Similarly, if $q_m^1=q_1$, then we set $q_1$ as the candidate coupled gateway for each $p_i$ with $i\in [a_1+1,b_m-1]$. Otherwise, we call the algorithm recursively on the subproblem $([a_1+1,b_m-1],[1,m],V_t(1,m))$.

For the running time, notice that the stair-walking procedure
spends $O(1)$ time on finding a coupled gateway for a gateway of
$V(s)$. Hence, the overall time of the stair-walking procedure in the entire
algorithm is $O(n_s)$. Consider a subproblem $([i',j'], [i,j], V_t(i,j))$.To solve it, after spending $O(|V_t(i,j)|)$ time, we
either reduce the problem to another subproblem in which the first
interval is at most half the size of $[i',j']$ and the third gateway set is
still $V_t(i,j)$, or reduce it to two sub-problems such that
each of them has the first interval at most half the size of $[i',j']$ and the third gateway
sets of the two sub-problems are two disjoint subsets of $V_t(i,j)$. Hence, if we consider the algorithm procedure as a tree structure, the height of the tree is $O(\log n_s)$ and the total time we spend on each level of the tree is $O(n_t)$. Therefore, the overall time of the algorithm is $O(n_s+n_t\log n_s)$.

\subsubsection{The equal case $q_1=q_k$}
\label{sec:equal}

For the case $q_1=q_k$, we will eventually reduce it to the above unequal case.
In this case, we will need to determine the relative positions of two
shortest paths (e.g., $\pi_{q_1}(p_{a_1})$ and $\pi_{q_1}(p_{b_k})$)
with respect to $\overline{q_1t}$. To this end, we perform the
following additional preprocessing.

Recall that we have already computed a shortest path tree $T(q_1)$ from
$q_1$ to all vertices of $G$.
In addition, we compute a post-order traversal list on $T(q_1)$ (but
excludes the root $q_1$) and store the list in a cyclic array $L(q_1)$.
This does not change the preprocessing complexities asymptotically.

Recall that $t$ is visible to $q_1$. We want to know the position of
$t$ at $L(q_1)$ if we ``insert'' $t$ into the tree $T(q_1)$ (and thus
$t$ becomes a leaf). This can
be done in $O(\log n)$ time by doing binary search on the children
of $q_1$ in $T(q_1)$. After that, given any two vertices $v_1$ and
$v_2$ of $T(q_1)$, by using $L(q_1)$, we can determine in constant time whether
$\pi_{q_1}(v_1)$ is clockwise from $\pi_{q_1}(v_2)$ with
respect to the path $\pi_{q_1}(t)=\overline{q_1t}$ (similar approach was also used in~\cite{ref:MitchellSh96}; for simplicity, we
assume that $v_2\not \in \pi_{q_1}(v_1)$ and $v_1\not\in
\pi_{q_1}(v_2)$, which is also the case in our algorithm; we say that
$\pi_{q_1}(v_2)$ is {\em clockwise from} $\pi_{q_1}(v_1)$ if we meet
$\pi_{q_1}(v_1)$ first when topologically rotating $\overline{q_1t}$
around $q_1$
clockwise; e.g., see Fig.~\ref{fig:relpos}).

\begin{figure}[t]
\begin{minipage}[t]{0.43\linewidth}
\begin{center}
\includegraphics[totalheight=1.1in]{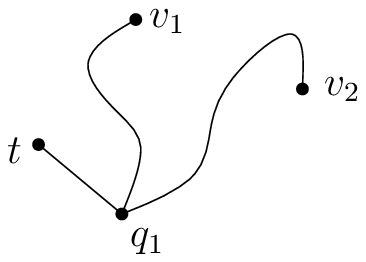}
\caption{\footnotesize Illustrating an example in which $\pi_{q_1}(v_2)$ is clockwise from $\pi_{q_1}(v_1)$ with respect to $\overline{q_1t}$.}
\label{fig:relpos}
\end{center}
\end{minipage}
\hspace*{0.05in}
\begin{minipage}[t]{0.56\linewidth}
\begin{center}
\includegraphics[totalheight=1.7in]{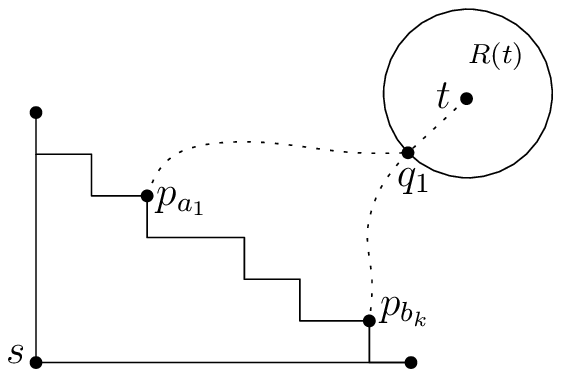}
\caption{\footnotesize Illustrating the case where $\pi_{q_1}(p_{a_m})$ is clockwise from $\pi_{q_1}(p_{b_k})$ with respect to $\overline{q_1t}$.}
\label{fig:clockwise1}
\end{center}
\end{minipage}
\vspace*{-0.15in}
\end{figure}

We first check whether $\pi_{q_1}(p_{a_1})$ is clockwise from
$\pi_{q_1}(p_{b_k})$ with respect to $\overline{q_1t}$.  If yes,
the following lemma implies that we can stop our algorithm by
setting $q_1$ as a candidate coupled gateway for all $p_i$ with $i\in
[a_1+1,b_k-1]$.

\begin{lemma}\label{lem:clockwise}
If $\pi_{q_1}(p_{a_1})$ is clockwise from $\pi_{q_1}(p_{b_k})$ with respect to $\overline{q_1t}$ (e.g., see Fig.~\ref{fig:clockwise1}), then for each $i\in [a_1+1,b_k-1]$, if $p_i$ is a via gateway, then $q_1$ is a coupled gateway of $p_i$.
\end{lemma}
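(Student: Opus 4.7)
The plan is to reduce the lemma to a single topological fact and then a one-line exchange argument: I will show that under the clockwise hypothesis, every shortest $p_i$-to-$t$ path must cross $\pi_{q_1}(p_{a_1}) \cup \pi_{q_1}(p_{b_k})$, at which point a swap argument forces $q_1$ to achieve $\min_{q \in V(t)}\{d(p_i,q)+d(q,t)\}$.

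First, I would set up the enclosing region. The three curves $\pi_{q_1}(p_{a_1})$, $\pi_{q_1}(p_{b_k})$, and $\beta_s[p_{a_1},p_{b_k}]$ share endpoints pairwise ($q_1$ at the top, $p_{a_1}$ and $p_{b_k}$ at the bottom, joined by $\beta_s$), so together they bound a closed region $D$ in $\calP$. Because the algorithm only reaches the equal case after checking the conditions of Lemma~\ref{lem:70}, neither $\pi_{q_1}(p_{a_1})$ nor $\pi_{q_1}(p_{b_k}) = \pi_{q_k}(p_{b_k})$ passes through $\mathrm{int}(R(s))$ (Lemma~\ref{lem:interior}), so $D$ lies in $\calP \setminus \mathrm{int}(R(s))$ with $\beta_s[p_{a_1},p_{b_k}]$ as its bottom boundary.

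Second, I would argue that the clockwise hypothesis forces $t \notin D$. By the definition of ``clockwise from,'' rotating $\overline{q_1 t}$ clockwise around $q_1$ encounters $\pi_{q_1}(p_{b_k})$ before $\pi_{q_1}(p_{a_1})$, so $\overline{q_1 t}$ exits $q_1$ into the angular sector opposite to the local position of $D$ at $q_1$. Since $\overline{q_1 t}$ is a straight segment and $t$ is visible to $q_1$ by Lemma~\ref{lem:extend}(1), this places $t$ outside $D$.

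Third, I would fix a shortest $s$-$t$ path through $p_i$, let $\pi(p_i,t)$ be its subpath from $p_i$ to $t$, and apply Lemma~\ref{lem:optint} to conclude $\pi(p_i,t)$ avoids $\mathrm{int}(R(s))$. Since $p_i \in \beta_s[p_{a_1},p_{b_k}] \subset \partial D$ and $t \notin D$, continuity forces $\pi(p_i,t)$ to cross $\partial D$; crossing the $\beta_s$-portion would send the path into $\mathrm{int}(R(s))$, so the crossing occurs at some point $w$ on $\pi_{q_1}(p_{a_1})$ or $\pi_{q_1}(p_{b_k})$. In the first case, $\pi_{q_1}(p_{a_1}) \cup \overline{q_1 t}$ is a shortest $p_{a_1}$-to-$t$ path (from $q_1 = c(p_1)$ and $d(p_1,q_1) = d(p_1,p_{a_1}) + d(p_{a_1},q_1)$ by the definition of $a_1$), so its subpath from $w$ to $t$ gives $d(w,t) = d(w,q_1) + |\overline{q_1 t}|$; then the triangle inequality yields $d(p_i,t) = d(p_i,w) + d(w,t) \geq d(p_i,q_1) + |\overline{q_1 t}|$, which combined with the trivial upper bound forces equality and makes $q_1$ a coupled gateway of $p_i$. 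The case $w \in \pi_{q_1}(p_{b_k})$ is symmetric, using $q_1 = q_k$ and the definition of $b_k$.

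The main obstacle will be Step~2: translating the combinatorial clockwise relation (defined via the post-order list $L(q_1)$ on the tree $T(q_1)$) into the geometric statement ``$t$ lies outside $D$.'' The subtlety is that $\pi_{q_1}(p_{a_1})$ and $\pi_{q_1}(p_{b_k})$ share a common initial segment from $q_1$ to their lowest common ancestor in $T(q_1)$, so $D$ is bounded by a non-simple closed curve in which this shared prefix acts as a tail attached to $D$ at the LCA. I will need to argue carefully that the shared tail and $\overline{q_1 t}$ lie on opposite sides of the bounding curve near $q_1$ precisely when the clockwise condition holds, which boils down to correctness of the children-ordering used to build $L(q_1)$; once that is done, the rest of the proof is routine.
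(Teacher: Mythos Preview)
Your proposal is correct and follows essentially the same route as the paper: form a region $D$ from the two tree paths and $\beta_s[p_{a_1},p_{b_k}]$, use the clockwise hypothesis to place $t$ outside $D$, invoke Lemma~\ref{lem:optint} so that $\pi(p_i,t)$ must exit $D$ through one of the two tree paths, and finish with the standard exchange argument. The one place the paper is cleaner is exactly the obstacle you flag in Step~2: rather than carrying the shared prefix as a ``tail'' attached at the LCA, the paper first lets $w$ be the first point where $\pi_{q_1}(p_{a_1})$ meets $\pi_{q_1}(p_{b_k})$ (walking from $p_{a_1}$ toward $q_1$) and builds $D$ from the truncated subpaths $\pi(p_{a_1},w)$, $\pi(p_{b_k},w)$, and $\beta_s[p_{a_1},p_{b_k}]$, so the bounding curve is simple from the outset and the ``$t\notin D$'' conclusion follows directly from the orientation data without a separate local-angle argument at $q_1$.
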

\begin{proof}
If we move from $p_{a_1}$ to $q_1$ on $\pi_{q_1}(p_{a_1})$, let $w$ be the first point of the path that intersects $\pi_{q_1}(p_{b_k})$. Let $\pi(p_{a_1},w)$ denote the subpath of $\pi_{q_1}(p_{a_1})$ between $p_{a_1}$ and $w$, and $\pi(p_{b_k},w)$ the subpath of $\pi_{q_1}(p_{b_k})$ between $p_{b_k}$ and $w$.
Since neither $\pi_{q_1}(p_{a_1})$ nor $\pi_{q_1}(p_{b_k})$ contains any interior point of $R(s)$, $\pi(p_{a_1},w)\cup \pi(p_{b_k},w)\cup \beta_s[p_{a_1},p_{b_k}]$ forms a closed cycle that divides the plane into two regions. We use $D$ to denote the region that does not contain $s$. Since $\pi_{q_1}(p_{a_1})$ is clockwise from $\pi_{q_1}(p_{b_k})$ with respect to $\overline{q_1t}$ and $p_{a_1}$ is counterclockwise from $p_{b_k}$ on $\beta_s[p_{a_1},p_{b_k}]$ with respect to $s$, the region $D$ does not contain $t$.
Further, by Lemma~\ref{lem:disconnect}, $D$ does not contain any interior point of $R(t)$ and contains at most one (i.e., $q_1$ if $w=q_1$) gateway of $t$.

Suppose $p_i$ is a via gateway with $i\in [a_1+1,b_k-1]$. There is a shortest \st\ path containing $p_i$, and we use $\pi(p_i,t)$ to denote the subpath between $p_i$ and $t$.
By Lemma~\ref{lem:optint}, $\pi(p_i,t)$ does not contain any interior point of $R(s)$. Since $i\in [a_1+1,b_k-1]$, $p_i\in \beta_s[p_{a_1},p_{b_k}]$. As $t\not\in D$, $\pi(p_i,t)$ must intersect either $\pi(p_{a_1},w)$ or $\pi(p_{b_k},w)$. In either case, by similar analysis as before (e.g., in Lemma~\ref{lem:enclose}), we can show that $q_1$ is a coupled gateway of $p_i$, and we omit the details.
%Without loss of generality, assume that $\pi(p_i,t)$ intersects $\pi(p_{a_1},w)$, say, at a point $w'$. Since $\pi_{q_1}(p_{a_1})\cup \overline{q_1t}$, which contains $w'$, is a shortest path from $p_{a_1}$ to $t$, there is a shortest path from $w'$ to $t$ that contains $q_1$. Since $\pi(p_i,t)$ contains $w'$, we obtain that there is a shortest path from $p_i$ to $t$ that contains $q_1$.
%This implies that $q_1$ is a coupled gateway of $p_i$.
\qed
\end{proof}

If $\pi_{q_1}(p_{a_1})$ is counterclockwise from $\pi_{q_1}(p_{b_k})$, then we proceed as follows.

Let $m=\lfloor(a_1+b_k)/2\rfloor$.
We compute a gateway in $V(t)$ that minimizes the value $d(p_m,q)+d(q,t)$ for all $q\in V(t)$, and in case of tie, we use $q_m^1$ to refer to the first one in $V(t)$ in the counterclockwise order from $q_1$,
and use $q_m^2$ to refer to the first one in $V(t)$ in the clockwise order from $q_1$.
We set one of $q^1_m$ and $q^2_m$ as the candidate coupled gateway of $p_m$.
Note that $q_m^1\neq q_1$ if and only if $q_m^2\neq q_1$.
Depending on whether $q_m^1=q_1$, there are two cases.

If $q_m^1\neq q_1$ (and thus $q_m^2\neq q_2$), then we apply our algorithm for the above unequal case on $[1,m]$ and the gateways of $V(t)$ from $q_1$ to $q_m^1$ in the counterclockwise order. We also apply the algorithm on $[m,k]$ and the gateways of $V(t)$ from $q_k$ to $q_m^2$ in the clockwise order. Therefore, in this case, we have reduced our problem to the unequal case.

If $q_m^1=q_1$, then $q_m^2=q_1$. In this case, we work on the problem
for the equal case recursively until the subproblems are reduced to
the unequal case (and then we apply the unequal case algorithm).
Each recursive step works on a subproblem, denoted by
$([i',j'], [i,j], V(t))$ with $[i',j']\subseteq [i,j]\subseteq [1,k]$,
where we want to find the candidate coupled gateways in the interval
$[i',j']$, $q_1$ is a coupled gateway for both $p_{a_i}$ and
$p_{b_j}$, and $\pi_{q_1}(p_{a_i})$ is counterclockwise from
$\pi_{q_1}(p_{b_j})$. Initially, our subproblem is
$([a_1+1,b_k-1],[1,k], V(t))$. We proceed as follows.

Define $a_m$ and $b_m$ in the same way as before in the unequal case. Similarly as before, if $a_m=b_k-1$ but $b_m>a_1+1$, the candidate coupled gateways of $p_i$ for all
$i\in [m,k]$ have been computed, and thus we work recursively on the
subproblem $([a_1+1,b_m-1],[1,k],V(t))$; if $b_m=a_1+1$ but
$a_m<b_k-1$, then we work recursively on the subproblem
$([a_m+1,b_k-1],[1,k],V(t))$. Otherwise both $b_m>a_1+1$ and
$a_m<b_k-1$ hold, and we proceed as follows.

Note that Lemmas~\ref{lem:110} and \ref{lem:120} still hold.
In constant time we can check whether the two cases in
Lemma~\ref{lem:120} happen. If both cases happen, then we can stop the
algorithm. If the second case happens but the first one does not, then
we recursively work on the subproblem $([a_m+1,b_k-1],[1,k],V(t))$. If the first case happens but the
second one does not, then we recursively work on the subproblem $([a_1+1,b_m-1],[1,k],V(t))$.
In the following, we assume that neither case happens.
By Lemma~\ref{lem:110}, neither $\pi_{q_1}(p_{a_m})$ nor
$\pi_{q_1}(p_{b_m})$ contains a point in the interior of $R(s)$.
%We discuss our algorithm only on the interval $[m,k]$ since the
%algorithm for $[1,m]$ is symmetric.

\begin{figure}[t]
\begin{minipage}[t]{\linewidth}
\begin{center}
\includegraphics[totalheight=2.0in]{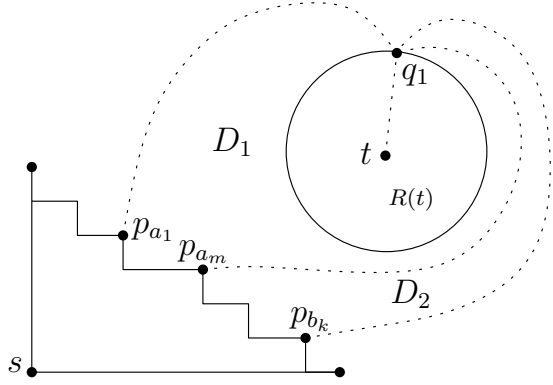}
\caption{\footnotesize Illustrating the case where $\pi_{q_1}(p_{a_m})$ is clockwise from $\pi_{q_1}(p_{b_k})$ with respect to $\overline{q_1t}$.}
\label{fig:clockwise}
\end{center}
\end{minipage}
\vspace*{-0.15in}
\end{figure}

In constant time, we further check whether $\pi_{q_1}(p_{a_m})$ is clockwise from $\pi_{q_1}(p_{b_k})$ with respect to $\overline{q_1t}$.
We have the following lemma.

\begin{lemma}\label{lem:enclose2}
Let $\pi$ be either $\pi_{q_1}(p_{a_m})$ or $\pi_{q_1}(p_{b_m})$.
If $\pi$ is clockwise from $\pi_{q_1}(p_{b_k})$ with respect to
$\overline{q_1t}$ (e.g., see Fig.~\ref{fig:clockwise}), then for each
$i\in [{a_m+1},b_k-1]$, if $p_i$ is a via gateway, then $q_1$ is a
coupled gateway of $p_i$.
Otherwise, for each $i\in [{a_1+1},b_m-1]$, if $p_i$ is a via gateway,
then $q_1$ is a coupled gateway of $p_i$.
\end{lemma}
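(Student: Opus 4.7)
The plan is to extend the argument of Lemma~\ref{lem:clockwise} to the median situation in the equal case. Both parts of the lemma are established by the same geometric strategy, differing only in which pair of paths is used to bound the relevant region, so I will describe the argument for the first case (with $\pi = \pi_{q_1}(p_{a_m})$ clockwise from $\pi_{q_1}(p_{b_k})$ and the conclusion about $[a_m+1,b_k-1]$), and indicate the symmetric modifications for the ``Otherwise'' case.

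First, since both $\pi_{q_1}(p_{a_m})$ and $\pi_{q_1}(p_{b_k})$ end at the common vertex $q_1$, I would traverse $\pi_{q_1}(p_{a_m})$ from $p_{a_m}$ toward $q_1$ and let $w$ be its first intersection with $\pi_{q_1}(p_{b_k})$; such a $w$ always exists (possibly $w=q_1$). Writing $\pi(p_{a_m},w)$ and $\pi(p_{b_k},w)$ for the respective sub-paths, I would show that the closed curve $\pi(p_{a_m},w)\cup\pi(p_{b_k},w)\cup \beta_s[p_{a_m},p_{b_k}]$ bounds a region $D$ not containing $s$. This uses the fact that, by the pre-check of Lemma~\ref{lem:110} already performed before invoking this lemma, neither $\pi_{q_1}(p_{a_m})$ nor $\pi_{q_1}(p_{b_k})$ enters the interior of $R(s)$, so these pieces are disjoint from $\beta_s[p_{a_m},p_{b_k}]$ except at their shared endpoints. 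The key orientation observation is that the hypothesis ``$\pi_{q_1}(p_{a_m})$ clockwise from $\pi_{q_1}(p_{b_k})$ with respect to $\overline{q_1t}$'', combined with the clockwise order of $p_{a_m},\ldots,p_{b_k}$ on $\beta_s$ around $s$, forces $t$ to lie outside $D$.

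Next, by the same reasoning as in Lemma~\ref{lem:disconnect}, neither $\pi_{q_1}(p_{a_m})$ nor $\pi_{q_1}(p_{b_k})$ enters the interior of $R(t)$, so $D$ contains no interior point of $R(t)$ and the only gateway of $V(t)$ that can possibly sit on $\partial D$ is $q_1$ itself (which arises precisely in the degenerate sub-case $w=q_1$). Now fix a via gateway $p_i$ with $i\in[a_m+1,b_k-1]$ and let $\pi(p_i,t)$ be the sub-path from $p_i$ to $t$ in a witnessing shortest \st\ path. By Lemma~\ref{lem:optint}, $\pi(p_i,t)$ avoids the interior of $R(s)$, hence cannot cross $\beta_s[p_{a_m},p_{b_k}]$; since $p_i\in\beta_s[p_{a_m},p_{b_k}]\subseteq D$ but $t\notin D$, the path must leave $D$ by crossing either $\pi(p_{a_m},w)$ or $\pi(p_{b_k},w)$ at some point $w'$. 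Because $q_1$ is a coupled gateway of both $p_{a_m}$ (by the definition of $a_m$) and $p_{b_k}$, the identity $d(w',q_1)+d(q_1,t)=\min_{q\in V(t)}(d(w',q)+d(q,t))$ holds, and propagating this minimum along the sub-path from $p_i$ through $w'$ (as in the proofs of Lemmas~\ref{lem:enclose} and \ref{lem:clockwise}) shows that $q_1$ is also a coupled gateway of $p_i$.

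For the ``Otherwise'' case, I would carry out the analogous construction with $\pi_{q_1}(p_{b_m})$ in place of $\pi_{q_1}(p_{a_m})$ and with $\pi_{q_1}(p_{a_1})$ as the other bounding path, obtaining an enclosed region $D'$ on the opposite side that excludes both $s$ and $t$ and concluding that any via gateway in $[a_1+1,b_m-1]$ has $q_1$ as a coupled gateway. The main obstacle is the orientation bookkeeping: verifying that each hypothesized relative position of $\pi$ with respect to $\pi_{q_1}(p_{b_k})$ actually forces $t$ to lie outside the bounded region $D$, and carefully handling the degenerate case $w=q_1$ in which the two paths share only the endpoint $q_1$ rather than crossing properly, which is exactly the delicate point already resolved in Lemma~\ref{lem:clockwise}.
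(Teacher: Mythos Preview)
Your argument is correct and captures the same geometric mechanism, but the paper organizes the proof differently, and that difference is worth noting. You adapt Lemma~\ref{lem:clockwise} directly: in each case you build the trap region from the middle path $\pi$ together with one outer path ($\pi_{q_1}(p_{b_k})$ in the clockwise case, $\pi_{q_1}(p_{a_1})$ in the otherwise case), then argue $t$ lies outside it. The paper instead first forms a single ambient region $D$ bounded by the two \emph{outer} paths $\pi_{q_1}(p_{a_1})$ and $\pi_{q_1}(p_{b_k})$ together with $\beta_s[p_{a_1},p_{b_k}]$, uses the standing hypothesis of the equal case (that $\pi_{q_1}(p_{a_1})$ is counterclockwise from $\pi_{q_1}(p_{b_k})$) to conclude $R(t)\subseteq D$, and observes that the tree paths do not cross so that $\pi_{q_1}(p_{a_m})$ lies entirely in $D$ and partitions it into $D_1$ and $D_2$. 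The hypothesis on $\pi$ then simply decides which of $D_1,D_2$ contains $R(t)$, and the other subregion becomes the trap for the relevant range of $p_i$.

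The paper's decomposition buys a cleaner treatment of the ``Otherwise'' case: since the hypothesis is stated relative to $\pi_{q_1}(p_{b_k})$, partitioning the pre-built $D$ reads it off directly, whereas in your version the otherwise case switches the outer bounding path to $\pi_{q_1}(p_{a_1})$, and the link between ``$\pi$ counterclockwise from $\pi_{q_1}(p_{b_k})$'' and ``$t$ outside the region bounded by $\pi_{q_1}(p_{a_1})$ and $\pi$'' is left implicit. It can be filled in (the tree structure of $T(q_1)$ plus the fixed counterclockwise position of $\pi_{q_1}(p_{a_1})$ suffice), but the paper's route avoids that extra step.
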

\begin{proof}
We only prove the case where $\pi$ is $\pi_{q_1}(p_{a_m})$, since the other case is similar.

Note that $\pi_{q_1}(p_{a_1})$ and $\pi_{q_1}(p_{b_k})$ do not cross each other because they are paths in the shortest tree $T(q_1)$.
Since neither $\pi_{q_1}(p_{a_1})$ nor $\pi_{q_1}(p_{b_k})$ contains
any interior point of $R(s)$, the two paths along with
$\beta[p_{a_1},p_{b_k}]$ form a closed cycle that divides the plane into two regions, one
of which (denoted by $D$) does not contain $s$. Since $\pi_{q_1}(p_{a_1})$ is counterclockwise from
$\pi_{q_1}(p_{b_k})$ with respect to $\overline{q_1t}$, $p_{a_1}$ is counterclockwise from $p_{b_k}$ on $\beta_s[p_{a_1},p_{b_k}]$ with respect to $s$, and neither $\pi_{q_1}(p_{a_1})$ nor $\pi_{q_1}(p_{b_k})$ contains any interior point of $R(t)$ (by Lemma~\ref{lem:disconnect}), $D$ contains
$R(t)$.

Recall that $\pi_{q_1}(p_{a_m})$ does not contain any interior point of
$R(s)$. Also, $\pi_{q_1}(p_{a_m})$ does not cross either
$\pi_{q_1}(p_{a_1})$ or $\pi_{q_1}(p_{b_k})$ since they are paths in the shortest path tree $T(q_1)$.
Since both endpoints of
$\pi_{q_1}(p_{a_m})$  are on the boundary of $D$,
$\pi_{q_1}(p_{a_m})$ partitions $D$ into two subregions  (e.g., see Fig.~\ref{fig:clockwise}): One
subregion, denoted by $D_1$, is bounded by $\pi_{q_1}(p_{a_m})$,
$\pi_{q_1}(p_{a_1})$, and $\beta_s[p_{a_1},p_{a_m}]$, and the other,
denoted by $D_2$, is bounded by $\pi_{q_1}(p_{a_m})$,
$\pi_{q_1}(p_{b_k})$, and $\beta_s[p_{a_m},p_{b_k}]$.
In addition, by the similar analysis, we can show that Lemma~\ref{lem:disconnect} also applies to the path $\pi_{q_1}(p_{a_m})$.

If $\pi_{q_1}(p_{a_m})$ is clockwise from $\pi_{q_1}(p_{b_k})$ with
respect to $\overline{q_1t}$, then $R(t)$ must be contained in $D_1$
(e.g., see Fig.~\ref{fig:clockwise}). Suppose $p_i$ is a via gateway
with $i\in [a_m+1,b_k-1]$. Then, there is a shortest \st\ path containing $p_i$, and we use $\pi(p_i,t)$ to denote the subpath between $p_i$ and $t$.
By the same analysis as that in Lemma~\ref{lem:clockwise}, we can show that $\pi(p_i,t)$ must  intersect either $\pi_{q_1}(p_{a_m})$ or $\pi_{q_1}(p_{b_k})$. In either case, $q_1$ is a coupled gateway of $p_i$.

If $\pi_{q_1}(p_{a_m})$ is counterclockwise from $\pi_{q_1}(p_{b_k})$
with respect to $\overline{q_1t}$, then $R(t)$ must be contained in
$D_2$. Then, by similar analysis as above, we can show that
for each $i\in [{a_1+1},b_m-1]\subseteq [{a_1+1},a_m-1]$, if $p_i$ is a via gateway,
then $q_1$ is a coupled gateway of $p_i$.
We omit the details.
\qed
\end{proof}

By Lemma~\ref{lem:enclose2}, depending on whether $\pi_{q_1}(p_{a_m})$ is clockwise from $\pi_{q_1}(p_{b_k})$, there are two cases.

\begin{enumerate}
\item
If yes, then we set $q_1$ as the candidate coupled gateway for all $p_i$ with $i\in [{a_m+1},b_k-1]$. Depending on whether $\pi_{q_1}(p_{b_m})$ is counterclockwise from $\pi_{q_1}(p_{b_k})$, there are further two subcases.

\begin{enumerate}
\item
If yes, we set $q_1$ as the candidate coupled gateway for all $p_i$ with
$i\in [{a_1+1},b_m-1]$. Note that we have found the candidate coupled gateways for all $p_i$ with $i\in [a_1+1,b_k-1]$. Hence, we can stop the algorithm.

\item
Otherwise, we recursively work on the subproblem $([a_1+1,b_m-1],[1,m],V(t))$.
\end{enumerate}

\item

If $\pi_{q_1}(p_{a_m})$ is counterclockwise from $\pi_{q_1}(p_{b_k})$,
then we set $q_1$ as the candidate coupled gateway for all $p_i$ with
$i\in [{a_1+1},b_m-1]$. Depending on whether $\pi_{q_1}(p_{a_m})$ is
clockwise from $\pi_{q_1}(p_{b_k})$, there are further two
subcases.

\begin{enumerate}
\item
If yes, we set $q_1$ as a candidate coupled gateway for all $p_i$ with $i\in [{a_m+1},b_k-1]$. Then, we stop the algorithm.

\item
Otherwise, we recursively work on the subproblem $([a_m+1,b_k-1],[m,k],V(t))$.
\end{enumerate}
\end{enumerate}

In this way, we have either computed candidate gateways for all
gateways of $V(s)$ or reduced the problem to the unequal case. Note
that each recursive step reduces the length of the first
interval of the subproblem by half in $O(n_t)$ time. In addition, the
total time for the stair-walking procedure is $O(n_s)$. Therefore, the
total time of the algorithm for handling the equal case is $O(n_s+n_t\log n_s)$.

\subsection{Wrapping Up}

%\vspace{0.1in}
The above describes our algorithm on the gateways of $s$ in the first quadrant of $s$. We run the same algorithm for all quadrants of $s$, and for each quadrant, we will find an \st\ path. Finally, we return the path with the smallest length as our solution. The proof of the following lemma summarizes our entire query algorithm.

\begin{lemma}\label{lem:query22}
The running time of the query algorithm is $O(\log n+ n_s+n_t\log n_s)$.
\end{lemma}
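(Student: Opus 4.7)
The plan is to charge the total work to four categories: (i) constant-time setup plus $O(\log n)$ auxiliary queries, (ii) $O(n_s + n_t)$ one-shot work that happens at the beginning (or at most once per quadrant of $s$), (iii) a stair-walking procedure whose total cost is $O(n_s)$ throughout the entire execution, and (iv) the remaining cost of the recursion, which I will bound by $O(n_t \log n_s)$ via a standard recursion-tree argument.

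First I account for the $O(\log n)$ term. Checking for a trivial shortest \st{} path, computing $V_g(s,G)$ and $V_g(t,G)$ (including the two special gateways $q_0$ and $q_{h+1}$), and explicitly computing $R(s)$ and $R(t)$ each take $O(\log n)$ time after the stated preprocessing; when the equal case $q_1=q_k$ arises, locating $t$ in the cyclic list $L(q_1)$ for the $T(q_1)$-based clockwise/counterclockwise tests also takes $O(\log n)$. All subsequent primitive operations (retrieving $d(\cdot,\cdot)$, the last edge of any $\pi_q(p)$, checking intersection of an edge with the bottom/left boundary of $R(s)$, comparing two $T(q_1)$-paths via $L(q_1)$, testing whether a point is in $R(t)$) take $O(1)$ time once this setup is done.

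Next I account for the $O(n_s + n_t)$ one-shot work. The cleanup procedure on $V(t)$ costs $O(n_t)$. Testing whether $R(t)$ contains a gateway of $V(s)$ is done by merge-scanning the two sorted lists in each quadrant of $t$, costing $O(n_s + n_t)$. The initial computations of $c(p_1)$ and $c(p_k)$ each scan $V(t)$ once, costing $O(n_t)$. Since our algorithm is run separately on each of the four quadrants of $s$, these one-shot costs are repeated only a constant number of times.

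For the stair-walking procedure, I observe that every stair-walking step advances (or retreats) an index among $1,2,\ldots,k$ by one and never revisits an index in later recursive calls: whenever the procedure finds $a_m$ or $b_m$, the subproblems it spawns operate on strictly smaller sub-intervals of $[1,k]$ that are disjoint from the indices already stair-walked through. Hence the total number of stair-walking steps performed across the entire recursion is $O(k)=O(n_s)$.

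Finally I bound the cost of the recursion itself. Each subproblem $([i',j'],[i,j],V_t(i,j))$ spends (apart from the stair-walking charge above) $O(|V_t(i,j)|)$ time, dominated by the scan that computes $q_m^1,q_m^2$ for the median gateway $p_m$ of $[i',j']$. By the analysis following Lemma~\ref{lem:enclose1} (and its analogue in the equal case), the subproblem either reduces to a single subproblem whose first interval is at most $\lceil |[i',j']|/2 \rceil$ with the same third component $V_t(i,j)$, or it splits into two subproblems whose first intervals are each at most $\lceil |[i',j']|/2 \rceil$ and whose third components are two disjoint subsets of $V_t(i,j)$. Viewing the sequence of calls as a binary tree, the first interval halves at every level, so the height is $O(\log n_s)$; and at every level the disjointness of the third components on the second type of split implies that the sum of $|V_t(\cdot,\cdot)|$ over all nodes at that level is at most $n_t$. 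Hence the total non-stair-walking work in the recursion is $O(n_t \log n_s)$. The equal case $q_1=q_k$ either solves the problem outright (via Lemma~\ref{lem:clockwise} or Lemma~\ref{lem:enclose2}) or halves the first interval in $O(n_t)$ additional time per halving, which adds another $O(n_t \log n_s)$ and ultimately hands control to the unequal-case algorithm.

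Summing (i)--(iv) and multiplying by the constant number of quadrants of $s$ yields a total query time of $O(\log n + n_s + n_t \log n_s)$. The one technical subtlety I expect is verifying that the disjointness of the third components is preserved across the recursion (so that the per-level $V_t$-work sums to $n_t$ rather than to $n_t$ times the number of leaves); this is exactly the content of Lemma~\ref{lem:enclose1}, which guarantees that when a subproblem genuinely splits into two, the two sub-lists $V_t(i,m)$ and $V_t(m,k)$ share at most the boundary gateway $q_m^1=q_m^2$, so the per-level bound holds up to an additive $O(\log n_s)$ which is absorbed into the stated bound.
\qed
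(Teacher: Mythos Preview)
Your proposal is correct and follows essentially the same approach as the paper. The paper's formal proof of Lemma~\ref{lem:query22} is actually a brief wrap-up of the whole query procedure, with the detailed time accounting (stair-walking charged to $O(n_s)$, recursion tree of height $O(\log n_s)$ with $O(n_t)$ work per level due to the disjointness of the third components) given in the running text just before the ``Wrapping Up'' subsection; your four-category charging scheme reproduces exactly that argument, and your closing remark about the possible one-element overlap of $V_t(1,m)$ and $V_t(m,k)$ is a correct refinement of the paper's slightly loose use of ``disjoint.''
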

\begin{proof}
Given $s$ and $t$, we first check whether there is a trivial shortest path. If not, we compute the gateway sets $V_g(s,G)$ and $V_g(t,G)$. We then explicitly compute the gateway region $R(t)$. Let $V(t)$ be the gateways on the boundary of $R(t)$, as defined before, including those special gateways. All above can be computed in $O(\log n)$ time.

Next, we compute the gateway $p_1\in V_g^1(s,G)$ that minimizes the value $\min_{q\in V(t)}(d(s,p)+d(p,q)+d(q,t))$ among all $p\in V_g^1(s,G)$, which can be done in $O(n_t)$ time since $|V(t)|=O(n_t)$.

Then, we apply our algorithm in this section on $V_g^2(s,G)$ and $V(t)$, which will return a gateway $p_2\in V_g^2(s,G)$ such that if $V_g^2(s,G)$ contains a via gateway, then $p_2$ is a via gateway. This takes $O(\log n+ n_s+n_t\log n_s)$ time.

For each $i=1,2$, let $d_i=\min_{q\in V(t)}(d(s,p_i)+d(p_i,q)+d(q,t))$ and let $q_i$ be the gateway of $V(t)$ such that $d_i=d(s,p_i)+d(p_i,q_i)+d(q_i,t)$. Without loss of generality, we assume $d_1\leq d_2$. Then, $d(s,t)=d_1$. Using the shortest path tree $T(q_1)$, we can find a shortest path from $p_1$ to $q_1$ in linear time in the number of edges of the path, and then by appending $\overline{sp_1}$ and $\overline{q_1t}$ we can obtain a shortest \st\ path.
\qed
\end{proof}

Since both $n_s$ and $n_t$ are $O(\log n)$, we have the following corollary.

\begin{corollary}
With $O(n^2\log^3 n)$ time and $O(n^2\log^2 n)$ space preprocessing,
given any two query points $s$ and $t$, we can
compute their shortest path length in $O(\log n\log\log n)$ time and an
actual shortest \st\ path can be output in additional time linear in the number of
edges of the path.
\end{corollary}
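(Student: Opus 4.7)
The plan is to assemble the corollary directly from the preprocessing described in Section~\ref{sec:prep} together with the query-time bound of Lemma~\ref{lem:query22}. First I would justify the preprocessing bounds. The graph $G$ has $O(n\log n)$ vertices, and as stated in Section~\ref{sec:prep}, for each vertex $q$ of $G$ we compute a shortest path tree $T(q)$ in $\calP$ using Mitchell's algorithm~\cite{ref:MitchellAn89,ref:MitchellL192} in $O(n\log^2 n)$ time. Doing this for all $O(n\log n)$ vertices of $G$ takes $O(n^2\log^3 n)$ time and produces the distance table of $O(n^2\log^2 n)$ entries required to look up $d(p,q)$ and the last edge of $\pi_q(p)$ in constant time. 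The projection information needed to build $R(t)$ on the fly adds only $O(n\log n)$ extra space, which is subsumed. Hence the total preprocessing is $O(n^2\log^3 n)$ time and $O(n^2\log^2 n)$ space, as claimed.

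Next I would bound the query time by applying Lemma~\ref{lem:query22}. The key observation is that the sets $V(s)=V_g^2(s,G)$ and $V(t)=V_g^2(t,G)$ are formed by walking down the cut-line tree $\calT$ along a single root-to-leaf path, so by construction $n_s=|V(s)|=O(\log n)$ and $n_t=|V(t)|=O(\log n)$ (the cleanup step and the at most four special gateways added to $V(t)$ do not change this asymptotic bound). Plugging these into the $O(\log n+n_s+n_t\log n_s)$ bound from Lemma~\ref{lem:query22} yields
\[
O(\log n + \log n + \log n \cdot \log\log n) \;=\; O(\log n\,\log\log n),
\]
which is the stated query time for computing $d(s,t)$.

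Finally I would address the actual shortest-path reporting. As already noted in the proof of Lemma~\ref{lem:query22}, once the via gateway $p_2\in V_g^2(s,G)$ (or $p_1\in V_g^1(s,G)$, whichever attains the minimum over the four quadrants of $s$) and its coupled gateway $q\in V(t)$ are known, a shortest path from $p$ to $q$ can be extracted from the precomputed tree $T(q)$ by a simple walk from $p$ to the root, which takes time linear in the number of edges of the returned path. Prepending $\overline{sp}$ and appending $\overline{qt}$ yields an actual $L_1$ shortest \st{} path in the same asymptotic output-sensitive time. The only subtlety worth double-checking is that the quadrant-by-quadrant runs of the subroutine of Section~\ref{sec:subproblem} total only $O(\log n\,\log\log n)$ time and not a larger quantity, but since each quadrant contributes disjoint gateway subsets of total size $O(\log n)$, the bound follows immediately by summing the Lemma~\ref{lem:query22} estimate over the four quadrants. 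No further calculation is needed, so I do not expect a real obstacle; the corollary is essentially a substitution of parameters into the already-proved lemma.
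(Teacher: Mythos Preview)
Your proposal is correct and matches the paper's own (implicit) argument: the paper simply states the corollary as an immediate consequence of Lemma~\ref{lem:query22} together with the fact that $n_s,n_t=O(\log n)$, and your write-up spells out exactly those ingredients (including the $O(n^2\log^3 n)$-time, $O(n^2\log^2 n)$-space preprocessing from Section~\ref{sec:prep}). There is no gap and no meaningful difference in approach.
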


\section{Reducing the Query Time to $O(\log n)$}
\label{sec:overall}

To further reduce the query time to $O(\log n)$, we need to change our
graph $G$ to a slightly larger graph $G_1$ such that $t$ only needs $O(\log
n/\log\log n)$ gateways while $s$ still has $O(\log n)$ gateways,
i.e., $n_s=O(\log n)$ and $n_t=O(\log n/\log\log n)$. To this end, we
introduce more Steiner points on the cut-lines. A
similar idea was also used in~\cite{ref:ChenTw16} to reduce the number
of gateways to $O(\sqrt{\log n})$. However, since we are allowed to
have more gateways than $O(\sqrt{\log n})$, we do not need as many Steiner
points  as those in~\cite{ref:ChenTw16}, which is the reason
why we use less preprocessing.

Specifically, comparing with $G$, the new graph $G_1$ has the
following changes. As in~\cite{ref:ChenTw16}, we first define
``super-levels''. Recall that the cut-line tree $\calT$ has $O(\log
n)$ levels (with the root at the first level). We further partition
all levels of the tree into $O(\log n/\log\log n)$ super-levels: For
any $i$, the $i$-th super-level contains the levels from $(i-1)\cdot
\log\log n+1$ to $i\cdot \log\log n$. Hence, each super-level has
at most $\log\log n$ levels.

\begin{figure}[t]
\begin{minipage}[t]{\linewidth}
\begin{center}
\includegraphics[totalheight=1.6in]{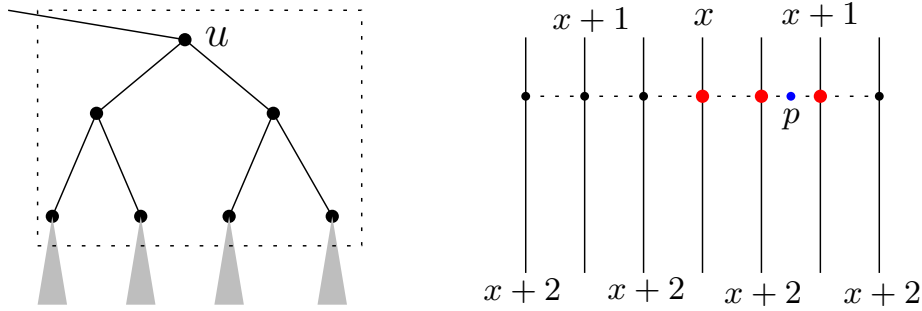}
\caption{\footnotesize Left: Illustrating the subtree $\calT_u$, which is in the dotted rectangle (we assume  $\log\log n = 3$). Right: Illustrating the type-3 Steiner points defined by a point $p$ on $\calT_u$. The vertical lines are the cut-lines of the nodes in $\calT_u$ and their level numbers are also shown (we assume that the level number of $u$ is $x$). We assume that $p\in \calV(u)$ and $p$ is horizontally visible to all these cut-lines. Then, $p$ defines a type-3 Steiner point on each cut-line.
%In the figure, all the points on the cut-lines are type-3 Steiner points.
In contrast, only the three big (red) points are type-2 Steiner points defined by $p$ in our original graph $G$}
\label{fig:superlevel}
\end{center}
\end{minipage}
\vspace{-0.15in}
\end{figure}

Let $u$ be a node at the highest level of the $i$-th super level of
$\calT$. Let $\calT_u$ be the sub-tree of $\calT$ rooted at $u$
excluding the nodes outside the $i$-th level (thus $\calT_u$ has at most $\log n-1$ nodes); e.g., see Fig~\ref{fig:superlevel}. Recall that $u$ is
associated with a subset $\calV(u)$ of polygon vertices and each
vertex $v\in \calT_u$ is associated with a cut-line $l(v)$.
For each point $p\in \calV(u)$ and each vertex $v\in \calT_u$, if $p$
is horizontally visible to $l(v)$, then $p$ defines a {\em type-3}
Steiner point on $l(v)$. In this way, $p$ defines $O(\log n)$ type-3
Steiner points on the cut-lines of $\calT_u$ (in contrast, $p$
defines only $O(\log\log n)$ type-2 Steiner points on the cut-lines of
$\calT_u$ in our original graph $G$); e.g., see Fig~\ref{fig:superlevel}. Hence, each polygon vertex
$p$ defines a total of $O(\log^2 n/\log\log n)$ type-3 Steiner points
since $\calT$ has $O(\log n/\log\log n)$ super-levels. The
total number of type-3 Steiner points on all cut-lines is $O(n\log^2 n/\log\log n)$.
Note that each type-2 Steiner point in our original graph $G$ becomes
a type-3 Steiner point. For convenience of discussion, those type-3
Steiner points of $G_1$ that are originally type-2 Steiner points of
$G$ are also called type-2 Steiner points of $G_1$.

Type-1 Steiner points are defined in the same way as before, so their number is still $O(n)$.
%Specifically, for each polygon vertex $p$, its four projections on $\partial\calP$ are type-1 Steiner points. Further, for the leftmost and rightmost type-2 Steiner point defined by $p$, their vertical projections on $\partial\calP$ also define Type-1 Steiner points. In this way,
We still use $\calV_1$ to denote the set of all type-1 Steiner points
and all polygon vertices.  We use $\calV_2$ to denote the set of all
type-2 Steiner points of $G_1$.
%We use $\calV_3$ to denote the set of all vertices of $G_1$.

The edges of $G_1$ are defined with respect to all Steiner points in
the same way as $G$. We omit the details. In summary,  $G_1$ has
$O(n\log^2 n/\log\log n)$ vertices and edges. $G_1$ can be
computed in $O(n\log^3 n/\log\log n)$ time (e.g., by using the similar
algorithm as in the proof of Lemma~1 in~\cite{ref:ChenTw16}). Note that
the original graph $G$ is a sub-graph of $G_1$ in that
every vertex of $G$ is also a vertex of $G_1$ and every path of $G$
corresponds to a path in $G_1$ with the same length.
%We call the vertices of $G_1$ that are also vertices of $G$ the {\em old vertices}.

Consider a query point $t$. The gateway set $V^1_g(t,G_1)$ is defined in the same way as before, and thus its size is $O(1)$. Thanks to more Steiner points, the size of $V^2_g(t,G_1)$ can now be reduced to $O(\log n/\log\log n)$. Specifically,  $V^2_g(t,G_1)$ is defined as follows (similar to that in~\cite{ref:ChenTw16}).

As in~\cite{ref:ChenTw16}, we first define the {\em relevant projection
cut-lines} of $t$. We only discuss the right side of $t$,
and the left side is symmetric.
Recall that $t$ has at most one
projection cut-line in each level of $\calT$. Among all projection
cut-lines that are in the same super-level, the one closest to $t$ is
called a {\em relevant projection cut-line} of $t$. Since there are
$O(\log n/\log\log n)$ super-levels and each super-level has at most
one relevant projection cut-line to the right of $t$, $t$ has
$O(\log n/\log\log n)$ relevant projection cut-lines. For each such
cut-line $l$, the Steiner point (if any) immediately
above (resp., below) the horizontal projection $t'$ of $t$ on $l$ is
included in $V_g^2(t,G_1)$ if it is visible to $t'$. Thus, $|V_g^2(t,G_1)|=O(\log n/\log\log
n)$.

By Lemma~\ref{lem:query}, if $|V_g^2(t,G_1)|=O(\log n/\log\log n)$,
the query time becomes $O(\log n)$ as long as $|V_g^2(s,G_1)|=O(\log
n)$. This implies that for $s$, we can simply use its original gateway
set on type-2 Steiner points, i.e., we define $V_g^2(s,G_1)$ in the same
way as before with respect to only the type-2 Steiner points of $G_1$ (thus $V_g^2(s,G_1)=V_g^2(s,G)$). As
will be clear later, this will help save time and space in the
preprocessing. We also define $V_g^1(s,G_1)$ in the same way as
before.
%So we let $V_g^2(s,G_1)=V_g^2(s,G)$.
%Hence, we define $V_g^2(s,G_1)$ in the same way as before.
%We have the following lemma.

\begin{lemma}
For any two query points $s$ and $t$, if there does not exist a
trivial shortest \st\ path, then there is a shortest \st\ path containing a
gateway of $s$ and a gateway of $t$.
\end{lemma}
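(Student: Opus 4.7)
The plan is to reduce the claim to the analogous property already established for the original graph $G$ and recalled in Section~\ref{sec:pre}. Note that $V_g^1(s,G_1)=V_g^1(s,G)$, $V_g^1(t,G_1)=V_g^1(t,G)$, and $V_g^2(s,G_1)=V_g^2(s,G)$ by construction, so the only genuine change is that $V_g^2(t,G_1)$ places Steiner points only on the $O(\log n/\log\log n)$ relevant projection cut-lines of $t$, rather than on all $O(\log n)$ projection cut-lines. First, I would invoke the known result for $G$: since no trivial shortest \st\ path exists, there is a shortest \st\ path $\pi$ containing a gateway $p\in V_g(s,G)=V_g(s,G_1)$ and a gateway $q\in V_g(t,G)$. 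If $q\in V_g(t,G_1)$ we are done; otherwise $q$ is a type-2 Steiner point on a projection cut-line $l$ of $t$ that is not relevant in its super-level.

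Let $l'$ denote the relevant projection cut-line of $t$ in the super-level of $l$, so $l'$ strictly lies between $l$ and $t$. The subpath of $\pi$ from $q$ to $t$ must therefore cross $l'$ at some point $z$. The core of the proof is to show that $\pi$ can be modified into a shortest \st\ path $\pi'$ visiting a gateway in $V_g^2(t,G_1)$ on $l'$. To do this, let $v$ be the polygon vertex defining $q$ (so $\overline{vq}$ is horizontal and $q$ is its horizontal projection onto $l$). Because $l=l(v')$ for some $v'\in \calT_u$ where $u$ is the highest node of the super-level, and because $v\in \calV(v')\subseteq \calV(u)$, the construction of $G_1$ guarantees that whenever $v$ is horizontally visible to a cut-line of $\calT_u$ a type-3 Steiner point is placed there. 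A short geometric case analysis, based on which side of $l$ the vertex $v$ lies and using the $xy$-monotone structure of the relevant piece of $\pi$, shows that $v$ is horizontally visible to $l'$; hence $l'$ carries a type-3 Steiner point $q^*$ of $G_1$ defined by $v$, and one can splice $\pi$ near $v$ to pass through $q^*$ on $l'$ (instead of through $q$ on $l$) while preserving the total length.

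Finally, I would slide along $l'$ from $q^*$ toward $t$'s horizontal projection $t'$ on $l'$ to reach the specific gateway $q_{l'}\in V_g^2(t,G_1)$, defined as the Steiner point on $l'$ immediately above (or below) $t'$. This sliding preserves length because, by the standard argument inherited from the treatment of $G$ in~\cite{ref:ChenSh00,ref:ClarksonRe87}, the Steiner points of $G_1$ along $l'$ between $q^*$ and $q_{l'}$ are mutually visible with intermediate segments in $\calP$, and the tail from $q_{l'}$ to $t$ is the $xy$-monotone path $\overline{q_{l'}t'}\cup\overline{t't}$. The main obstacle I expect is making the case analysis for visibility of $v$ to $l'$ fully rigorous, and handling the boundary sub-cases in which the re-routing near $v$ interacts with $\partial\calP$ or with other cut-lines in $\calT_u$; however, each sub-case follows the same skeleton of replacing the portion of $\pi$ around $q$ by a length-preserving path through the denser type-3 Steiner points of $G_1$ on $l'$, and then along $l'$ to $q_{l'}$.
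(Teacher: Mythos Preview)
Your reduction to the $G$-gateway property is sound up to the point where you obtain a shortest \st\ path $\pi$ through some $q\in V_g^2(t,G)$ on a non-relevant cut-line $l$. After that, however, the argument has a genuine gap, and the paper takes a different and cleaner route.

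The paper does not attempt to reroute a specific path through the Steiner-point structure. Instead it observes that, in the absence of a trivial shortest path, some shortest \st\ path contains a polygon vertex $w$, and then it suffices to prove: for every polygon vertex $w$ there is a shortest path from $t$ to $w$ that passes through a gateway in $V_g(t,G_1)$ (and symmetrically for $s$). For $s$ this is immediate since $V_g(s,G_1)=V_g(s,G)$; for $t$ it follows verbatim from the argument of Lemma~2 in~\cite{ref:ChenTw16}, because $V_g^2(t,G_1)$ is defined there in exactly the same way, only with super-levels of height $\log\log n$ instead of $\sqrt{\log n}$. This per-vertex formulation avoids any need to track a particular path or a particular defining vertex $v$.

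In your approach, the step ``$v$ is horizontally visible to $l'$'' is not justified and can fail. If $t$ lies (say) to the left of $l$, then $l'$ lies strictly between $t$ and $l$; but the vertex $v\in\calV(u)$ that defines $q$ may lie between $l'$ and $l$, or on the far side of $l$ from $t$. In either configuration, horizontal visibility of $v$ to $l$ gives no information about visibility to $l'$, so the type-3 Steiner point $q^*$ you need on $l'$ is not guaranteed to exist. You flag this as the main obstacle but do not resolve it. A second, related issue is the phrase ``splice $\pi$ near $v$'': the path $\pi$ passes through $q$, not through $v$, and nothing forces $\pi$ anywhere near $v$; the splicing you describe would have to be justified by an argument that is essentially the per-polygon-vertex statement the paper uses. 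The natural fix is precisely the paper's reduction: work with the polygon vertex $w$ that $\pi$ actually contains, and show directly that some shortest $t$--$w$ path meets a $G_1$-gateway of $t$.
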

\begin{proof}
Suppose there does not exist a trivial shortest \st\ path. Then, there
is a shortest \st\ path that contains a polygon vertex. Therefore, to
prove the lemma, it is sufficient to show the following: For any polygon vertex
$p$, there exists a shortest path from $s$ (resp., $t$) to $p$ that
contains a gateway of $s$ (resp., $t$). For the case of $s$, since its
gateway set is the same as before in the graph $G$, this has been
proved in~\cite{ref:ChenSh00}. For the case of $t$, we can follow the
similar analysis as in~\cite{ref:ChenTw16} (i.e., the proof of Lemma~2)
because our way of defining $V_g^2(t,G_1)$ is similar in spirit to
theirs (the only difference is that the size of the gateway set in
\cite{ref:ChenTw16} is $O(\sqrt{\log n})$, which is due to that each
super-level of $\calT$ in \cite{ref:ChenTw16} consists of $\sqrt{\log
n}$ levels). We omit the details. \qed
\end{proof}

%The next lemma shows how to compute the gateway sets for $s$ and $t$.

\begin{lemma}\label{lem:gatewaypreprocess}
With $O(n\log^3 n/\log\log n)$ time and $O(n\log^2 n/\log\log n)$
space preprocessing, we can compute $V_g(s,G_1)$ and $V_g(t,G_1)$ in
$O(\log n)$ time for any two query points $s$ and $t$.
\end{lemma}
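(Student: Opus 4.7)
The plan is to assemble a gateway-query algorithm from three smaller procedures, each running in $O(\log n)$ time after preprocessing within the stated bounds, covering $V_g^1(\cdot,G_1)$, $V_g^2(s,G_1)$, and $V_g^2(t,G_1)$ separately. Since $|V_g^1(q,G_1)|=O(1)$ for each $q\in\{s,t\}$, I would handle the first family by storing a horizontal and vertical decomposition of $\calP$, built in $O(n\log n)$ time and $O(n)$ space, so that point location returns, for any query point $q$, its four boundary projections $q^r,q^l,q^u,q^d$ in $O(\log n)$ time. In addition, for every polygon edge I would maintain a sorted array of the Steiner points of $G_1$ lying on it; a binary search in each such array then yields the two Steiner points adjacent to a given projection in $O(\log n)$ time, completing the $V_g^1$ part.

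For $V_g^2(s,G_1)$, since by design $V_g^2(s,G_1)=V_g^2(s,G)$, I would adopt the cut-line-tree walk of~\cite{ref:ChenSh00}: descend $\calT$ along the root-to-leaf path whose cut-lines straddle $s$ and, at each visited node $u$, decide in amortized constant time whether $s$ is horizontally visible to $l(u)$ and, if so, locate the two type-2 Steiner points on $l(u)$ adjacent to the horizontal projection of $s$. Fractional cascading on the type-2 Steiner-point lists along root-to-leaf paths of $\calT$, set up in $O(n\log n)$ time and space, supports the descent in $O(\log n)$ total time. For $V_g^2(t,G_1)$ I would follow~\cite{ref:ChenTw16}, adapted to super-levels of width $\log\log n$: at the root $u_i$ of each super-level, build in preprocessing a secondary search structure over the $O(\log n)$ cut-lines of the subtree $\calT_{u_i}$ so that, given $t$, the closest projection cut-line of $t$ on each side within that super-level (the ``relevant'' projection cut-lines) is reported in $O(\log\log n)$ time; on each such cut-line, a binary search over its type-3 Steiner points yields the two gateways adjacent to the horizontal projection of $t$ in a further $O(\log\log n)$ time. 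Since $t$'s path across $\calT$ intersects $O(\log n/\log\log n)$ super-levels, the total query time is $O(\log n)$.

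The main obstacle is verifying the preprocessing budget. The sorted Steiner-point arrays per polygon edge and per cut-line together have size $O(|V(G_1)|)=O(n\log^2 n/\log\log n)$ and can be built alongside $G_1$; the super-level secondary structures add at most a constant factor over this, because each polygon vertex contributes to at most $O(\log n/\log\log n)$ super-levels and at most $\log\log n$ type-3 Steiner points within each, matching the $G_1$ size bound. Building and fractional-cascade-augmenting these lists costs an additional $O(\log n)$ factor, giving $O(n\log^3 n/\log\log n)$ preprocessing time, consistent with the cost of constructing $G_1$ itself. Correctness of each of the three subqueries is immediate from the definitions of projection cut-lines, relevant projection cut-lines, and gateways given in Section~\ref{sec:pre} together with the analyses of~\cite{ref:ChenSh00,ref:ChenTw16}, so no new geometric argument is required beyond accounting.
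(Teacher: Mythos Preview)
Your proof follows essentially the same route as the paper's: point location in the horizontal/vertical decomposition for $V_g^1$, the type-2 Steiner-point structure of~\cite{ref:ChenSh00} for $V_g^2(s,G_1)$, and the super-level mechanism of~\cite{ref:ChenTw16} for $V_g^2(t,G_1)$, with the preprocessing bounds arising from the size of $G_1$. One small slip in your accounting: within a super-level subtree $\calT_u$ a polygon vertex may define up to $O(\log n)$ type-3 Steiner points (the subtree has roughly $\log n$ cut-lines across its $\log\log n$ levels, not $\log\log n$ cut-lines), but this does not affect the overall $O(n\log^2 n/\log\log n)$ bound you correctly take from $|V(G_1)|$.
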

\begin{proof}
We first discuss the case for $t$. For computing $V_g^1(t,G_1)$, as
in~\cite{ref:ChenTw16} (see the proof of Lemma~3), it is sufficient to
compute the four projection points $t^d,t^u,t^l,t^r$ on $\partial\calP$, which can be
done in $O(\log n)$ time by using the horizontal and vertical
decompositions of $\calP$. The two decompositions of $\calP$ can be
computed in $O(n\log n)$ time or $O(n+h\log^{1+\epsilon}h)$ time for
any $\epsilon>0$~\cite{ref:Bar-YehudaTr94,ref:ChazelleTr91}.

For $V_g^2(t,G_1)$, we can use the same approach as that
in~\cite{ref:ChenTw16} (see the proof of Lemma~3). Since the number of
type-3 Steiner points is $O(n\log^2n/\log\log n)$, the preprocessing takes
$O(n\log^3n/\log\log n)$ time and  $O(n\log^2n/\log\log n)$ space.

For $s$, the set $V_g^1(s,G_1)$ can be computed in the same way as
$t$. For $V_g^2(s,G_1)$, we maintain a data structure for all type-2
Steiner points as in~\cite{ref:ChenTw16} (see the proof of Lemma~3).
Since there are $O(n\log n)$ type-2 Steiner points,
with $O(n\log^2 n)$ time and $O(n\log n)$ space preprocessing,
we can compute $V_g^2(s,G_1)$ in $O(\log n)$ time.
\qed
\end{proof}

Our preprocessing is similar as before. For each vertex
$q$ of $G_1$ (which is also considered as a point in $\calP$),
we compute a shortest path tree $T(q)$ but only for the
points in $\calV_1\cup\calV_2$ using the algorithm~\cite{ref:MitchellAn89,ref:MitchellL192}.
Since $|\calV_1\cup\calV_2|=O(n\log n)$, $T(p)$ has
$O(n\log n)$ vertices and can be computed in $O(n\log^2n)$ time~\cite{ref:MitchellAn89,ref:MitchellL192}. We
also store the post-order traversal list of $T(p)$. Since $G_1$ has
$O(n\log^2n /\log\log n)$ vertices, the preprocessing takes
$O(n^2\log^4n/\log\log n)$ time and  $O(n^2\log^3n/\log\log n)$
space in total.

\paragraph{Remark.} If we define $V_g^2(s,G_1)$ in the same way as $V_g^2(t,G_1)$ (i.e.,
with respect to type-3 Steiner points), then we would need to compute $T(p)$ for
all $O(n\log^2 n/\log\log n)$ type-3 Steiner points in the
preprocessing, which would take $O(n^2\log^5n/(\log\log n)^2)$ time
and $O(n^2\log^4n/(\log\log n)^2)$ space.
\vspace{0.1in}

As a summary, we have the following result.

\begin{lemma}\label{lem:query}
With $O(n^2\log^4n/\log\log n)$ time and  $O(n^2\log^3n/\log\log n)$
space preprocessing, given any two query points $s$ and $t$, we can
compute their shortest path length in $O(\log n)$ time and an
actual shortest \st\ path can be output in additional time linear in the number of
edges of the path.
\end{lemma}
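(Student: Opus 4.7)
The plan is to assemble the preprocessing and the query pieces that have been developed in this section. For the preprocessing bounds, I would sum four contributions: (i) building $G_1$ itself, which has $O(n\log^2 n/\log\log n)$ vertices and edges and takes $O(n\log^3 n/\log\log n)$ time; (ii) for each of the $O(n\log^2 n/\log\log n)$ vertices $q$ of $G_1$, computing the shortest path tree $T(q)$ restricted to $\calV_1\cup \calV_2$ (of size $O(n\log n)$) via the algorithm of~\cite{ref:MitchellAn89,ref:MitchellL192} in $O(n\log^2 n)$ time each and storing its post-order traversal in a cyclic array (as needed for the equal-case test in Section~\ref{sec:equal}); (iii) the horizontal and vertical decompositions of $\calP$ needed for trivial-path tests and ray shooting; and (iv) the data structures of Lemma~\ref{lem:gatewaypreprocess} for computing $V_g(s,G_1)$ and $V_g(t,G_1)$ at query time. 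The dominant term is (ii), giving $O(n^2\log^4 n/\log\log n)$ time and $O(n^2\log^3 n/\log\log n)$ space, which matches the claim.

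For the query algorithm, given $s$ and $t$ I would first check in $O(\log n)$ time (via ray shooting in the decompositions) whether there is a trivial shortest $s$-$t$ path, and if so return it. Otherwise, I compute $V_g(s,G_1)$ and $V_g(t,G_1)$ in $O(\log n)$ time using Lemma~\ref{lem:gatewaypreprocess}, then handle the $O(1)\times O(1)$, $O(1)\times n_t$, and $n_s\times O(1)$ interactions involving $V_g^1$ exhaustively in $O(n_s+n_t)$ time, and finally invoke the divide-and-conquer algorithm of Section~\ref{sec:subproblem} on $V_g^2(s,G_1)$ and the augmented set $V(t)$ on $\partial R(t)$. By Lemma~\ref{lem:query22} this last step runs in $O(\log n+n_s+n_t\log n_s)$ time. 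Substituting $n_s=|V_g^2(s,G_1)|=O(\log n)$ and $n_t=|V_g^2(t,G_1)|=O(\log n/\log\log n)$ yields
\[
O(\log n)+O(\log n)+O\!\left(\tfrac{\log n}{\log\log n}\right)\cdot O(\log\log n)=O(\log n),
\]
as desired. Once the optimal via-gateway pair $(p^\ast,q^\ast)$ is identified, the stored tree $T(q^\ast)$ yields $\pi(p^\ast,q^\ast)$ in time linear in its number of edges, and prepending $\overline{sp^\ast}$ and appending $\overline{q^\ast t}$ produces an explicit shortest $s$-$t$ path in the additional linear time stated.

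The main thing to verify, and what I expect to be the only subtle point, is that the divide-and-conquer machinery of Section~\ref{sec:subproblem} applies verbatim to the enlarged graph $G_1$ even though $s$ and $t$ use different kinds of gateways. This holds because every geometric object and every observation used there (gateway region $R(s)$, extended region $R(t)$, staircase regions, the cleanup procedure, Lemma~\ref{lem:extend}, and all enclosure/separation arguments) depends only on the cut-line tree $\calT$ and the notion of a projection cut-line, which are inherited from $G$ by $G_1$; the only role of the extra type-3 Steiner points is to shrink $V_g^2(t,G_1)$ to size $O(\log n/\log\log n)$, while $V_g^2(s,G_1)=V_g^2(s,G)$ is unchanged. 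Hence Lemma~\ref{lem:query22} transfers without modification, and combined with the preprocessing bounds above the lemma follows.
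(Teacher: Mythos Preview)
Your proposal is correct and follows essentially the same approach as the paper: assemble the preprocessing costs (dominated by the shortest-path trees $T(q)$ over all vertices $q$ of $G_1$), invoke Lemma~\ref{lem:query22} with $n_s=O(\log n)$ and $n_t=O(\log n/\log\log n)$, and argue that the machinery of Section~\ref{sec:subproblem} transfers to $G_1$.

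The only notable difference is in how you justify that last transfer. You argue broadly that all the relevant observations ``depend only on the cut-line tree $\calT$ and the notion of a projection cut-line,'' which is a bit of a handwave since the gateways of $t$ in $G_1$ are now defined via \emph{relevant} projection cut-lines and type-3 Steiner points, so in principle the proofs of Observations~\ref{obser:negative}--\ref{obser:qhpos} need re-examination. The paper is slightly more concrete here: it notes that $R(s)$ is literally unchanged (since $V_g^2(s,G_1)=V_g^2(s,G)$), and that the new $R(t)$ built from $V_g^2(t,G_1)$ is a subset of the original $R(t)$ built from $V_g^2(t,G)$, which immediately gives the needed containment and vertex-freeness properties. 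Your argument is not wrong, but the paper's subset observation is the crisper way to close that gap.
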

\begin{proof}
With the new gateway sets $V_g(s,G_1)$ and  $V_g(t,G_1)$,
applying Lemma~\ref{lem:query22} directly will lead to the lemma.
To guarantee correctness, since we now use new gateway sets $V_g^2(s,G_1)$
and  $V_g^2(t,G_1)$, we need to show that the geometric properties in
Section~\ref{sec:subproblem} related to these gateways still hold.
Specifically, we need to show that the properties of the gateway
region $R(s)$ for $s$, i.e., Observations~\ref{obser:gatewayregion}
and~\ref{obser:rectangle}, and the properties of the extended gateway
region $R(t)$ for $t$, i.e., Observations~\ref{obser:negative},
\ref{obser:posabove}, \ref{obser:postive}, %\ref{obser:int},
\ref{obser:qhneg}, and \ref{obser:qhpos}, still hold. Indeed, for $R(s)$, its properties
obviously hold since $R(s)$ is exactly the same as before (because
$V_g^2(s,G_1)$ is exactly $V_g^2(s,G)$). For $R(t)$, its properties also hold.
An easy way to see this is that the new $R(t)$ defined based on
$V_g^2(t,G_1)$ is a subset of the original $R(t)$ defined based on
$V_g^2(t,G)$.
\qed
\end{proof}

\subsection{A Further Improvement}

Using the techniques in~\cite{ref:ChenTw16}, we can further
reduce the complexities of the preprocessing so that they are
functions of $h$, in addition to $O(n)$,
as shown in the following theorem.

\begin{theorem}
With $O(n+h^2\log^4 h/\log\log h)$ time and $O(n+h^2\log^3 h/\log\log h)$ space preprocessing, given any two query points $s$ and $t$, we can compute their shortest path length in $O(\log n)$ time and an actual shortest \st\ path can be output in additional time linear in the number of edges of the path.
\end{theorem}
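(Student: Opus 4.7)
The plan is to combine Lemma~\ref{lem:query} with the $h$-based reduction used by Chen et al.~\cite{ref:ChenTw16}. First, I would triangulate $\calP$ and build its horizontal and vertical decompositions in $O(n+h\log^{1+\epsilon}h)$ time~\cite{ref:Bar-YehudaTr94}. This absorbs the additive $O(n)$ term in the claimed bounds and supports the initial $O(\log n)$-time trivial-shortest-path test, the horizontal/vertical projections of $s$ and $t$ needed for $V_g^1$, and the definition of the type-1 Steiner points that appear in $\calV_1$. Crucially, this keeps a full-size representation of $\partial\calP$ available for query-time ray-shooting and for explicitly constructing $R(t)$, at a cost that is only $O(n)$.

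Next, I rebuild the enhanced graph $G_1$ over only the $O(h)$ essential polygon vertices identified in~\cite{ref:ChenTw16} (roughly, a constant number of ``turn witnesses'' per hole together with outer-boundary representatives) rather than over all $n$ vertices of $\calP$. The cut-line tree $\calT$ then has $O(\log h)$ levels; the type-2 and type-3 Steiner points on its cut-lines total $O(h\log^2 h/\log\log h)$; and $G_1$ has that many vertices and edges. Type-1 Steiner points and the full boundary are preserved from the first step. The Section~\ref{sec:subproblem} preprocessing is then performed on this reduced graph: for each vertex $q$ of $G_1$, compute the shortest path tree $T(q)$ restricted to $\calV_1\cup\calV_2$ using Mitchell's algorithm~\cite{ref:MitchellAn89,ref:MitchellL192} in $O(h\log^2 h)$ time per tree once the decompositions are at hand, as in~\cite{ref:ChenTw16}. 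Summing over the $O(h\log^2 h/\log\log h)$ roots yields $O(n+h^2\log^4 h/\log\log h)$ preprocessing time and $O(n+h^2\log^3 h/\log\log h)$ space, together with the post-order traversal arrays needed for the equal-case planarity check of Section~\ref{sec:equal}.

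For the query, all the geometric machinery of Section~\ref{sec:subproblem} carries over: the definitions and properties of $R(s)$ and $R(t)$ (Observations~\ref{obser:gatewayregion}--\ref{obser:qhpos} and Lemma~\ref{lem:extend}) use only visibility in $\calP$ plus the cut-line/Steiner structure of whatever graph is in force, both of which remain valid under the $h$-based construction. Since now $|V_g^2(s,G_1)|=O(\log h)$ and $|V_g^2(t,G_1)|=O(\log h/\log\log h)$, Lemma~\ref{lem:query22} gives a total query time of $O(\log n+\log h+(\log h/\log\log h)\cdot\log\log h)=O(\log n)$, with the actual shortest path recovered from the stored $T(q_1)$ in additional time linear in the number of path edges.

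The main obstacle is verifying that shrinking the cut-line tree from $n$ to $O(h)$ essential vertices does not invalidate the Steiner-point arguments in Section~\ref{sec:subproblem}. In particular, the contradiction proofs in Observations~\ref{obser:gatewayregion}, \ref{obser:rectangle}, and \ref{obser:negative}--\ref{obser:qhpos} repeatedly argue ``if vertex $p$ had a type-2 Steiner point on some cut-line $l$ then $q_1$ (or $q_h$) would not be the extremal gateway''; one must check that this reasoning continues to apply when only essential vertices feed $\calT$, which is where the separation of $\calV_1$ (kept at full size via the boundary decomposition from step one) from the cut-line structure (compressed to $h$) is essential, exactly as in Chen et al.~\cite{ref:ChenTw16}. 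Once this compatibility is confirmed, the theorem follows by plugging the $h$-based complexities into Lemma~\ref{lem:query}.
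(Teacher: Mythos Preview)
Your proposal has the right high-level intent---swap in the $h$-sized graph $G_1$ in place of the enhanced graph $G_E$ of~\cite{ref:ChenTw16} and rerun the preprocessing---but it contains a genuine gap in the shortest-path-tree step that breaks the claimed bounds.

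You write that for each of the $O(h\log^2 h/\log\log h)$ sources $q$ of $G_1$ you will ``compute the shortest path tree $T(q)$ restricted to $\calV_1\cup\calV_2$ using Mitchell's algorithm~\cite{ref:MitchellAn89,ref:MitchellL192} in $O(h\log^2 h)$ time per tree.'' Mitchell's continuous-Dijkstra algorithm, however, runs on the full polygonal domain $\calP$; building the shortest path map from a single source costs $\Theta(n\log n)$ time regardless of how few target points you keep afterward. Summed over all sources, this gives $\Omega(nh)$ preprocessing, not $O(n+h^2\mathrm{polylog}\,h)$. The paper flags exactly this obstruction and instead computes, for each source $q$, a tree $T'(q)$ among the \emph{cores} of the obstacles via the algorithm of~\cite{ref:ChenCo19}, which genuinely runs in $O(h\log^2 h)$ time and $O(h\log h)$ space per source. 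These core-based trees preserve distances and the relative (planar) order of paths needed for the equal-case test in Section~\ref{sec:equal}, but their edges may leave $\calP$, so a separate path-preserving graph $G_2$ (Section~\ref{sec:pathpre}) is introduced to support output of an actual shortest path. Your plan omits both the switch to~\cite{ref:ChenCo19} and the companion graph $G_2$, so as stated it neither achieves the time bound nor supports linear-time path reporting.

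A second, smaller issue: the reduction in~\cite{ref:ChenTw16} is not ``keep $O(h)$ essential vertices of $\calP$ and rebuild the cut-line tree.'' It is an ocean/bay/canal decomposition of $\calP$ via the extended corridor structure; the graph is built on $O(h)$ special points on the boundary of the ocean $\calM$, and separate graphs $G_1(g)$ are attached at the gates of bays and canals to handle queries with $s$ or $t$ outside $\calM$. Your description (``turn witnesses per hole'') does not match this structure, and the correctness of the gateway-region observations under the actual decomposition is inherited from~\cite{ref:ChenTw16} rather than from the argument you sketch.
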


The main idea is to follow the algorithmic scheme in~\cite{ref:ChenTw16} (i.e., the one in Section~4), by replacing the ``enhanced'' graph $G_E$ with our graph $G_1$ and replacing their query algorithm with our new query algorithm. A major difference is that since our query algorithm needs to determine the relative positions of two shortest paths, we will also need to compute (planar) shortest path trees using the algorithms in~\cite{ref:ChenCo19} (we cannot use the shortest path trees in $G_1$ because they may not be planar).
%(as well as $G$ since we need it to compute gateways as shown in Lemma~\ref{lem:gatewaypreprocess}).
We only sketch the main idea below, following the notation in~\cite{ref:ChenTw16}.

The algorithm in~\cite{ref:ChenTw16} uses an extended corridor structure
to decompose $\calP$ into an {\em ocean} $\calM$, and $O(n)$ {\em bays} and {\em canals}.
While $\calM$ is multiply-connected, each bay/canal is a simple polygon. Each bay has a {\em gate} which is a common edge shared by the bay and $\calM$. Each canal has two gates.

A graph $G_E(\calM)$ is built on $\calM$ with respect to $O(h)$ special
points on the boundary of $\calM$. The graph has $O(h\sqrt{\log
h}2^{\sqrt{\log h}})$ vertices and edges. Using the graph, with
$O(n+h^2\log^2 h4^{\sqrt{\log h}})$ time and $O(n+h^2\log
h4^{\sqrt{\log h}})$ space preprocessing, if $s$
and $t$ are both in $\calM$, a shortest \st\ path can be computed in
$O(\log n)$ time.

For our purpose, we replace the graph $G_E(\calM)$ by our graph $G_1(\calM)$, which is built with respect to the above mentioned $O(h)$ special points on the boundary of $\calM$ in the same way as the graph $G_1$ with respect to the obstacle vertices of $\calP$. Thus, the graph $G_1(\calM)$ has
$O(h\log^2 h/\log\log h)$ vertices and edges. We define the sets $\calV_1(\calM)$ and $\calV_2(\calM)$ accordingly (in the same way as $\calV_1$ and $\calV_2$ defined for $G_1$), which together have $O(h\log h)$ points.
For each vertex $q$ of $G_1(\calM)$, we need to compute a planar shortest path tree $T(q)$
from $q$ to all points of $\calV_1(\calM)\cup \calV_2(\calM)$. To this end, if we applied the algorithms
in~\cite{ref:MitchellAn89,ref:MitchellL192}
as before, then the total preprocessing would take $\Omega(nh)$ time
and space. To make the preprocessing complexities linearly depend on
$n$, we use an algorithm in~\cite{ref:ChenCo19} instead.
Specifically, we can compute a shortest path tree $T'(q)$ among all {\em cores} of the obstacles of $\calP$ (see \cite{ref:ChenCo19} for the details). Since the size of $\calV_1(\calM)\cup \calV_2(\calM)$ is $O(h\log h)$, $T'(q)$ can be computed in $O(h\log^2
h)$ time and $O(h\log h)$ space~\cite{ref:ChenCo19}.
In particular, each vertex of $T'(q)$ is also a vertex of $T(q)$, and
the length of a path in $T'(q)$ from $q$ to any vertex $p$ is equal to
$d(p,q)$. Although an edge of $T'(q)$ may not be
in $\calP$, the paths in $T'(q)$ maintain the same topology as those in
$T(q)$, i.e., the relative positions of two paths from $q$ to two
vertices in $T'(q)$ are consistent with those in $T(q)$
(e.g., this can be seen from the proof of Lemma~2 in~\cite{ref:ChenCo19}). Therefore, we
can use $T'(q)$ to determine the relative positions of two shortest paths in our query algorithm. In this way, with $O(n+h^2\log^4h/\log\log h)$ time and $O(n+h^2\log^3h/\log\log h)$
space preprocessing, we can compute the shortest path length $d(s,t)$ in $O(\log n)$ time.

However, we are not able to output a shortest \st\ path in additional time linear in the
number of edges of the path since a path in $T'(q)$ may not be in
$\calP$ (although we can compute a shortest \st\ path in $O(n)$
additional time, e.g., see the proof of Lemma~2
in~\cite{ref:ChenCo19}). To resolve this issue, we use the following approach.
Note that the query algorithm will return a gateway $p$ of $s$ and a gateway $q$ of $t$ so that there is a shortest \st\ path containing both $p$ and $q$. Our goal is to find a shortest path in $\calP$ from $p$ to $q$ (and then by appending $\overline{sp}$ and $\overline{tq}$, we can obtain a shortest \st\ path). To this end, we will build another graph $G_2(\calM)$ with the following properties: (1) $G_2(\calM)$ has $O(h\log^2 h/\log\log h)$ vertices and edges, the same as in $G_1(\calM)$ asymptotically; (2) each vertex of $G_1(\calM)$ is also a vertex in $G_2(\calM)$; (3) for any two vertices $u$ and $v$ of $G_2(\calM)$ that are also vertices of $G_1(\calM)$, a shortest path from $u$ to $v$ in $G_2(\calM)$ corresponds to a shortest path in the plane with the same length. We will discuss the definition of $G_2(\calM)$ later in Section~\ref{sec:pathpre}.
%Note that we need $G_2(\calM)$ because a shortest path between two vertices in $G_1(\calM)$ may not be a shortest path in $\calP$.

With $G_2(\calM)$, to find a shortest path from $p$ to $q$, if we compute a shortest path tree $T''(p)$ in $G_2(\calM)$ from $p$ to all vertices of $G_2(\calM)$ in the preprocessing, then we can report the path in $T''(p)$ from $p$ to $q$ as a shortest path in time linear in the number of edges of the path.
Observe that $p$, which is a gateway of $s$, is a point in $\calV_1(\calM)\cup \calV_2(\calM)$.
Correspondingly, in the preprocessing, for each point $v\in \calV_1(\calM)\cup \calV_2(\calM)$, which is also a vertex of $G_2(\calM)$, we compute a shortest path tree $T''(v)$ in $G_2(\calM)$ from $v$ to all vertices of $G_2(\calM)$. Since $|\calV_1(\calM)\cup \calV_2(\calM)|=O(h\log h)$ and $G_2(\calM)$ has $O(h\log^2 h/\log\log h)$ vertices and edges, computing all these shortest path trees takes $O(h^2\log^4 h/\log\log h)$ time and $O(h^2\log^3 h/\log\log h)$ space. In addition, as in~\cite{ref:ChenTw16}, we need $O(n)$ space to store ``corridor paths'' and ``elementary curves'' (see \cite{ref:ChenTw16} for the details), and these information will also be used to output actual shortest paths.

The above discusses the case where both $s$ and $t$ are in the ocean
$\calM$. To process the queries for other cases (i.e., at least one of $s$ and $t$ is not in $\calM$), the algorithm in~\cite{ref:ChenTw16} builds an additional graph $G_E(g)$ of similar structures for each gate $g$ of a canal or a bay. Then, the graph $G_E(\calM)$ is merged with all these additional graphs $G_E(g)$ to obtain a graph $G_E(\calP)$,
which has $O(h\sqrt{\log h}2^{\sqrt{\log h}})$ vertices and edges, the same as $G_E(\calM)$
asymptotically. Using $G_E(\calP)$,  with
$O(n+h^2\log^2 h4^{\sqrt{\log h}})$ time and $O(n+h^2\log
h4^{\sqrt{\log h}})$ space preprocessing, each query can be answered in $O(\log n)$ time.

For our purpose, we replace each graph $G_E(g)$ correspondingly by our graph $G_1(g)$ and then
obtain a new merged graph $G_1(\calP)$, which has $O(h\log^2
h/\log\log h)$ vertices and edges. We define  $\calV_1(\calP)$ and $\calV_2(\calP)$ accordingly, which together have $O(h\log h)$ points. Also, for each vertex $p$ of
$G_1(\calP)$, we compute a shortest path tree $T'(p)$ for all points of $\calV_1(\calP)\cup \calV_2(\calP)$.
This can be done in the same time
and space as before in $\calM$ asymptotically. The total preprocessing
time and space are $O(n+h^2\log^4h/\log\log h)$ and
$O(n+h^2\log^3h/\log\log h)$, respectively.
For any two query points $s$ and $t$, we can apply the query algorithm
scheme in~\cite{ref:ChenTw16} along with our new query algorithm in Lemma~\ref{lem:query22} to
compute $d(s,t)$ in $O(\log n)$ time.
%In addition, the algorithm will return a gateway $p$ of $s$ and a gateway $q$ of $t$. Note that $p$ is
%an old vertex of $G_1(\calP)$.
For reporting an actual shortest \st\ path, we use the similar approach as above for the ocean case but instead use a graph $G_2(\calP)$, by merging $G_2(\calM)$ with $G_2(g)$ for all gates $g$.

In summary, with $O(n+h^2\log^4 h/\log\log h)$ time and $O(n+h^2\log^3 h/\log\log h)$ space preprocessing, given $s$ and $t$, we can compute $d(s,t)$ in $O(\log n)$ time and an actual shortest \st\ path can be output in additional time linear in the number of edges of the path.

\subsubsection{A path-preserving graph}
\label{sec:pathpre}

It remains to define the graph $G_2(\calM)$. To do so, we define a graph $G_2$ based on $G_1$ with respect to all polygon vertices of $\calP$ (and thus $G_2(\calM)$ has a similar structure but based on $G_1(\calM)$).

As discussed before, although $G_1$ preserves shortest paths among all
polygon vertices, it may not preserve shortest paths for all its
vertices. Our goal is to modify $G_1$ to obtain another graph $G_2$,
so that each vertex of $G_1$ is also in $G_2$ and $G_2$ preserves
shortest paths for all vertices of $G_1$. A straightforward way to
do so is to build a graph $G'$ with respect to all vertices of $G_1$
in the same way as we build $G_1$ with respect to all polygon
vertices. However, since $G_1$ has $O(n\log^2 n/\log\log n)$ vertices, such a graph
$G'$ would have $O(n\log^3 n/\log\log n)$ vertices and edges. In contrast, our
graph $G_2$ only has $O(n\log^2 n/\log\log n)$ vertices and edges, the same as in $G_1$
asymptotically.

Recall that $\calV_1$ consists of all polygon vertices as well as
their projections on $\partial\calP$. We define $\calV_3$ as the set consisting of all
type-3 Steiner points of $G_1$. Hence, $\calV_2\subseteq \calV_3$ and $\calV_1\cup\calV_3$ constitutes the vertex set
of $G_1$.

Suppose we already have the graph $G_1$. We change it through the following
three steps.

First, for each point $p\in \calV_1\cup \calV_3$ and each of $p$'s
projection $q$ on $\partial\calP$, if $q$ is not in $\calV_1$, we
include $q$ as a new type-1 Steiner point and insert it to $G_1$,
i.e., make $q$ as a new vertex, add an edge connecting $q$ to $p$ and
two edges connecting
$q$ to its two adjacent Steiner points on the polygon edge containing
$q$. Since $|\calV_1\cup \calV_3|=O(n\log^2 n/\log\log n)$, the above adds $O(n\log^2
n/\log\log n)$ vertices and edges.

Second, for each point $p\in \calV_1$ that is not a polygon vertex, we define Steiner points on the cut-lines of $\calT$ following the same rule as before for type-2 (not type-3) Steiner points.
Specifically, if $p$ is on a cut-line (this happens when the cut-line is through a polygon
vertex such that $p$ is a vertical projection of the vertex on $\partial\calP$), then the cut-line is already a leaf $u$ of $\calT$;
otherwise, we add a cut-line through $p$ and insert it as a new leaf
$u$ in $\calT$ by the $x$-coordinate. In either case,
for each node $v$ of $\calT$ in the path from $u$ to the root, we let $p$ define a type-2 Steiner point $p'$ on
$l(v)$ if $p$ is horizontally visible to $l(v)$ and then add two edges
connecting $p'$ to its two adjacent visible Steiner points on $l(v)$.
Since $|\calV_1|=O(n)$, the above adds $O(n\log n)$ vertices and
edges.

Third, for each point $p\in \calV_1$, let $S(p)$ denote the set of all
Steiner points on all cut-lines defined by $p$, including $p$ itself
as well as $p^l$ and $p^r$.
Clearly, all points of $S(p)$ are on the segment $\overline{p^lp^r}$.
The current graph has an edge connecting $p$ to each point of
$S(p)\setminus\{p\}$, and we remove such edges and instead add an edge
to connect each pair of adjacent points of $S(p)$ from left to right.
This does not change the number of edges of the graph.

The resulting graph is $G_2$, which still has $O(n\log^2 n/\log\log
n)$ vertices and edges. In particular, the following observation is
guaranteed by the above first step.

\begin{observation}\label{obser:G2}
For each point $p\in \calV_1\cup \calV_3$, its four projections on $\partial\calP$ are all Steiner points (and thus vertices) of $G_2$.
\end{observation}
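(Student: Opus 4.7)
The plan is to derive the statement almost directly from the construction of $G_2$, by a short case analysis based on where $p$ lies.

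First, I would handle the case $p\in \calV_1$. By definition, $\calV_1$ consists of all polygon vertices together with their four projections on $\partial\calP$. If $p$ is a polygon vertex, then its four projections are precisely the associated type-1 Steiner points of $G$, which lie in $\calV_1\subseteq \calV_1\cup\calV_3$ and are therefore already vertices of $G_1$ (and hence of $G_2$). If $p$ is itself a projection point (a type-1 Steiner point), its four projections on $\partial\calP$ may or may not belong to $\calV_1$; for those that do, we are done, and for those that do not, the first step in the construction of $G_2$ explicitly inserts them as new type-1 Steiner points.

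Next, I would handle the case $p\in \calV_3$, i.e., $p$ is a type-3 Steiner point on some cut-line. Again, the first step of the $G_2$ construction applies verbatim: for every projection $q$ of $p$ on $\partial\calP$ that is not already in $\calV_1$, $q$ is added as a new type-1 Steiner point of $G_2$ (with the corresponding incident edges).

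Finally, I would verify that these projections remain vertices of the final graph $G_2$. Steps~2 and~3 of the construction only add Steiner points (step~2) or replace certain edges without deleting vertices (step~3); no vertex is ever removed from the graph. Hence every projection produced in step~1 survives to $G_2$.

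I do not expect any real obstacle here: the statement is essentially definitional, asserting a property that step~1 of the construction was designed to guarantee. The only care needed is bookkeeping, namely to check in each case ($p$ a polygon vertex, $p$ a type-1 Steiner point, $p$ a type-3 Steiner point) that the projection in question is either already in $\calV_1$ or added by step~1, and to note that subsequent steps preserve the vertex set.
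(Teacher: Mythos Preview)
Your proposal is correct and essentially matches the paper's approach: the paper simply remarks that the observation ``is guaranteed by the above first step,'' and your case analysis spells out exactly why that first step suffices (noting also that steps~2 and~3 never delete vertices). The case split on the type of $p$ is harmless but unnecessary, since step~1 applies uniformly to every $p\in\calV_1\cup\calV_3$.
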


We can still construct $G_2$ in $O(n\log^3 n/\log\log n)$ time in a similar way as before.
The following lemma shows that $G_2$ preserves shortest paths for all points of $\calV_1\cup \calV_3$ (i.e., all vertices of $G_1$).

\begin{lemma}\label{lem:preserve}
For any two points $p$ and $q$ of $\calV_1\cup \calV_3$,
a shortest path from $p$ to $q$ in $G_2$ is also a shortest path in $\calP$.
\end{lemma}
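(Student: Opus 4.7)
The plan is to adapt the Clarkson et al.\ argument (which shows the graph $G$ preserves shortest paths among polygon vertices) to the enlarged vertex set $\calV_1\cup\calV_3$ in $G_2$. Every edge of $G_2$ represents a straight segment of $\calP$, so any $G_2$-path from $p$ to $q$ has $L_1$-length at least $d(p,q)$; it therefore suffices to exhibit a $G_2$-path of length exactly $d(p,q)$.

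First I would reduce to the following visibility statement: for any $a,b\in\calV_1\cup\calV_3$ with $\overline{ab}\subseteq\calP$, there is a $G_2$-path from $a$ to $b$ of length $|\overline{ab}|$. This reduction is standard for $L_1$ shortest paths, which can always be chosen so that their interior bends occur at polygon vertices (all of which lie in $\calV_1$). Consequently any $L_1$ shortest path $\pi(p,q)$ in $\calP$ decomposes into consecutive mutually visible segments whose endpoints lie in $\calV_1\cup\calV_3$, and concatenating the visibility-case $G_2$-paths yields a path of total length $d(p,q)$.

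For the visibility statement I would use the Clarkson et al.\ induction on the cut-line tree $\calT$: when $\overline{ab}$ crosses some cut-line $l(u)$, I pick a Steiner point $w'$ on $l(u)$ lying inside the axis-aligned bounding box $R(a,b)$, so that the $L_1$ identity $|\overline{aw'}|+|\overline{w'b}|=|\overline{ab}|$ holds, and then recurse on $\overline{aw'}$ and $\overline{w'b}$, each of which crosses strictly fewer cut-lines. When both endpoints lie in $\calV_1$, modifications~(2) and~(3) of the $G_2$ construction cause the graph to behave with respect to $\calV_1$ exactly as $G$ behaves with respect to the polygon vertices, so the classical induction carries over verbatim. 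When an endpoint $a\in\calV_3$ is a type-3 Steiner point on cut-line $l(v)$ defined by a polygon vertex $a'$, I would route $a$ either along the horizontal chain in $S(a')$ provided by step~(3) down to $a'$ (if $x(a')$ lies between $x(a)$ and $x(b)$, preserving length exactly) or along cut-line edges of $l(v)$ to an adjacent Steiner point $a''$ inside $R(a,b)$ and then invoke the $\calV_1$-case.

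The main obstacle is the subcase where $a\in\calV_3$ and $a'$ lies on the wrong side of $x(b)$, so that routing through $a'$ would be a detour: we must instead leave $a$ along $l(v)$, reaching an intermediate Steiner point that mediates between $a$ and $b$. Ensuring such a mediating vertex actually exists on $l(v)$ as a vertex of $G_2$---together with all its attached cut-line and horizontal edges---is where modification~(1) (installing projections of every $\calV_1\cup\calV_3$ point as type-1 Steiner points) combined with Observation~\ref{obser:G2} becomes essential, since those projections in turn induce the appropriate type-2 Steiner points on $l(v)$ via modification~(2). Verifying the $L_1$ length accounting across this sliding step, and checking that it strictly decreases the induction measure, is the main technical content of the proof.
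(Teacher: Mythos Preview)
Your reduction to the visibility case and the overall Clarkson--et--al.\ framework are fine, but the handling of the ``wrong side'' subcase has a genuine gap. You claim that the new type-1 Steiner points installed by modification~(1) ``in turn induce the appropriate type-2 Steiner points on $l(v)$ via modification~(2).'' They do not: modification~(2) is applied only to points of the \emph{original} set $\calV_1$ (polygon vertices and their projections), not to the projections added in step~(1). So your argument for the existence of a mediating Steiner point on the cut-line $l(v)$ containing $a$ does not go through, and there is no evident way to repair it along those lines.

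The paper's proof avoids this difficulty by never trying to move vertically along $l(v)$. The key observation you are missing is that \emph{every} point of $\calV_1\cup\calV_3$---whether in $\calV_1$ or in $\calV_3$---lies on some cut-line and is the horizontal projection of a point $v_p\in\calV_1$ onto that cut-line (for $p\in\calV_3$ the point $v_p$ is the defining polygon vertex; for $p\in\calV_1$ step~(2) puts $p$ on its own leaf cut-line with $v_p=p$). Thus no $\calV_1$/$\calV_3$ case split is needed. Let $l_p,l_q$ be the cut-lines through $p,q$ and let $l$ be the cut-line at their lowest common ancestor in $\calT$, so $l$ lies between $l_p$ and $l_q$. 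When $R(p,q)\subseteq\calP$, both $v_p$ and $v_q$ are horizontally visible to $l$ and, because $l$ is an ancestor of $l_p$ (resp.\ $l_q$), each defines a Steiner point $p'$ (resp.\ $q'$) on $l$. The horizontal chains $S(v_p)$ and $S(v_q)$ supplied by step~(3) then give the $xy$-monotone $G_2$-path $p\to p'\to q'\to q$ directly, regardless of which side of $p$ the vertex $v_p$ sits on. Modification~(1) and Observation~\ref{obser:G2} enter only in the complementary case $R(p,q)\not\subseteq\calP$: there the horizontal ray from $p$ hits a polygon edge $e$ at $p_1=v_p^r$ before reaching $l$, and step~(1) guarantees that both $p_1$ and the downward projection $q^d$ (also on $e$, since $R'$ contains no polygon vertex) are $G_2$-vertices, yielding the monotone path $p\to p_1\to q^d\to q$ along $e$.
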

\begin{proof}
Because every polygon vertex is in $\calV_1\cup \calV_3$, to prove the
lemma, following the proof scheme
in~\cite{ref:ClarksonRe87,ref:ClarksonRe88}, it is sufficient to show
the following: For any two points $p$ and $q$ in $\calV_1\cup \calV_3$
that are visible to each other, $G_2$ must have an $xy$-monotone path
connecting $p$ and $q$ if the connected component of $R\cap \calP$
containing $\overline{pq}$ does not contain any polygon vertex, where
$R$ is the rectangle with $\overline{pq}$ as a diagonal.
In the following, we assume that $p$ and $q$ are visible to each other
and $R'$ does not contain any polygon vertex, where $R'$ is the
connected component of $R\cap \calP$ containing $\overline{pq}$. Our goal is to show that $G_2$ has an $xy$-monotone path connecting $p$ and $q$.
%Note that $q$ can be a type-2 Steiner point, a horizontally-defined type-1 Steiner point, or a vertically defined type-1 Steiner point. We discuss the three cases below.
Without loss of generality, we assume that $p$ is to southwest of $q$.

%\paragraph{Case 1. $q$ is a type-2 Steiner point.}
Note that each of $p$ and $q$ is defined by a point in $\calV_1$, and each of them is contained in a cut-line of $\calT$.
Let $v_p$ and $v_q$ be the points in $\calV_1$ defining $p$ and $q$, respectively.
Let $l_p$ and $l_q$ be the cut-lines containing $p$ and $q$, respectively. Each of $l_p$ and $l_q$ is stored in a node of the cut-line tree $\calT$, and we let $l$ be the cut-line in the lowest common ancestor of the two nodes storing $l_p$ and $l_q$. Hence, $l$ is between $l_p$ and $l_q$.
Depending on whether the rectangle $R$ is in $\calP$, there are two cases.

\begin{figure}[t]
\begin{minipage}[t]{0.49\linewidth}
\begin{center}
\includegraphics[totalheight=1.0in]{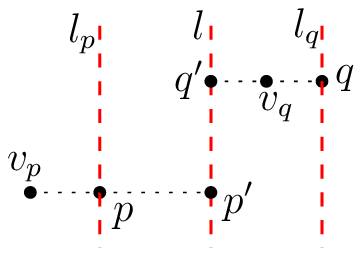}
\caption{\footnotesize
Illustrating the case $R\subseteq \calP$.}
\label{fig:monotone1}
\end{center}
\end{minipage}
\begin{minipage}[t]{0.49\linewidth}
\begin{center}
\includegraphics[totalheight=1.0in]{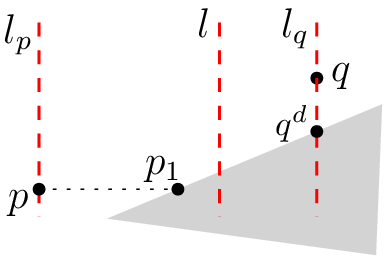}
\caption{\footnotesize
Illustrating the case where $p$ is not horizontally visible to $l$.}
\label{fig:monotone3}
\end{center}
\end{minipage}
\hspace{0.05in}
\vspace*{-0.15in}
\end{figure}

If $R\subseteq \calP$, then since $v_p$ is horizontally visible to
$l_p$, $v_p$ is also horizontally visible to $l$. Since $l$ is an
ancestor of $l_p$ in $\calT$, $v_p$ defines a Steiner point $p'$ on
$l$. For the same reason, $v_q$ also defines a Steiner point $q'$ on
$l$ (e.g., see Fig.~\ref{fig:monotone1}). Due to the third step for
changing $G_1$ to obtain $G_2$, the path
$\overline{pp'}\cup\overline{p'q'}\cup\overline{q'q}$ is in $G_2$,
which is $xy$-monotone.

If $R\not\subseteq \calP$, then if both $v_p$ and $v_q$ are still
horizontally visible to $l$, we can use the same analysis as above.
Otherwise, without loss of generality, we assume that $p$ is not
horizontally visible to $l$. This implies that if we move from $p$
rightwards following the lower edge of $R$, we will encounter a point
$p_1$ at a polygon edge $e$ before we arrive at $l$ (e.g.,
see Fig.~\ref{fig:monotone3}). Note that $p_1$
is actually the right projection of $v_p$ on $\partial\calP$ and thus
is a type-1 Steiner point by Observation~\ref{obser:G2}.
%Let $p_2$ be the rightmost type-2 Steiner point defined by $v_p$. Note that $p_2$ is between $p$ and $p_1$.
%Without loss of generality, we assume that $p$ is lower than $q$.
Since $R'$ does not contain any polygon vertex, the downward
projection $q^d$ of $q$ on $\partial\calP$ is on $e$ as well. By Observation~\ref{obser:G2}, $q^d$ is
a type-1 Steiner point. Note that the slope of $e$ must be positive.
Hence, the path $\overline{pp_1}\cup \overline{p_1q^d}\cup
\overline{q^dq}$, which is in $G_2$, is $xy$-monotone.

The lemma is thus proved.
\qed
\end{proof}

\section{Concluding Remarks}

In this paper, we present a data structure that can answer two-point $L_1$ shortest path queries in a polygonal domain $\calP$ in $O(\log n)$ time, and our preprocessing takes nearly quadratic time and space in the number of holes of $\calP$ plus linear time and space in the total number of vertices of $\calP$. More importantly and interestingly, we propose a divide-and-conquer algorithm that can compute a shortest path in nearly linear time in the number of gateways of $s$ and $t$, improving the previously best and straightforward quadratic time algorithm.

To further improve our result, it might be tempting to see whether the Monge matrix searching techniques~\cite{ref:AggarwalGe87,ref:KlaweAn90} can be applied so that the query time becomes linear in the number of gateways of the query points. However, due to those non-ideal situations such as those illustrated in Fig.~\ref{fig:dif1} and Fig.~\ref{fig:dif2}, it is not clear to us whether it is possible to do so.

One may wonder whether our divide-and-conquer technique can be applied to the Euclidean case. Indeed, the algorithms in both \cite{ref:ChenOn01} and \cite{ref:GuoSh08} for Euclidean two-point shortest path queries are based on the gateway approach (the gateways are called ``critical cites'' in \cite{ref:GuoSh08}). More specifically, the method of Chen et al.~\cite{ref:ChenOn01} uses the set $Q_s$ of vertices of $\calP$ visible to $s$ as the gateway set of $s$, and similarly, the set $Q_t$ of vertices of $\calP$ visible to $t$ are used as the gateway set of $t$. Without loss of generality, assume $|Q_s|\leq |Q_t|$. After $Q_s$ is computed, for each vertex $v\in Q_s$, a shortest path from $s$ to $t$ through $v$ is found by using the shortest path map of $v$ (the map is computed in the preprocessing). In this way, the query time is bounded by $O(\min\{|Q_s|,|Q_t|\}\cdot \log n)$. By using a tessellation of $\calP$, Guo et al.~\cite{ref:GuoSh08} showed that a subset of $Q_s$ of size $O(h)$ is sufficient to serve as the gateway set of $s$, and the same holds for $t$. Consequently, the query time can be bounded by $O(h\log n)$. Clearly, the bottleneck of the query time is actually on the number of gateways. To have any hope of achieving a polylogarithmic time query algorithm using gateways, one has to make sure that the number of gateways is polylogarithmic. We are able to achieve this by using path preserving graphs in the $L_1$ metric. Such graphs, however, are not applicable to the Euclidean metric. Note that the polylogarithmic time query algorithms by Chiang and Mitchell~\cite{ref:ChiangTw99} are based on different techniques (e.g., shortest path map equivalence decompositions) than using gateways. Therefore, for solving the Euclidean two-point shortest path queries, one direction is to see whether it is possible to use only a polylogarithmic number of gateways.

%\begin{figure}[t]
%\begin{minipage}[t]{\linewidth}
%\begin{center}
%\includegraphics[totalheight=1.6in]{spm.eps}
%\caption{\footnotesize
%Illustrating the bisector edges of shortest path map (the back area is
%the obstacle space): the green point
%is the source $s$ and the red curves are the bisector edges. The
%figure is generated by the applet in~\cite{ref:HershbergerGe14}} \label{fig:spm}
%\end{center}
%\end{minipage}
%\vspace*{-0.15in}
%\end{figure}

%==========================end of document===========================

%\bibliographystyle{splncs03}
%\bibliography{reference}

%\vspace*{-0.10in}
%\footnotesize
%\baselineskip=11.0pt
\bibliographystyle{plain}
\bibliography{reference}

%add appendix below
%\newpage
%\normalsize
%\appendix
%
%\section*{APPENDIX}

%\vspace{0.2in}
%\noindent
%{\bf Lemma \ref{lem:20}.}
%{\em The size of the set $C(s)$ is $O(k)$.
%}
%\vspace{0.08in}

\end{document}